\documentclass[12pt,a4paper]{article}

\usepackage[english]{babel}

\usepackage[a4paper,top=2cm,bottom=2cm,left=2.5cm,right=2.5cm,marginparwidth=1.75cm]{geometry}

\usepackage[style=authoryear-comp, backend=biber, sortcites=false]{biblatex}
\addbibresource{references.bib}

\usepackage{amsmath}
\allowdisplaybreaks
\usepackage{amssymb}
\usepackage{amsthm}
\usepackage{graphicx}
\usepackage[colorlinks=true, allcolors=blue]{hyperref}
\usepackage[title]{appendix}
\usepackage{mathrsfs}
\usepackage{amsfonts}
\usepackage{caption}
\usepackage{authblk}
\usepackage{csquotes}
\usepackage[T1]{fontenc} 
\usepackage{lipsum}
\usepackage{float}

\usepackage{setspace}
\onehalfspacing 

\usepackage{titlesec}
\titleformat{\section} 
  {\normalfont\Large\bfseries}{\thesection.}{1em}{}

\theoremstyle{plain}
\newtheorem{theorem}{Theorem}
\newtheorem{lemma}[theorem]{Lemma}
\newtheorem{proposition}[theorem]{Proposition}
\newtheorem{corollary}[theorem]{Corollary}

\theoremstyle{definition}
\newtheorem{definition}{Definition}
\newtheorem{example}{Example}

\title{Stochastic Non-Tâtonnement Processes and the Attraction Principle}

\author[1,2]{Leandro Lyra Braga Dognini}
\affil[1]{\small Department of Economics, Rio de Janeiro State University}
\affil[2]{\small Legislative Advisory, Federal Senate of Brazil}
\date{September 17, 2025} 

\begin{document}
\maketitle
\begin{abstract}
\noindent I characterize stochastic non-tâtonnement processes (SNTP) and argue that they are a natural outcome of General Equilibrium Theory. To do so, I revisit the classical demand theory to define a normalized Walrasian demand and a diffeomorphism that flattens indifference curves. These diffeomorphisms are applied on the three canonical manifolds in the consumption domain (i.e., the indifference and the offer hypersurfaces and the trade hyperplane) to analyze their images in the normalized and the flat domains. In addition, relations to the set of Pareto optimal allocations on Arrow-Debreu and overlapping generations economies are discussed. Then, I derive, for arbitrary non-tâtonnement processes, an Attraction Principle based on the dynamics of marginal substitution rates seen in the ``floor'' of the flat domain. This motivates the definition of SNTP and, specifically, of Bayesian ones (BSNTP). When all utility functions are attractive and sharp, these BSNTP are particularly well behaved and lead directly to the calculation of stochastic trade outcomes over the contract curve, which are used to model price stickiness and markets' responses to sustained economic disequilibrium, and to prove a stochastic version of the First Welfare Theorem.

\end{abstract}

\textbf{Keywords}: Consumer Theory, General Equilibrium, Barter Economy, Edgeworth Process, Price Stickiness, Differential Topology. 

\textbf{JEL}: D11, D50.

\section{Introduction}\label{sec1}

\textsc{General Equilibrium Theory} is one of the cornerstones of contemporary economics. When it comes to the dynamic aspect of this theory\footnote{\textit{Dynamic} is this context meaning ``a theory which determines the behavior through time of all variables from arbitrary initial conditions'' \parencite[p. 100]{Samuelson_1941}, in constrast to \textit{static} aspects dealing with ``existence, uniqueness, and optimality [of a competitive equilibrium]''\parencite[p. 522]{ArrowHurwicz_1958}.} and, particularly, to the classical question of how an equilibrium is attained, two streams have emerged in the literature.

The first stream developed from the tâtonnement processes conceived by Walras when defining the \textit{``loi d'établissement des prix d'équilibre''} \parencite[p. 69]{Walras_1874}\footnote{See, for instance, this excerpt from \textcite[p. 69]{Walras_1874}: ``Ceci nosus améne à formuler en ces termes la \textit{loi de l'offre et de la demande effectives} ou \textit{loi d'établissement des prix d'équilibre}, [...] il faut, pour arriver au prix d'équilibre, une hausse du prix de la marchandise dont la demande effective est supériere à l'offre effective, et une baisse du prix de celle dont l'offre effective est supérieure à la demande effective''.}. These processes are defined by two basic features: (i) aggregate excess demand drives the dynamic of prices; and (ii) no trade happens until an equilibrium is reached (i.e., all markets clear). 

The central question around the tâtonnement processes (and, in general, around the dynamic aspect of General Equilibrium Theory) is related to the stability of equilibria \parencite[pp. 73-74]{Walras_1874}\footnote{See this second excerpt from \textcite[pp. 73-74]{Walras_1874}: ``Dans le deux cas, \textit{au-delà} du point d'équilibre, \textit{l'offre de la marchandise est supérieure à sa demanda''}, ce qui doit amener une \textit{baisse} de prix, c'est-à-dire un retour vers le point d'équilibre. [...] C'est un équilibre \textit{stable}. [...] Ainsi, dans ce cas, \textit{au-delà} du point d'équilibre, \textit{la demande de la marchandise est supérieure à son offre}, ce qui doit amener une \textit{hausse} de prix, c'est-à-dire un éloignement du point d'équilibre. [...] C'est un équilibre \textit{instable}''.} and, after a systematic treatment by \textcite{Samuelson_1941} and \textcite{ArrowHurwicz_1958}, several advances were made regarding conditions under which local or global stability could be obtained (e.g., \textcite{Hahn_1958} and \textcite{ArrowHurwicz_1959} proved local and global stability in the gross substitutes case, and \textcite{Hahn_1961} derived a globally stable price adjustment process based on the the ratio of excess demand to demand itself) and examples of global instability were also derived \parencite{Scarf_1960}.

However, after the Sonneschein-Mantel-Debreu Theorem \parencite{Sonnenschein_1972,Mantel_1974,Debreu_1970}, it became clear that the reliance of these processes on excess demand functions required, to ensure stability, stronger assumptions than the ones traditionally adopted by General Equilibrium Theory.      

The second stream developed from the barter processes conceived by Edgeworth when describing that ``equilibrium is attained when the existing contracts can neither be varied without recontract with the consent of the existing parties, nor by recontract within the field competition'' \parencite[p. 31]{Edgeworth_1925}\footnote{See, for instance, this excerpt from \textcite[pp. 21-22, 31]{Edgeworth_1881}: ``[...] which locus it is here proposed to call the \textit{contract-curve}. [...] it is evident that X will step only on one side of a certain line, the \textit{line of indifference}, as it might be called. [...] If then we enquire in that directions X and Y will consent to move \textit{together}, the answer is, in any direction between their respective lines of indifference, in a direction \textit{positive} as it may be called \textit{for both}. [...] Equilibrium is attained when the existing contracts can neither be varied without recontract with the consent of the existing parties, nor by recontract within the field competition''.}. 

The distinctive feature of these barter, or non-tâtonnement, processes is that trade may happen even though the economy is out of equilibrium and a special form of them (named the ``Edgeworth's barter processes'' by \textcite{Uzawa_1962}) states that, even in disequilibrium, trade happens according to ongoing common market prices and must lead to Pareto improvements.

The theory of non-tâtonnement processes and their stability received a systematic treatment by  \textcite{Hahn_1962}, \textcite{Uzawa_1962} and \textcite{HahnNegishi_1962}, latter followed by \textcite{Smale_1974,Smale_1976}, \textcite{Madden_1978}, \textcite{CornetChampsaur_1990} and \textcite{YuHosoya_2022}, among others. In addition to the question of stability, the matter of accessibility (i.e., whether a particular Pareto optimum can be arbitrarily approached by a non-tâtonemment process) also drove a fruitful research branch by \textcite{Schecter_1977}, \textcite{Cornet_1977,Cornet_2024} and \textcite{Bottazzi_1994}, among others.  

A comprehensive survey on the early developments of these two streams was written by \textcite{Negishi_1962} and by \textcite[ch. 12-13, pp. 282-346]{ArrowHahn_1971}, addressing also the work done by \textcite[pp. 486-489]{Allais_1925}, \textcite{Nikaido_1959} and \textcite{McKenzie_1960}, among others. In this article, I focus on the second stream of this literature (the ``Edgeworthian'' one) and characterize stochastic non-tâtonnement processes based on an Attraction Principle on marginal substitution rates, arguing that these processes are a natural outcome of General Equilibrium Theory.

To do so, I revisit the classical demand theory to define a normalized Walrasian demand $x_{n}:\mathbb{R}^{L}_{++}\rightarrow\mathbb{R}^{L}_{++}$ (which is obtained once prices are stated as percentages of a household's wealth) and a diffeomorphism that flattens indifference curves $f:\mathbb{R}^{L}_{++}\rightarrow\mathbb{R}^{L}_{++}$ (which is given by the inverse of the Hicksian demand function after a suitable price normalization).

The two diffeomorphisms $x_{n}(\cdot)$ and $f(\cdot)$ are then applied on the three canonical manifolds in the consumption domain (i.e., the indifference and the offer hypersurfaces and the trade hyperplane)  to analyze their counterparts (i.e., images) in the normalized and flat domains. 

These three ``copies'' of $\mathbb{R}^{L}_{++}$ are called \textit{domains} since they allow us to observe the non-tâtonnement trade processes from three different, but equivalent, perspectives, each with a specific economic interpretation (this terminology, clearly, is built over an analogy with the time and frequency domain representations of a signal obtained through the Fourier transform).

In particular, Proposition \ref{propManifoldsDomains} reveals that the counterparts of indifference and offer hypersurfaces in the flat and normalized domains, respectively, are hyperplanes. Also, Theorems \ref{theoIndifferenceNormDom} and \ref{theoOfferFlatDom} reveal that the counterparts of indifference and offer hypersurfaces in the normalized and flat domains, respectively, are the boundary of convex sets (see Figure \ref{FigCanonicalManifolds}). 

These results relating diffeomorphisms and manifolds that naturally emerge in consumer theory can be naturally seen as lying in the intersection between Differential Topology and General Equilibrium Theory, which is a field of long-standing sound research, with works by \textcite{Debreu_1970}, \textcite{Mas-Colell_1985} and \textcite{Balasko_1979,Balasko_1988,Balasko_2009}, among others. 

Another contribution of this article related to this research field is to highlight that the definition of Pareto optimality can be made without any \textit{explicit} constraints on the aggregate resources of the economy (traditionally stated as a feasibility criteria). This alternative definition leads, as a direct consequence of the First and Second Welfare Theorems, to the characterization of the set of Pareto optimal allocations as a smooth manifold of dimension $L+I-1\geq3$, and relations of this manifold to equilibrium determinacy and overlapping generations economies are briefly discussed. 

Once these diffeomorphisms are properly defined, I am able to derive the Attraction Principle (Theorem \ref{theoAttractionPrinciple}) for arbitrary non-tâtonnement processes, which can be seen as a statement relating the matter of accessibility of Pareto optima to the flatenning diffeomorphism, since the existence of trade paths that asymptotically approach a given Pareto optimum imply that the households' marginal substitution rates come closer to each other, at least asymptotically.

The principle can be stated as: once trade unfolds according to a maximal joint trade path, the projection of each household trade path on the ``floor'' (i.e., $\{y\in\mathbb{R}^{L}_{++}
\mid y_{L}=0\}$) of the flat domain is contained in a non-increasing net of arc-connected sets, which ``shrinks'' overtime towards the point that represents the final common marginal substitution rates (see Figure \ref{FigAttractionPrinciple}).

This Attraction Principle leads to the definition of stochastic non-tâtonnement processes (SNTP) and, specifically, of Bayesian ones (BSNTP). Also, I identify two properties of utility functions (namely, attractiveness and sharpness) that are particularly useful when applying Monte Carlo methods to calculate stochastic outcomes (and, therefore, probability distributions and confidence regions) of a BSNTP over the contract curve.

These BSNTP are then used to model price stickness and markets' responses to sustained economic disequilibrium, and to derive a stochastic version of the First Welfare Theorem. All the definitions, results, and examples are then gathered and jointly presented as an argument to why SNTP are a natural outcome of General Equilibrium Theory.

After this \hyperref[sec1]{Introduction}, the outline of the paper is as follows. \hyperref[sec2]{Section 2} defines the normalized Walrasian demand and the flatenning diffeomorphism. \hyperref[sec3]{Section 3} defines the canonical manifolds and characterizes their counterparts in the normalized and flat domains. \hyperref[sec3]{Section 3} also defines the manifold of Pareto optimal allocations and discusses its relations to equilibrium determinacy and overlapping generations economies. \hyperref[sec4]{Section 4}, then, focuses on the derivation of the Attraction Principle and \hyperref[sec5]{Section 5} on the definitions, results and examples related to STNP. Finally, \hyperref[sec6]{Section 6} gathers all previous results and argues that STNP are a natural outcome of General Equilibrium Theory. All proofs are stated in the \hyperref[appx]{Appendix}.

\section{Normalized Walrasian Demand and the Flattening Diffeomorphism}\label{sec2}

The consumption set\footnote{I follow \textcite{Mas-ColellWhinstonGreen_1995} for classical demand theory definitions and results.} is $X=\mathbb{R}^{L}_{+}$, $L\geq2$, and consumer preferences are defined through a smooth, non-decreasing and strictly quasi-concave utility function $u:\mathbb{R}^{L}_{+}\rightarrow\mathbb{R}_{+}$ without local maxima, with $u(0)=0$ and $\lim_{k\rightarrow\infty}u(c_{k})=\infty$, if $\lim_{k\rightarrow\infty}\Vert c_{k}\Vert=\infty$.

The budget set correspondence is $B(p,w)=\{c\in\mathbb{R}^{L}_{+}\mid p c^{T}\leq w\}$, $(p,w)\in\mathbb{R}^{L+1}_{++}$. The Walrasian demand function $x:\mathbb{R}^{L+1}_{++}\rightarrow \mathbb{R}^{L}_{+}$ is defined through 
\begin{eqnarray}\label{UMP}
    x(p,w)=\arg\max_{c\in\mathbb{R}^{L}_{+}} u(c) \textrm{ s.t. } c\in B(p,w),
\end{eqnarray}
and I assume $x(p,w)\in\mathbb{R}^{L}_{++}$, $(p,w)\in\mathbb{R}^{L+1}_{++}$ (i.e., (\ref{UMP}) has no corner solutions\footnote{I preferred to state this assumption rather than imposing conditions that ensure it. For instance, if the utility functions are strongly monotonic,  satisfy a strong bordered Hessian condition and the indifference curves are closed in $\mathbb{R}^{L}$, then this assumption is satisfied (e.g., \textcite[p. 2]{YuHosoya_2022}).}).

\begin{definition}\label{defNormalizedWalrasian}
The \textit{normalized Walrasian demand} $x_{n}:\mathbb{R}^{L}_{++}\rightarrow \mathbb{R}^{L}_{++}$ is given by $x_{n}(p)=x(p,1)$.
\end{definition}
The normalized Walrasian demand represents the consumer demand when prices are stated as percentages of his wealth. Let $\lambda(p,w)>0$ be the Lagrange multiplier of (\ref{UMP}) and the indirect utility function $v:\mathbb{R}^{L+1}_{++}\rightarrow \mathbb{R}_{++}$ be given by $v(p,w)=u(x(p,w))$. Following Definition \ref{defNormalizedWalrasian}, let $\lambda_{n}(p)=\lambda(p,1)$ and $v_{n}(p)=u(x_{n}(p))$, $p\in\mathbb{R}^{L}_{++}$. The Hicksian demand function $h:\mathbb{R}^{L+1}_{++}\rightarrow\mathbb{R}^{L}_{++}$ is
\begin{eqnarray*}
    h(p,u)=\arg\min_{c\in\mathbb{R}^{L}_{+}} p c^{T} \textrm{\ s.t.\ } u(c)\geq u,
\end{eqnarray*}
and the expenditure function $e:\mathbb{R}^{L+1}_{++}\rightarrow\mathbb{R}_{++}$ is $e(p,u)=p h(p,u)^{T}$. The following lemma rewrites well-known identities that relate all functions defined above.
\begin{lemma}\label{lemmaBasicIdentities}
Let $x_{n}(\cdot)$, $v_{n}(\cdot)$, $\lambda_{n}(\cdot)$, $h(\cdot)$ and $e(\cdot)$ be as defined above. Then, for $(p,u)\in\mathbb{R}^{L+1}_{++}$:
\begin{itemize}
    \item[(i)]$p x_{n}(p)^{T}=1$, $p\mathbf{J}x_{n}(p)=-x_{n}(p)$ and $\lambda_{n}(p)=\nabla u(x_{n}(p))x_{n}(p)^{T}$;
    \item[(ii)] $\nabla v_{n}(p)=-\lambda_{n}(p)x_{n}(p)$ and $\nabla v_{n}(p)p^{T}=-\lambda_{n}(p)$;
    \item[(iii)] $x_{n}(p)=h(p,v_{n}(p))$, $h(p,u)=x_{n}(p/e(p,u))$ and $e(p,v_{n}(p))=1$.
\end{itemize}
\end{lemma}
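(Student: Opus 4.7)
The plan is to unpack each identity as a direct consequence of standard duality arguments from classical demand theory, applied at wealth $w=1$. Since every object in the statement is either $x(p,w)$, $v(p,w)$, $\lambda(p,w)$ or their Hicksian counterparts evaluated at $w=1$, the work reduces to quoting the corresponding identity in $(p,w)$-coordinates and then differentiating or substituting.

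For part (i), I would first establish Walras' Law $p x(p,w)^{T}=w$. The hypotheses on $u$ (non-decreasing, strictly quasi-concave and without local maxima) imply local non-satiation, so the budget constraint binds at the optimum; specialising to $w=1$ yields $p x_{n}(p)^{T}=1$. Differentiating this scalar identity componentwise in $p$ gives $x_{n}(p)+p\mathbf{J}x_{n}(p)=0$, which is the second equality. For the third, I would invoke the first-order condition of (\ref{UMP}), $\nabla u(x(p,w))=\lambda(p,w)\,p$ (interior solution by assumption), right-multiply by $x(p,w)^{T}$ and apply Walras' Law to get $\nabla u(x(p,w))x(p,w)^{T}=\lambda(p,w)w$; setting $w=1$ delivers the claim.

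For part (ii), the plan is a direct chain-rule computation: $\nabla v_{n}(p)=\nabla u(x_{n}(p))\,\mathbf{J}x_{n}(p)$. Substituting the FOC $\nabla u(x_{n}(p))=\lambda_{n}(p)p$ and then invoking the identity $p\mathbf{J}x_{n}(p)=-x_{n}(p)$ from part (i) gives $\nabla v_{n}(p)=-\lambda_{n}(p)x_{n}(p)$, which is Roy's identity in normalized form. Right-multiplying by $p^{T}$ and using $px_{n}(p)^{T}=1$ once more yields $\nabla v_{n}(p)p^{T}=-\lambda_{n}(p)$.

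For part (iii), I would rely on the classical UMP--EMP duality identities $x(p,v(p,w))=h(p,v(p,w))=x(p,e(p,v(p,w)))$ and $e(p,v(p,w))=w$. Setting $w=1$ yields $x_{n}(p)=h(p,v_{n}(p))$ and $e(p,v_{n}(p))=1$. For the middle equality, I would combine $h(p,u)=x(p,e(p,u))$ with the homogeneity of degree zero of $x(\cdot,\cdot)$ in $(p,w)$ to write $x(p,e(p,u))=x(p/e(p,u),1)=x_{n}(p/e(p,u))$. No step requires substantive obstacle-clearing; the only delicate point is ensuring local non-satiation from the stated hypotheses, which I would dispatch in a single line at the outset by noting that the absence of local maxima together with the assumed monotonicity and strict quasi-concavity of $u$ forces the budget constraint to bind.
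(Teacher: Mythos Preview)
Your proposal is correct and follows essentially the same approach as the paper: Walras' law and its differentiation for (i), the chain rule combined with the first-order condition $\nabla u(x_n(p))=\lambda_n(p)p$ for (ii), and the standard UMP--EMP duality identities specialized to $w=1$ for (iii). The only cosmetic difference is that for $h(p,u)=x_n(p/e(p,u))$ you route through homogeneity of degree zero of $x(\cdot,\cdot)$ while the paper invokes the first-order conditions directly, but these are the same argument.
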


In particular, Lemma \ref{lemmaBasicIdentities} implies that
\begin{eqnarray*}
    x_{n}(p)=\frac{\nabla v_{n}(p)}{\nabla v_{n}(p) p^{T}}.
\end{eqnarray*}
The next result reveals that $x_{n}(\cdot)$ is, actually, a diffeomorphism and that its inverse is obtained trough an analogous expression.
\begin{proposition}\label{propInverseDemand}
The normalized Walrasian demand $x_{n}(\cdot)$ is a diffeomorphism, with
\begin{eqnarray*}
    x_{n}^{-1}(c)=\frac{\nabla u(c)}{\nabla u(c) c^{T}}.
\end{eqnarray*}
\end{proposition}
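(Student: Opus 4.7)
My plan is to write down the candidate inverse $g(c):=\nabla u(c)/(\nabla u(c)c^{T})$ and verify directly that $g$ and $x_{n}$ compose to the identity on both sides, then invoke standard smoothness results. Throughout I would tacitly use that $\nabla u(c)\neq 0$ for every $c\in\mathbb{R}^{L}_{++}$ — equivalently, $\nabla u(c)c^{T}>0$ since $u$ is non-decreasing and hence $\nabla u(c)\geq 0$ componentwise — which follows from the regularity package (strict quasi-concavity, smoothness, absence of local maxima, and the strong bordered Hessian condition mentioned in the paper's footnote) that the paper adopts for the UMP.

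For $g\circ x_{n}=\mathrm{id}$, I would use the Lagrangian first-order conditions for the UMP (\ref{UMP}) at its interior solution $x_{n}(p)$, namely $\nabla u(x_{n}(p))=\lambda_{n}(p)\,p$. Dotting both sides by $x_{n}(p)^{T}$ and combining with the normalization $p\,x_{n}(p)^{T}=1$ from Lemma \ref{lemmaBasicIdentities}(i) gives $\nabla u(x_{n}(p))\,x_{n}(p)^{T}=\lambda_{n}(p)$, so substituting into the definition of $g$ yields $g(x_{n}(p))=\lambda_{n}(p)p/\lambda_{n}(p)=p$.

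For the reverse identity $x_{n}\circ g=\mathrm{id}$, I would fix $c\in\mathbb{R}^{L}_{++}$, set $p:=g(c)$, and observe that $p\,c^{T}=1$ by construction, so $c\in B(p,1)$. Moreover $\nabla u(c)=(\nabla u(c)c^{T})\,p$ is a strictly positive scalar multiple of $p$, which is precisely the Lagrangian first-order condition for the UMP at $(p,1)$. Strict quasi-concavity then makes this first-order condition sufficient for the (unique) maximizer, so $c=x_{n}(p)=x_{n}(g(c))$.

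Finally, $g$ is smooth because $\nabla u$ is smooth and its denominator is a strictly positive smooth function of $c$, whereas smoothness of $x_{n}$ follows from the classical implicit-function-theorem argument applied to the UMP's first-order system under the strong bordered Hessian condition. A smooth bijection with smooth inverse is a diffeomorphism, which is the claim. The main obstacle I anticipate is not any single algebraic step but rather the careful justification that $\nabla u(c)\neq 0$ throughout $\mathbb{R}^{L}_{++}$, so that $g$ is globally well-defined on the codomain; once this is granted, every remaining step reduces to a short substitution aided by Lemma \ref{lemmaBasicIdentities}.
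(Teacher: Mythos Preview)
Your proposal is correct and follows essentially the same route as the paper: define the candidate inverse via the normalized gradient, verify it satisfies the first-order conditions of (\ref{UMP}) so that strict quasi-concavity forces $x_n(g(c))=c$, and then appeal to smoothness of both maps. If anything, you are slightly more thorough than the paper, which explicitly checks only $x_n\circ g=\mathrm{id}$ and then declares $g=x_n^{-1}$, whereas you also verify $g\circ x_n=\mathrm{id}$ directly from the Lagrangian FOC and Lemma~\ref{lemmaBasicIdentities}(i).
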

I will call the space $\mathbb{R}^{L}_{++}$ that represents the consumption bundles of \textit{consumption domain}. Due to Proposition \ref{propInverseDemand}, every set $A\subseteq\mathbb{R}^{L}_{++}$ in the consumption domain can be smoothly deformed into another set $x^{-1}_{n}(A)\subseteq\mathbb{R}^{L}_{++}$. Although both $A$ and $x^{-1}_{n}(A)$ rest in $\mathbb{R}^{L}_{++}$, they have entirely different economic interpretations. While $A$ represents a set of bundles in the consumption domain, $x^{-1}_{n}(A)$ represents the normalized prices that would lead the consumer to choose such bundles. This is why I shall henceforth say that the set $x^{-1}_{n}(A)$ rests in the \textit{normalized domain} (this terminology is built over an analogy with the time and frequency domain representation of a signal obtained through the Fourier transform).

Clearly, a set $B\subseteq\mathbb{R}^{L}_{++}$ in the normalized domain can also be deformed into its counterpart $x_{n}(B)\subseteq{R}^{L}_{++}$ in the consumption domain. For example, open half-lines starting at the origin in the normalized domain are deformed to wealth expansion paths (\textcite[p. 25]{Mas-ColellWhinstonGreen_1995}) in the consumption domain. The next result clarifies how $x^{-1}_{n}(\cdot)$ deforms the consumption domain. 

\begin{proposition}\label{propFixedPoint}
Suppose that, for all $\alpha>0$, there is $y(\alpha)\in\mathbb{R}^{L}_{++}$, $\Vert y(\alpha)\Vert\leq\alpha$, such that $  \{c\in\mathbb{R}^{L}_{+}\mid u(c)\geq u(y(\alpha))\}\cap \{c\in\mathbb{R}^{L}_{+}\mid \Vert c\Vert\leq \alpha\}\subseteq \mathbb{R}^{L}_{++}$. Then, for all $\alpha>0$, there is a unique $p(\alpha)\in\mathbb{R}^{L}_{++}$ such that $\Vert x_{n}(p(\alpha))\Vert=\alpha$ and
\begin{eqnarray*}
    \frac{p(\alpha)}{\Vert p(\alpha)\Vert}=\frac{x_{n}(p(\alpha))}{\Vert x_{n}(p(\alpha))\Vert}.
\end{eqnarray*}
Furthermore, $\Vert p(\alpha)\Vert=\alpha^{-1}$ and $p(1)\in\mathbb{R}^{L}_{++}$ is the unique fixed point of $x_{n}(\cdot)$.
\end{proposition}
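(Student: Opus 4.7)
The plan is to reduce the proposition to a constrained optimization problem via Proposition~\ref{propInverseDemand}. For any $c = x_n(p) \in \mathbb{R}^L_{++}$, that proposition yields $p = \nabla u(c)/(\nabla u(c)\,c^T)$, so the proportionality $p/\|p\| = x_n(p)/\|x_n(p)\|$ together with $\|x_n(p)\| = \alpha$ is equivalent to finding $c^* \in \mathbb{R}^L_{++}$ with $\|c^*\| = \alpha$ and $\nabla u(c^*) = \mu c^*$ for some $\mu > 0$. For existence I would take $c^*$ to be a maximizer of the program $\max\{u(c) : c \in \mathbb{R}^L_+,\ \|c\| \leq \alpha\}$, which exists by compactness. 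Since $y(\alpha)$ is feasible, $u(c^*) \geq u(y(\alpha))$, so $c^*$ lies in the upper contour set intersected with the closed ball, and the hypothesis then forces $c^* \in \mathbb{R}^L_{++}$. The no-local-maxima property rules out $\|c^*\| < \alpha$, strict quasi-concavity on the convex ball ensures that $c^*$ is the \emph{unique} maximizer there, and the Lagrange first-order condition delivers $\nabla u(c^*) = \mu c^*$, with $\mu > 0$ because Proposition~\ref{propInverseDemand} implicitly requires $\nabla u$ to be nowhere zero on $\mathbb{R}^L_{++}$.

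The main obstacle is showing that no other critical point exists on the sphere. Suppose $c' \in \mathbb{R}^L_{++}$ with $\|c'\| = \alpha$, $\nabla u(c') = \mu' c'$ ($\mu' > 0$), and $c' \neq c^*$. Using the critical-point condition and Cauchy--Schwarz,
$$\nabla u(c') \cdot (c^* - c') \;=\; \mu'\bigl(\langle c', c^*\rangle - \alpha^2\bigr) \;<\; 0,$$
since $\langle c', c^*\rangle < \|c'\|\,\|c^*\| = \alpha^2$ strictly (the equality case of Cauchy--Schwarz would force $c'$ and $c^*$ to be positive multiples of each other, hence equal on the sphere). A first-order Taylor expansion therefore yields $u((1-t)c' + tc^*) < u(c')$ for all sufficiently small $t > 0$. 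However, $u(c^*) > u(c')$ because $c^*$ is the unique maximizer on the ball, so strict quasi-concavity applied along the chord from $c'$ to $c^*$ gives $u((1-t)c' + tc^*) > u(c')$ for every $t \in (0,1)$, a direct contradiction. Hence $c' = c^*$.

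Finally I would translate back. By Proposition~\ref{propInverseDemand}, $p(\alpha) = x_n^{-1}(c^*) = \mu c^* / (\mu\,\|c^*\|^2) = c^*/\alpha^2$, so $\|p(\alpha)\| = \alpha^{-1}$ and the direction of $p(\alpha)$ coincides with that of $c^* = x_n(p(\alpha))$. For the fixed point, any $p \in \mathbb{R}^L_{++}$ with $x_n(p) = p$ satisfies $\|p\|^2 = p\,x_n(p)^T = 1$ by Lemma~\ref{lemmaBasicIdentities}(i), so $\|p\| = \|x_n(p)\| = 1$; such a $p$ must coincide with the $\alpha=1$ instance $p(1)$ by the uniqueness just established, and conversely $x_n(p(1)) = p(1)/\|p(1)\|^2 = p(1)$ confirms that $p(1)$ is fixed.
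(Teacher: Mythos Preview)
Your proof is correct and follows essentially the same route as the paper: both maximize $u$ over the closed ball of radius $\alpha$, use the hypothesis to force the maximizer into $\mathbb{R}^{L}_{++}$, and then read off the Lagrange condition $\nabla u(c^{*})\parallel c^{*}$ before translating back through $x_{n}^{-1}$. Your Cauchy--Schwarz/strict-quasiconcavity argument for uniqueness of the critical direction is in fact more explicit than the paper's treatment, which records only uniqueness of the maximizer and leaves implicit the step that every interior KKT point on the sphere must coincide with it.
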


Proposition \ref{propFixedPoint} reveals that, under mild assumptions, there is a unique point that is kept fixed by $x^{-1}_{n}(\cdot)$ and such point lies on the strictly positive orthant of the unitary sphere. Also, in each strictly positive orthant of a sphere, there is a single point that is only scaled by $x^{-1}_{n}(\cdot)$, with its length being inverted. 

After some careful thought, it is possible to say that $x^{-1}_{n}(\cdot)$ can be seen as a ``double twist''. The first twist is a kind of rotation around the one-dimensional ``axis'' given by $\{x_{n}(p(\alpha))\in\mathbb{R}^{L}_{++}\mid \alpha>0\}$. The second twist takes large consumption bundles and brings them closer to the origin while also taking small consumption bundles and sending them far away (in doing so, it stretches the regions near the origin and compresses the ones far from it).  The following example illustrates this interpretation, and Section \ref{sec3} will further develop these intuitive notions.

\begin{example}\label{ex1}
Let $L=2$ and $u(c_{1},c_{2})=c_{1}c_{2}$. Then, for $c,p\in\mathbb{R}^{2}_{++}$ and $u>0$, $x_{n}(p)=(1/2p_{1},1/2p_{2})$, $v_{n}(p)=1/4p_{1}p_{2}$, $\lambda_{n}(p)=1/2p_{1}p_{2}$, $x^{-1}_{n}(c)=(1/2c_{1},1/2c_{2})$, $h(p,u)=(\sqrt{u p_{2}/p_{1}},\sqrt{u p_{1}/p_{2}})$ and $e(p,u)=2\sqrt{u p_{1}p_{2}}$. Then,
\begin{eqnarray*}
    p(\alpha)&=&\frac{1}{\alpha}\biggr(\frac{1}{\sqrt{2}},\frac{1}{\sqrt{2}}\biggr)\\
    x_{n}(p(\alpha))&=&\alpha \biggr(\frac{1}{\sqrt{2}},\frac{1}{\sqrt{2}}\biggr),
\end{eqnarray*}
for $\alpha>0$. The following figure illustrates how $x^{-1}_{n}(\cdot)$ deforms the consumption domain to the normalized domain.

\begin{figure}[H]
\centering
\includegraphics{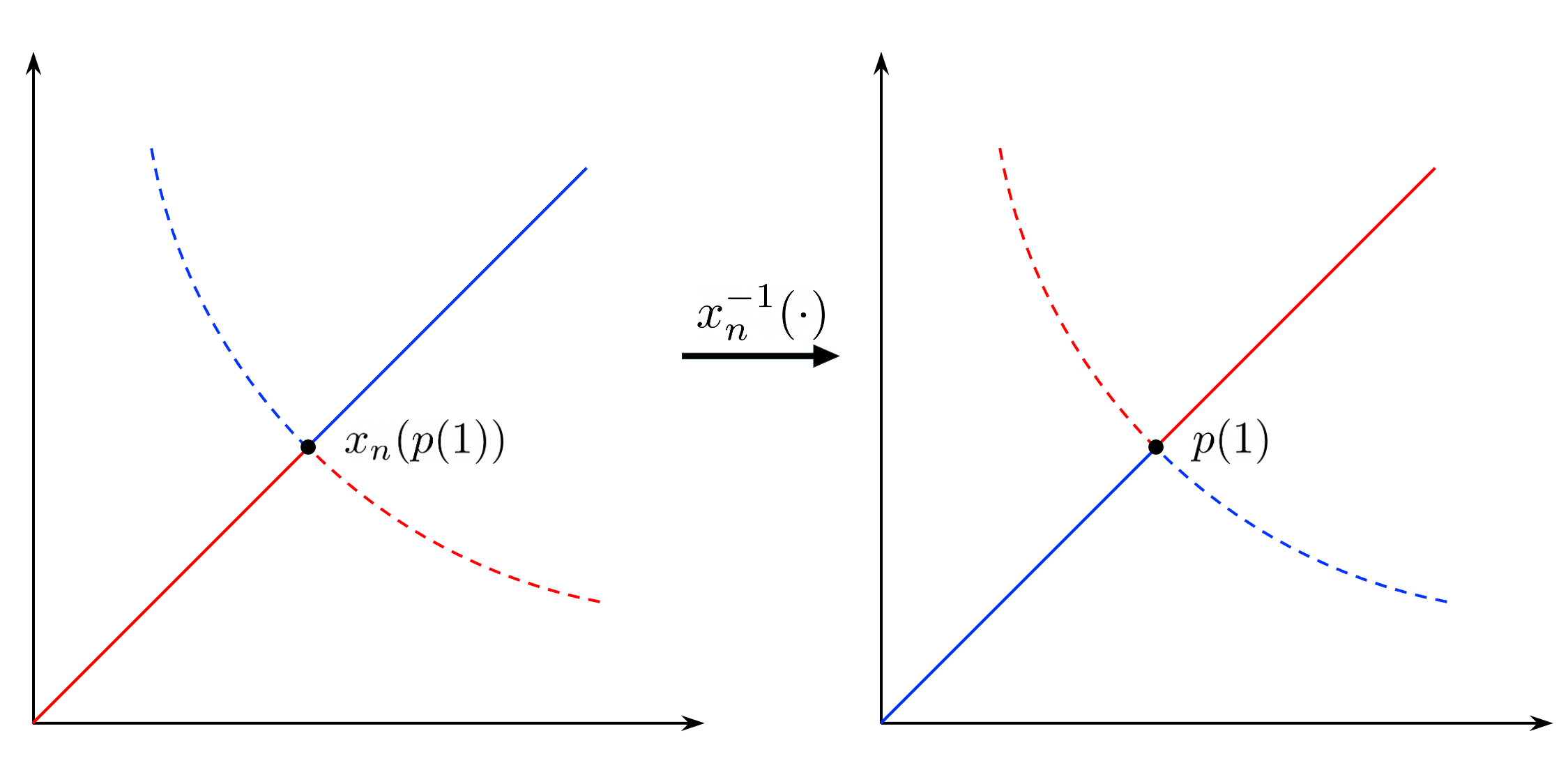}
\caption{The inverse of the normalized Walrasian demand $x^{-1}_{n}(\cdot)$ deforming the consumption domain (left) to the normalized domain (right).}
\label{FigNormalizedWalrasian}
\end{figure}
\end{example}

Proposition \ref{propInverseDemand} allows us to deal with consumption bundles and normalized prices in an equivalent way, and this could presumably be very useful when using consumer theory on economic modeling (and, specially, when dealing with non-tâtonnement models of trade). It would be even better if we could find a way in which consumption bundles were equivalent to vectors that contained, in one of their coordinates, the utility of the consumer obtained from them, since consumer theory basically relies on utility maximization and Pareto partial ordering. An attentive look at Lemma \ref{lemmaBasicIdentities} and Proposition \ref{propInverseDemand} reveals that
\begin{eqnarray*}
    e(x_{n}^{-1}(c),u(c))= x_{n}^{-1}(c) h(x_{n}^{-1}(c),v_{n}(x_{n}^{-1}(c)))^{T}= x_{n}^{-1}(c) x_{n}(x_{n}^{-1}(c))^{T}=1,
\end{eqnarray*}
for $c\in\mathbb{R}^{L}_{++}$, and, since $e(\cdot)$ is homogeneous of degree one,
\begin{eqnarray*}
    x_{nL}^{-1}(c)=e\biggr(\frac{x_{n}^{-1}(c)}{x_{nL}^{-1}(c)},u(c)\biggr)^{-1}.
\end{eqnarray*}
Therefore, the expenditure function is a natural way to bind a coordinate from a vector in the normalized domain to the utility level of the corresponding consumption bundle. This leads us to the following proposition.
\begin{proposition}\label{propDDiffeo}
Let $d:\mathbb{R}^{L}_{++}\rightarrow \mathbb{R}^{L}_{++}$ be given by $d(q,u)=e((q,1),u)^{-1}(q,1)$. Then, $d(\cdot)$ is a diffeomorphism, with
\begin{eqnarray*}
    d^{-1}(p)=\biggr(\frac{p_{1}}{p_{L}},\ldots,\frac{p_{L-1}}{p_{L}},v_{n}(p)\biggr).
\end{eqnarray*}
\end{proposition}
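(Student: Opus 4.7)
The plan is to verify directly that the proposed formula for $d^{-1}$ is a two-sided inverse of $d$, and then check smoothness of both maps; this suffices for $d$ to be a diffeomorphism.

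For $d\circ d^{-1}=\mathrm{id}$: given $p\in\mathbb{R}^{L}_{++}$, set $q=(p_{1}/p_{L},\ldots,p_{L-1}/p_{L})$ and $u=v_{n}(p)$, so $(q,1)=p/p_{L}$. Homogeneity of degree one of $e$ in its price argument together with Lemma \ref{lemmaBasicIdentities}(iii) gives $e((q,1),u)=e(p,v_{n}(p))/p_{L}=1/p_{L}$, hence $d(q,u)=p_{L}\cdot(p/p_{L})=p$. Conversely, for $d^{-1}\circ d=\mathrm{id}$: given $(q,u)\in\mathbb{R}^{L}_{++}$, let $p=d(q,u)=e((q,1),u)^{-1}(q,1)$; then $p/p_{L}=(q,1)$ recovers the first $L-1$ coordinates, and by homogeneity $e(p,u)=e((q,1),u)^{-1}\,e((q,1),u)=1=e(p,v_{n}(p))$, whence strict monotonicity of $u\mapsto e(p,u)$ yields $v_{n}(p)=u$.

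Smoothness of $d$ then follows from smoothness of $e$ on $\mathbb{R}^{L+1}_{++}$, a standard consequence of the smoothness and strict quasi-concavity of $u$ via the implicit function theorem applied to the first-order conditions of the expenditure-minimization problem. Smoothness of the candidate $d^{-1}$ follows from smoothness of $v_{n}=u\circ x_{n}$ (using Proposition \ref{propInverseDemand}) and smoothness of coordinate ratios on $\mathbb{R}^{L}_{++}$.

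The main obstacle I anticipate is cleanly justifying the strict monotonicity of $e(p,\cdot)$ invoked in the second identity: this is not automatic but follows from the assumption that $u$ has no local maxima combined with strict quasi-concavity, which together ensure that any bundle attaining a strictly higher utility level lies strictly outside the relevant lower-contour set and so must cost strictly more at $p\in\mathbb{R}^{L}_{++}$. Once this monotonicity is secured, the verification of both composition identities and of smoothness is essentially mechanical.
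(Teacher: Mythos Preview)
Your proposal is correct and follows essentially the same route as the paper: verify directly that the candidate formula inverts $d$, then appeal to smoothness of $e$ and $v_{n}$. The paper in fact checks only the identity $d\circ d^{-1}=\mathrm{id}$ (via $e(p,v_{n}(p))=1$ from Lemma~\ref{lemmaBasicIdentities} and homogeneity of $e$) and then declares the result; your additional verification of $d^{-1}\circ d=\mathrm{id}$ via strict monotonicity of $e(p,\cdot)$ is a welcome completion of that argument rather than a different method.
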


Proposition \ref{propDDiffeo} states that $d^{-1}(\cdot)$ also deforms the normalized domain $\mathbb{R}^{L}_{++}$ in a well-behaved manner. However, the remarkable feature of this diffeomorphism is that it takes the sets $\{p\in\mathbb{R}^{L}_{++}\mid v_{n}(p)=u\}$, $u>0$, in the normalized domain and flattens them (actually, it also twists the two regions limited by these sets). For a set $B\subseteq\mathbb{R}^{L}_{++}$ in the normalized domain, I will say that $d^{-1}(B)\subseteq\mathbb{R}^{L}_{++}$ rests in the \textit{flat domain}. The next definition gathers Propositions \ref{propInverseDemand} and \ref{propDDiffeo} to define the natural diffeomorphism between the consumption and the flat domains.

\begin{definition}\label{defFlatteningDiffeo}
    The \textit{flattening diffeomorphism} $f:\mathbb{R}^{L}_{++}\rightarrow\mathbb{R}^{L}_{++}$ is given by
    \begin{eqnarray*}
        f(c)=d^{-1}(x^{-1}_{n}(c))=\biggr(\frac{\partial u(c)/\partial c_{1}}{\partial u(c)/\partial c_{L}},\ldots,\frac{\partial u(c)/\partial c_{L-1}}{\partial u(c)/\partial c_{L}},u(c)\biggr)=\biggr(\frac{\nabla_{L-1}u(c)}{\partial u(c)/\partial c_{L}},u(c)\biggr),
    \end{eqnarray*}
    with $f^{-1}(q,u)=x_{n}(d(q,u))=h((q,1),u)$, for $(q,u)\in\mathbb{R}^{L}_{++}$.
\end{definition}

The flattening diffeomorphism $f(\cdot)$, which is the inverse of the Hicksian demand function after a suitable normalization, represents the intuitive notion that instead of describing a consumption bundle in the consumption domain, one can, equivalently, pin down the indifference curve and state the marginal substitution rates, leading then to a point in the flat domain. The most important feature of this diffeomorphism is that it eases the analysis of Pareto dominance and of trade paths that are utility enhancing, since it all boils down to seeing if you are ``moving upward'' in the flat domain. This will become clearer in the next sections. 

\section{Canonical manifolds and the manifold of Pareto Optimal allocations}\label{sec3}

This section presents a few results regarding the normalized Walrasian demand and the flattening diffeomorphism. I start with a definition based on classical demand theory.  

\begin{definition}\label{defCanonical}
     The \textit{canonical manifolds} through  $c\in\mathbb{R}^{L}_{++}$ are:
    \begin{itemize}
        \item[(i)] The \textit{indifference hypersurface}, given by $\mathcal{I}(c)=\{y\in\mathbb{R}^{L}_{++}\mid u(y)=u(c)\}$;
        \item[(ii)] The \textit{offer hypersurface}, given by $\mathcal{O}(c)=\{ y\in\mathbb{R}^{L}_{++}\mid \nabla u(y)(y-c)^{T}=0\}$;
        \item[(iii)] The \textit{trade hyperplane}, given by $\mathcal{H}(c)=\{y\in\mathbb{R}^{L}_{++}\mid \nabla u(c)(y-c)^{T}=0\}$.
    \end{itemize}
\end{definition}

Definition \ref{defCanonical} highlights sets on the consumption domain that are at the core of classical demand theory and possess a well-established economic interpretation\footnote{The trade hyperplane $\mathcal{H}(c)$ can be seen as the strictly positive frontier of the classical budget set $B(p,w)$ when $p=x^{-1}_{n}(c)$ and $w=1$.}. The next result reveals that they are, indeed, connected manifolds and therefore are the ideal candidates to deepen the study of the normalized Walrasian demand and the flattening diffeomorphism. For example, they have a single intersection $c\in\mathcal{I}(c)\cap\mathcal{O}(c)\cap \mathcal{H}(c)$, a property that is preserved under diffeomorphisms. 

\begin{proposition}\label{propAreManifolds}
    For $c\in\mathbb{R}^{L}_{++}$, $\mathcal{I}(c)$, $\mathcal{O}(c)$ and $\mathcal{H}(c)\subset\mathbb{R}^{L}_{++}$ are connected manifolds without border of dimension $L-1$, with $\mathcal{H}(c)=\{y\in\mathbb{R}^{L}_{++}\mid x^{-1}_{n}(c)y^{T}=1\}$, $\mathcal{I}(c)=\{ h((q,1),u(c))\mid q\in\mathbb{R}^{L-1}_{++}\}$ and $\mathcal{O}(c)=\{y\in\mathbb{R}^{L}_{++}\mid x_{n}^{-1}(y)c^{T}=1\}=\{ x_{n}((q,1)/(q,1)c^{T})\mid q\in\mathbb{R}^{L-1}_{++}\}$.
\end{proposition}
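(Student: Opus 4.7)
The plan is to establish each of the three explicit characterizations first, and then read off the manifold statements from them.

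For the trade hyperplane $\mathcal{H}(c)$, I would invoke Proposition \ref{propInverseDemand} to write $\nabla u(c) = (\nabla u(c)\,c^{T})\,x_{n}^{-1}(c)$, which also gives $x_{n}^{-1}(c)\,c^{T}=1$. Substituting into $\nabla u(c)(y-c)^{T}=0$ and dividing by the positive scalar $\nabla u(c)\,c^{T}$ yields $x_{n}^{-1}(c)\,y^{T}=1$, the stated characterization. Because the normal $x_{n}^{-1}(c)$ sits in $\mathbb{R}^{L}_{++}$, the intersection of this affine hyperplane with the open positive orthant is a non-empty convex subset of the hyperplane, hence a connected smooth submanifold of $\mathbb{R}^{L}_{++}$ of dimension $L-1$ without boundary.

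For the indifference hypersurface $\mathcal{I}(c)$, Definition \ref{defFlatteningDiffeo} shows that the last coordinate of $f(y)$ is $u(y)$, so $\mathcal{I}(c)=f^{-1}\bigl(\mathbb{R}^{L-1}_{++}\times\{u(c)\}\bigr)$. The slice on the right is a connected $(L-1)$-dimensional manifold, and since $f$ is a diffeomorphism the same holds for $\mathcal{I}(c)$. The identity $f^{-1}(q,u)=h((q,1),u)$ then gives the parametrization $\mathcal{I}(c)=\{h((q,1),u(c))\mid q\in\mathbb{R}^{L-1}_{++}\}$.

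For the offer hypersurface $\mathcal{O}(c)$, I apply Proposition \ref{propInverseDemand} pointwise to rewrite $\nabla u(y)(y-c)^{T}=0$ as $x_{n}^{-1}(y)\,c^{T}=x_{n}^{-1}(y)\,y^{T}=1$, yielding $\mathcal{O}(c)=x_{n}\bigl(\{p\in\mathbb{R}^{L}_{++}\mid p\,c^{T}=1\}\bigr)$. The set on the right is the positive part of an affine hyperplane whose normal $c$ lies in $\mathbb{R}^{L}_{++}$, so it is convex, connected, boundaryless and $(L-1)$-dimensional; transporting through the diffeomorphism $x_{n}$ gives the same for $\mathcal{O}(c)$. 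For the explicit parametrization, every $p\in\mathbb{R}^{L}_{++}$ with $p\,c^{T}=1$ can be written uniquely as $p=(q,1)/((q,1)\,c^{T})$ with $q=(p_{1}/p_{L},\ldots,p_{L-1}/p_{L})\in\mathbb{R}^{L-1}_{++}$, and pushing this through $x_{n}$ yields the claimed expression.

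The only subtle point is checking that the positivity restriction does not disconnect the sets in the hyperplane descriptions of $\mathcal{H}(c)$ and $\mathcal{O}(c)$; this is handled uniformly by the observation that the relevant normal vectors ($x_{n}^{-1}(c)$ and $c$, respectively) lie in the strictly positive orthant, which forces the relevant cross-sections to be convex, hence connected. Everything else is bookkeeping with the identities from Lemma \ref{lemmaBasicIdentities} and Proposition \ref{propInverseDemand}.
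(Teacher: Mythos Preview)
Your proposal is correct and follows essentially the same route as the paper: both reduce $\mathcal{H}(c)$ and $\mathcal{O}(c)$ to hyperplane conditions via the identity $\nabla u(y)=(\nabla u(y)y^{T})x_{n}^{-1}(y)$ from Proposition~\ref{propInverseDemand}, and both obtain the parametrizations of $\mathcal{I}(c)$ and $\mathcal{O}(c)$ from $f^{-1}(q,u)=h((q,1),u)$ and the normalization $p=(q,1)/(q,1)c^{T}$. Your use of the flattening diffeomorphism $f$ to read off $\mathcal{I}(c)=f^{-1}(\mathbb{R}^{L-1}_{++}\times\{u(c)\})$ is a slightly cleaner packaging of the paper's explicit two-inclusion computation, and your convexity remark for connectedness of the hyperplane slices makes explicit what the paper leaves implicit.
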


Proposition \ref{propAreManifolds} leads us to define, for $c\in\mathbb{R}^{L}_{++}$, $\phi_{c}:\mathbb{R}^{L}_{++}\rightarrow\mathbb{R}^{L}_{++}$ and $\psi_{c}:\mathbb{R}^{L}_{++}\rightarrow\mathbb{R}^{L}_{++}$ as $\phi_{c}(p)=h(p,u(c))$ and $\psi_{c}(p)=x_{n}(p/pc^{T})$ so that $\mathcal{I}(c)=\mathbf{Img\ }\phi_{c}$ and $\mathcal{O}(c)=\mathbf{Img\ }\psi_{c}$. The following corollary is particularly useful for calculating the tangent hyperplanes to the indifference and the offer hypersurfaces.

\begin{corollary}\label{corJacobian}
For $c,p\in\mathbb{R}^{L}_{++}$, let $\phi_{c}(p)$ and $\psi_{c}(p)$ be as defined above. Then,
\begin{eqnarray*}
\mathbf{J}\phi_{c}(p)&=&-\biggr(Id-\tilde{p}^{T}h(p,u(c))\biggr)^{T}\frac{\mathbf{H}v_{n}(\tilde{p})}{e(p,u(c))\lambda_{n}(\tilde{p})}\biggr(Id-\tilde{p}^{T}h(p,u(c))\biggr)\\
\mathbf{J}\psi_{c}(p)&=&-(Id-x_{n}(p^{*})^{T}p^{*})\frac{\mathbf{H}v_{n}(p^{*})}{pc^{T}\lambda_{n}(p^{*})}(Id-p^{*T}c)-\frac{x_{n}(p^{*})^{T}}{pc^{T}}(x_{n}(p^{*})-c),
\end{eqnarray*}
with $\tilde{p}=p/e(p,u(c))$ and $p^{*}=p/pc^{T}$.
\end{corollary}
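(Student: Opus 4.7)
The plan is to compute both Jacobians by the chain rule after writing $\phi_c$ and $\psi_c$ as compositions of the normalized Walrasian demand $x_n(\cdot)$ with a simple rescaling map, relying on the identities in Lemma \ref{lemmaBasicIdentities}. The only genuinely new ingredient that is needed is a closed-form expression for $\mathbf{J}x_n$ itself in terms of $\mathbf{H}v_n$; once that is in hand, the rest reduces to careful bookkeeping plus the rank-one simplifications forced by the budget identity $p\,x_n(p)^T=1$.

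To obtain that expression, I would start from $x_n(p)=-\nabla v_n(p)/\lambda_n(p)$ (Lemma \ref{lemmaBasicIdentities}(ii)) and differentiate entrywise, which yields
\begin{eqnarray*}
\mathbf{J}x_n(p)=-\frac{\mathbf{H}v_n(p)}{\lambda_n(p)}+\frac{(\nabla v_n(p))^{T}\nabla\lambda_n(p)}{\lambda_n(p)^{2}}.
\end{eqnarray*}
Then $\nabla\lambda_n(p)$ is eliminated by differentiating $\lambda_n(p)=-\nabla v_n(p)\,p^{T}$, giving $\nabla\lambda_n(p)=-p\,\mathbf{H}v_n(p)-\nabla v_n(p)=-p\,\mathbf{H}v_n(p)+\lambda_n(p)x_n(p)$. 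Substituting and using $\nabla v_n=-\lambda_n x_n$ once more, one arrives at the key identity
\begin{eqnarray*}
\mathbf{J}x_n(p)=-\frac{1}{\lambda_n(p)}\bigl(Id-x_n(p)^{T}p\bigr)\mathbf{H}v_n(p)\;-\;x_n(p)^{T}x_n(p).
\end{eqnarray*}

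For $\phi_c$, I would use Lemma \ref{lemmaBasicIdentities}(iii) to write $\phi_c(p)=h(p,u(c))=x_n(\tilde{p})$ with $\tilde{p}=p/e(p,u(c))$, and obtain $\mathbf{J}\tilde p(p)=(Id-\tilde p^{T}h(p,u(c)))/e(p,u(c))$ by direct differentiation using Shephard's lemma $\nabla_{p}e(p,u)=h(p,u)$. Applying the chain rule and substituting the key identity evaluated at $\tilde p$ produces two summands: the first matches the stated symmetric sandwich expression (note $(Id-\tilde p^{T}h)^{T}=Id-h^{T}\tilde p=Id-x_n(\tilde p)^{T}\tilde p$), and the second is the rank-one term $-x_n(\tilde p)^{T}x_n(\tilde p)(Id-\tilde p^{T}x_n(\tilde p))/e(p,u(c))$, which vanishes identically because $x_n(\tilde p)(Id-\tilde p^{T}x_n(\tilde p))=x_n(\tilde p)-x_n(\tilde p)\tilde p^{T}x_n(\tilde p)=0$ by the budget identity $x_n(\tilde p)\tilde p^{T}=1$. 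For $\psi_c$, the analogous decomposition $\psi_c(p)=x_n(p^{*})$ with $p^{*}=p/(pc^{T})$ gives $\mathbf{J}p^{*}(p)=(Id-p^{*T}c)/(pc^{T})$, and the same chain rule produces the first claimed term together with the rank-one correction $-x_n(p^{*})^{T}x_n(p^{*})(Id-p^{*T}c)/(pc^{T})$; this time it does not vanish, but the budget identity $x_n(p^{*})p^{*T}=1$ simplifies $x_n(p^{*})(Id-p^{*T}c)=x_n(p^{*})-c$, yielding precisely the second term $-x_n(p^{*})^{T}(x_n(p^{*})-c)/(pc^{T})$ in the statement. The main obstacle is the derivation of the closed-form $\mathbf{J}x_n$; everything else is routine chain rule plus the two rank-one cancellations driven by Lemma \ref{lemmaBasicIdentities}(i).
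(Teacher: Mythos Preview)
Your proposal is correct and follows essentially the same route as the paper: both write $\phi_c(p)=x_n(\tilde p)$ and $\psi_c(p)=x_n(p^*)$, differentiate $x_n=-\nabla v_n/\lambda_n$, and eliminate $\nabla\lambda_n$ via the identity $\nabla\lambda_n(\tilde p)=-\tilde p\,\mathbf{H}v_n(\tilde p)+\lambda_n(\tilde p)x_n(\tilde p)$ (the paper's equation \eqref{eqLambdaIdentity}). The only difference is organizational: you first isolate the closed form $\mathbf{J}x_n(p)=-\lambda_n(p)^{-1}(Id-x_n(p)^Tp)\mathbf{H}v_n(p)-x_n(p)^Tx_n(p)$ and then apply the chain rule twice, whereas the paper carries out the quotient/chain-rule expansion inline for each of $\phi_c$ and $\psi_c$ separately; the rank-one cancellations you identify via $x_n(\tilde p)\tilde p^T=1$ are exactly the simplifications the paper performs implicitly when collapsing its intermediate expressions.
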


Corollary \ref{corJacobian} implies, for example, that if one takes $p=x^{-1}_{n}(c)$, then $\phi_{c}(p)=\psi_{c}(p)=c$, $\mathbf{J}\phi_{c}(p)=\mathbf{J}\psi_{c}(p)$ and, therefore, $c\in\mathbb{R}^{L}_{++}$ is a tangency point. The next result characterizes the counterparts of the canonical manifolds in the normalized and flat domains. 

\begin{proposition}\label{propManifoldsDomains}
Let $x_{n}(\cdot)$ and $f(\cdot)$ be as defined above. Then, for $c\in\mathbb{R}^{L}_{++}$, $ x^{-1}_{n}(\mathcal{I}(c))=\{p\in\mathbb{R}^{L}_{++}\mid v_{n}(p)=u(c)\}$, $x^{-1}_{n}(\mathcal{O}(c))=\{p\in\mathbb{R}^{L}_{++}\mid pc^{T}=1\}$, $x^{-1}_{n}(\mathcal{H}(c))=\{p\in\mathbb{R}^{L}_{++}\mid x^{-1}_{n}(c)x_{n}(p)^{T}=1\}$, $f(\mathcal{I}(c))=\{(q,u)\in\mathbb{R}^{L}_{++}\mid u=u(c)\}$, $f(\mathcal{O}(c))=\{(q,u)\in\mathbb{R}^{L}_{++}\mid (q,1)c^{T}=e((q,1),u)\}$ and $f(\mathcal{H}(c))=\{(q,u)\in\mathbb{R}^{L}_{++}\mid x^{-1}_{n}(c)h((q,1),u)^{T}=1\}$.
\end{proposition}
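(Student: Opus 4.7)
The proposition asserts six set equalities. Each is obtained by a short substitution using either Proposition \ref{propAreManifolds} (which provides alternative parametrizations of the canonical manifolds), Definition \ref{defFlatteningDiffeo} (which gives $f^{-1}(q,u)=h((q,1),u)$), or Lemma \ref{lemmaBasicIdentities}, and I would organize the proof along those lines. The idea in every case is to write a generic element of $\mathcal{I}(c)$, $\mathcal{O}(c)$, or $\mathcal{H}(c)$ in the form $x_n(p)$ or $h((q,1),u)$, substitute into the defining equation of the manifold, and read off the resulting condition.

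For the three images under $x_n^{-1}(\cdot)$, I would substitute $p=x_n^{-1}(y)$ (equivalently $y=x_n(p)$) in the three defining conditions. For $\mathcal{I}(c)$, the condition $u(y)=u(c)$ becomes $u(x_n(p))=v_n(p)=u(c)$. For $\mathcal{O}(c)$, Proposition \ref{propAreManifolds} already rewrites the manifold as $\{y\mid x_n^{-1}(y)c^T=1\}$, which is simply $pc^T=1$. For $\mathcal{H}(c)$, Proposition \ref{propAreManifolds} gives $\{y\mid x_n^{-1}(c)y^T=1\}$, which becomes $x_n^{-1}(c)x_n(p)^T=1$.

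For the three images under $f(\cdot)$, I would use $f^{-1}(q,u)=h((q,1),u)$. For $\mathcal{I}(c)$, the Hicksian demand satisfies $u(h((q,1),u))=u$ identically, so $f(\mathcal{I}(c))=\{(q,u)\mid u=u(c)\}$. For $\mathcal{H}(c)$, substituting $c'=h((q,1),u)$ into the characterization $x_n^{-1}(c)c'^T=1$ yields the stated condition immediately. The one substantive step is $f(\mathcal{O}(c))$. Writing $p=x_n^{-1}(c')$ and decomposing $p=p_L(q,1)$ with $p_L=x_{nL}^{-1}(c')>0$, the offer condition $pc^T=1$ becomes $p_L(q,1)c^T=1$. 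By homogeneity of $e(\cdot,u)$ together with the identity $e(p,v_n(p))=1$ from Lemma \ref{lemmaBasicIdentities}(iii), one gets $p_L\,e((q,1),v_n(p))=1$, so $1/p_L=e((q,1),u)$ where $u=v_n(p)$, and therefore $(q,1)c^T=e((q,1),u)$. The converse direction reverses these steps: given $(q,u)$ with $(q,1)c^T=e((q,1),u)$, set $p=(q,1)/e((q,1),u)$, note $e(p,u)=1$ by homogeneity, and use $h(p,u)=x_n(p/e(p,u))=x_n(p)$ from Lemma \ref{lemmaBasicIdentities}(iii); since $e(p,\cdot)$ is strictly increasing this forces $v_n(p)=u$, so $f^{-1}(q,u)=h((q,1),u)=x_n(p)$ satisfies $x_n^{-1}(f^{-1}(q,u))c^T=pc^T=p_L(q,1)c^T=1$, placing it in $\mathcal{O}(c)$.

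The main obstacle, to the extent there is one, is the bookkeeping in the offer hypersurface case when moving between the un-normalized price $p\in\mathbb{R}^L_{++}$ and its flat-domain representative $(q,1)$: one must invoke both the homogeneity of $e(\cdot,u)$ and the pairing identity $e(p,v_n(p))=1$ simultaneously. Every other item is a one-line substitution once the appropriate identity from Lemma \ref{lemmaBasicIdentities} or the alternative characterizations in Proposition \ref{propAreManifolds} is recognized.
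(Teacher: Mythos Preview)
Your proposal is correct and follows the same substitution-based strategy as the paper. The only difference is in the case of $f(\mathcal{O}(c))$: the paper dispatches it in one line by observing directly from Lemma~\ref{lemmaBasicIdentities}(iii) that $x_n^{-1}(h((q,1),u))=(q,1)/e((q,1),u)$, so the offer condition $x_n^{-1}(h((q,1),u))c^T=1$ is immediately equivalent to $(q,1)c^T=e((q,1),u)$; your decomposition $p=p_L(q,1)$ together with $e(p,v_n(p))=1$ reaches the same conclusion but with more bookkeeping and a separately stated converse that the paper's biconditional chain makes unnecessary.
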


Proposition \ref{propManifoldsDomains} reveals that the normalized Walrasian demand takes the offer hypersurface in the consumption domain and deforms it to a hyperplane in the normalized domain. Similarly, the flattening diffeomorphism takes the indifference hypersurface and deforms it to an ``horizontal'' hyperplane in the flat domain. The following results further characterize the counterparts of the canonical manifolds in the normalized and flat domains.

\begin{lemma}\label{lemmaIndirectQuasiconvex}
Let  $v_{n}(\cdot)$ be as defined above. Then, $v_{n}(\cdot)$ is quasi-convex.
\end{lemma}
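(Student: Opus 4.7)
The plan is to adapt the classical textbook proof that the indirect utility function $v(p,w)$ is quasi-convex in $(p,w)$ to the normalized setting. Since $v_{n}(p)=v(p,1)$ is the restriction of $v(\cdot)$ to the affine hyperplane $\{w=1\}$, and the intersection of a convex set with an affine subspace remains convex, it suffices to reproduce the classical argument directly using the identity $p\,x_{n}(p)^{T}=1$ from Lemma \ref{lemmaBasicIdentities}(i).

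First, I would fix arbitrary $p_{1},p_{2}\in\mathbb{R}^{L}_{++}$ and $\alpha\in[0,1]$, and set $p=\alpha p_{1}+(1-\alpha)p_{2}$, which lies in $\mathbb{R}^{L}_{++}$ by convexity of the positive orthant. The goal is to show that $v_{n}(p)\leq\max\{v_{n}(p_{1}),v_{n}(p_{2})\}$, which is equivalent to convexity of every sublevel set of $v_{n}(\cdot)$ and hence to the claim.

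Next, let $c=x_{n}(p)$ so that $v_{n}(p)=u(c)$. By Lemma \ref{lemmaBasicIdentities}(i), $p\,c^{T}=1$, which rewrites as $\alpha(p_{1}c^{T})+(1-\alpha)(p_{2}c^{T})=1$. Since a convex combination of two nonnegative reals equals one, at least one of the summands satisfies $p_{i}c^{T}\leq 1$; say $p_{1}c^{T}\leq 1$ without loss of generality. Then $c\in B(p_{1},1)$, and by optimality of $x(p_{1},1)=x_{n}(p_{1})$ in the utility maximization problem (\ref{UMP}), we get $u(c)\leq u(x_{n}(p_{1}))=v_{n}(p_{1})$. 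Hence $v_{n}(p)=u(c)\leq v_{n}(p_{1})\leq\max\{v_{n}(p_{1}),v_{n}(p_{2})\}$, as required.

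There is essentially no hard step: the entire argument rests on the observation that the budget hyperplane through $c$ at price $p$ cannot strictly separate $c$ from both budget sets $B(p_{1},1)$ and $B(p_{2},1)$. The only subtlety worth flagging is the trivial affordability check in $B(p_{i},1)$ and the fact that $p$ remains in the open positive orthant, which keeps the argument within the domain of $v_{n}(\cdot)$ and avoids any boundary considerations.
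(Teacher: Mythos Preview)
Your proof is correct and is essentially the same approach as the paper's: the paper simply invokes Proposition~3.D.3 of Mas-Colell, Whinston and Green (quasi-convexity of the indirect utility function in $(p,w)$) and restricts to $w=1$, whereas you spell out that standard budget-set argument directly. The content is identical; you have just unpacked the cited textbook result.
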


\begin{theorem}\label{theoIndifferenceNormDom}
Let $c\in\mathbb{R}^{L}_{++}$ and $\Omega(c)=\{p\in\mathbb{R}^{L}_{++}\mid v_{n}(p)\leq u(c)\}$. Then, $\Omega(c)$ is a closed convex set in $\mathbb{R}^{L}_{++}$, $\partial \Omega(c)=x^{-1}_{n}(\mathcal{I}(c))$ and $x^{-1}_{n}(\mathcal{O}(c))$ is the supporting hyperplane on $x^{-1}_{n}(c)\in x^{-1}_{n}(\mathcal{I}(c))$. Furthermore, if $v_{n}(\cdot)$ is strictly quasi-convex, then $\Omega(c)$ is strictly convex.
\end{theorem}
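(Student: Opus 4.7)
My plan is to decompose the statement into four pieces (closedness/convexity, boundary identification, supporting hyperplane property, and strict convexity) and tackle them in that order, leaning heavily on the identities in Lemma \ref{lemmaBasicIdentities}, on Proposition \ref{propManifoldsDomains}, and on Lemma \ref{lemmaIndirectQuasiconvex}.

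For closedness and convexity, I would observe that $\Omega(c)=v_{n}^{-1}([0,u(c)])$. Since $v_{n}(\cdot)$ is continuous (smooth by Definition \ref{defNormalizedWalrasian}), the preimage of the closed set $[0,u(c)]$ is closed in the ambient space $\mathbb{R}^{L}_{++}$; convexity then follows immediately from Lemma \ref{lemmaIndirectQuasiconvex}, as $\Omega(c)$ is a sublevel set of a quasi-convex function.

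For the boundary identification, Proposition \ref{propManifoldsDomains} already tells us $x_{n}^{-1}(\mathcal{I}(c))=\{p\in\mathbb{R}^{L}_{++}\mid v_{n}(p)=u(c)\}$, so the task reduces to showing $\partial\Omega(c)$ equals this level set. Continuity gives the inclusion $\{p\mid v_{n}(p)<u(c)\}\subseteq\mathrm{int}\,\Omega(c)$; for the reverse, I would invoke Lemma \ref{lemmaBasicIdentities}(ii), which yields $\nabla v_{n}(p)=-\lambda_{n}(p)\,x_{n}(p)$. Since $\lambda_{n}(p)>0$ and $x_{n}(p)\in\mathbb{R}^{L}_{++}$, the gradient is a nowhere-vanishing vector with strictly negative coordinates, so by the implicit function theorem every point of the level set $\{v_{n}=u(c)\}$ admits nearby points with $v_{n}>u(c)$. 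This prevents any such point from being interior, so $\partial\Omega(c)$ coincides with the level set.

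For the supporting hyperplane, I first recall that Proposition \ref{propManifoldsDomains} identifies $x_{n}^{-1}(\mathcal{O}(c))=\{p\in\mathbb{R}^{L}_{++}\mid pc^{T}=1\}$; and the identity $x_{n}^{-1}(c)c^{T}=1$ follows from Proposition \ref{propInverseDemand}. It then suffices to show $pc^{T}\geq 1$ for every $p\in\Omega(c)$. Given such a $p$, the bundle $c$ is feasible for the expenditure-minimization problem at target utility $u(c)$, so $e(p,u(c))\leq pc^{T}$; on the other hand, $e(p,v_{n}(p))=1$ by Lemma \ref{lemmaBasicIdentities}(iii), and the standard monotonicity of the expenditure function in $u$ (which holds because $u$ has no local maxima and preferences are strictly convex) gives $e(p,u(c))\geq e(p,v_{n}(p))=1$. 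Chaining the two inequalities yields $pc^{T}\geq 1$, with equality at $p=x_{n}^{-1}(c)$, so $x_{n}^{-1}(\mathcal{O}(c))$ is a supporting hyperplane at $x_{n}^{-1}(c)$. Finally, the strict-convexity addendum is routine: if $p\neq q$ lie in $\Omega(c)$, strict quasi-convexity of $v_{n}(\cdot)$ gives $v_{n}(\lambda p+(1-\lambda)q)<\max\{v_{n}(p),v_{n}(q)\}\leq u(c)$ for $\lambda\in(0,1)$, so the open segment lies in the interior of $\Omega(c)$, preventing any segment between distinct boundary points from touching the boundary.

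I expect the main obstacle to be the supporting-hyperplane step, specifically the monotonicity $e(p,u)\geq e(p,v_{n}(p))=1$ whenever $u\geq v_{n}(p)$; it is well known but relies on the absence of local maxima in $u(\cdot)$, so I would want to either cite the standard consumer-theory result or sketch a short duality argument. The boundary identification is conceptually delicate as well, because it requires ruling out that the level set has nonempty interior, but the explicit formula $\nabla v_{n}(p)=-\lambda_{n}(p)x_{n}(p)$ makes this transparent.
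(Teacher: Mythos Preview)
Your proof is correct and tracks the paper's argument closely for closedness, convexity, the boundary identification, and the strict-convexity addendum. The one place where you genuinely diverge is the supporting-hyperplane step. The paper simply observes that $x_{n}^{-1}(\mathcal{I}(c))\cap x_{n}^{-1}(\mathcal{O}(c))=\{x_{n}^{-1}(c)\}$ (inherited from $\mathcal{I}(c)\cap\mathcal{O}(c)=\{c\}$) and, since $x_{n}^{-1}(\mathcal{O}(c))$ is a hyperplane and $\Omega(c)$ is convex with boundary $x_{n}^{-1}(\mathcal{I}(c))$, concludes that this hyperplane must be supporting. Your route instead proves the half-space inequality $pc^{T}\geq 1$ for all $p\in\Omega(c)$ directly, via the duality chain $pc^{T}\geq e(p,u(c))\geq e(p,v_{n}(p))=1$. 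Your argument is more self-contained and makes the side of the hyperplane on which $\Omega(c)$ lies completely explicit; the paper's version is terser but leaves the reader to supply the (not entirely trivial, since $\Omega(c)$ is unbounded in $\mathbb{R}^{L}_{++}$) geometric step that a hyperplane meeting the boundary of a convex set in a single point must be supporting. Both approaches are valid; yours trades a little length for transparency.
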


Theorem \ref{theoIndifferenceNormDom} reveals that the counterpart of the indifference hypersurface in the normalized domain $x^{-1}_{n}(\mathcal{I}(c))$ can be seen as the frontier of a convex set. This convex set ``touches'' the counterpart of the offer hypersurface $x^{-1}_{n}(\mathcal{O}(c))$, which is a hyperplane of the normalized domain, at a single point given by $x^{-1}_{n}(c)$, making the hyperplane a supporting one. One may also notice that the counterpart of the trade hyperplane $x^{-1}_{n}(\mathcal{H}(c))$ lies almost entirely within this convex set, except for the tangential point $x^{-1}_{n}(c)$. The next result\footnote{In Theorem \ref{theoOfferFlatDom}, $\pi_{L-1}(\cdot)$ is the canonical projection of the first $L-1$ coordinates.} shows that these patterns also appear when dealing with counterparts in the flat domain.

\begin{theorem}\label{theoOfferFlatDom}
Let $c\in\mathbb{R}^{L}_{++}$, $\Gamma(c,s)=\{(q,s)\in\mathbb{R}^{L}_{++}\mid (q,1)c^{T}\leq e((q,1),s)\}$, $s\geq u(c)$, $\Gamma(c)=\bigcup_{s\geq u(c)} \Gamma(c,s)=\{(q,r)\in\mathbb{R}^{L}_{++}\mid (q,1)c^{T}\leq e((q,1),r)\}$ and $k_{c}:\mathbb{R}^{L-1}_{++}\rightarrow\mathbb{R}_{+}$, $k_{c}(q)=v_{n}((q,1)/(q,1)c^{T})$. Then, $\Gamma(c)$ is a closed set in $\mathbb{R}^{L}_{++}$, $\{\pi_{L-1}(\Gamma(c,s))\}_{s\geq u(c)}$ is a non-decreasing net of closed nonempty convex sets, and $\partial\Gamma(c)=f(\mathcal{O}(c))$. Furthermore, if $k_{c}(\cdot)$ is (strictly) convex, then $\Gamma(c)$ is a (strictly) convex set and $f(\mathcal{I}(c))$ is the supporting hyperplane on $f(c)\in f(\mathcal{O}(c))$.
\end{theorem}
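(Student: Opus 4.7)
My plan splits along the structure of the theorem: first I handle the claims that hold for arbitrary utilities (closedness of $\Gamma(c)$, the net structure of its sections, and the identification of its boundary), and then the claims that require the convexity of $k_{c}$.

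For Part 1, I would write $\Gamma(c)$ as the sublevel set $\{G\le 0\}$ of the continuous map $G(q,r):=(q,1)c^{T}-e((q,1),r)$; closedness in $\mathbb{R}^{L}_{++}$ is immediate. Each section $\pi_{L-1}(\Gamma(c,s))=\{q:G(q,s)\le 0\}$ is closed for the same reason, convex because $e(\cdot,s)$ is concave in prices (so $G(\cdot,s)$ is linear minus concave, hence convex), and nonempty because at $q_{c}:=\pi_{L-1}(f(c))$ one has $(q_{c},1)c^{T}=e((q_{c},1),u(c))$ (from $f(c)\in f(\mathcal{O}(c))$ via Proposition \ref{propManifoldsDomains}); the non-decreasing character of the net in $s$ then follows from the strict monotonicity of $e$ in utility. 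For $\partial\Gamma(c)=f(\mathcal{O}(c))$, strict monotonicity of $e$ in $r$ shows that strict-inequality points are interior while equality points have arbitrarily close interior ($r$ slightly larger) and exterior ($r$ slightly smaller) points, so $\partial\Gamma(c)$ is exactly the equality locus, which by Proposition \ref{propManifoldsDomains} is $f(\mathcal{O}(c))$.

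For Part 2, the key observation is that $\Gamma(c)$ coincides with the epigraph of $k_{c}$. Combining the homogeneity of degree one of $e$ in prices with the duality identity $v_{n}(p)\le r\iff e(p,r)\ge 1$ (which follows from $e(p,v_{n}(p))=1$ in Lemma \ref{lemmaBasicIdentities} together with monotonicity of $e$ in $u$), one reads
\begin{eqnarray*}
(q,1)c^{T}\le e((q,1),r)\iff e\bigl((q,1)/((q,1)c^{T}),r\bigr)\ge 1 \iff v_{n}\bigl((q,1)/((q,1)c^{T})\bigr)\le r \iff k_{c}(q)\le r,
\end{eqnarray*}
so $\Gamma(c)=\mathrm{epi}(k_{c})$. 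Hence (strict) convexity of $k_{c}$ transfers directly into (strict) convexity of $\Gamma(c)$, using that the domain $\mathbb{R}^{L-1}_{++}$ is open so no vertical boundary pieces arise.

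For the supporting hyperplane, I would give a short economic argument: at prices $p^{*}:=(q,1)/((q,1)c^{T})$ one has $p^{*}c^{T}=1$, so $c$ is affordable with wealth one and hence $v_{n}(p^{*})\ge u(c)$; equivalently, $k_{c}(q)\ge u(c)$ for every $q$, with equality at $q=q_{c}$ where $p^{*}=x_{n}^{-1}(c)$ and the maximizer is $x_{n}(p^{*})=c$. Thus $\Gamma(c)\subseteq\{(q,r)\mid r\ge u(c)\}$ with contact at $f(c)=(q_{c},u(c))$, so $f(\mathcal{I}(c))=\{(q,u)\mid u=u(c)\}$ is the supporting hyperplane to $\Gamma(c)$ at $f(c)\in f(\mathcal{O}(c))$. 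The main technical obstacle is to secure the chain of equivalences yielding $\Gamma(c)=\mathrm{epi}(k_{c})$: once that identity is in hand, all remaining claims follow either from standard epigraph facts or from the elementary budget-set interpretation of $k_{c}$.
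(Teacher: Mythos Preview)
Your proof is correct and follows essentially the same approach as the paper's: both rely on the concavity of $e(\cdot,s)$ for section convexity, strict monotonicity of $e$ in utility for the boundary identification and the non-decreasing net, and the identification of $\Gamma(c)$ with the epigraph of $k_{c}$ for the final convexity claims. Your explicit framing of $\Gamma(c)=\mathrm{epi}(k_{c})$ via the duality $v_{n}(p)\le r\iff e(p,r)\ge 1$, together with your budget-set argument for $k_{c}(q)\ge u(c)$, is slightly cleaner than the paper's presentation, but the underlying ideas coincide.
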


Theorem \ref{theoOfferFlatDom} reveals that the ``horizontal'' sections of $\Gamma(c)$ are closed convex sets and their ``upward'' projections are contained in $\Gamma(c)$, with the frontier of $\Gamma(c)$ furnishing the counterpart of the offer hypersurface in the flat domain $f(\mathcal{O}(c))$. Also, under a convexity assumption related to the normalized indirect utility function $v_{n}(\cdot)$, $\Gamma(c)$ becomes convex and $f(\mathcal{O}(c))$ can be seen as the frontier of such convex set. This set ``stands'' on top of the ``table'' given by the counterpart of the indifference hypersurface $f(\mathcal{I}(c))$, with a single contact point at $f(c)$. In addition, the counterpart of the trade hyperplane $f(\mathcal{H}(c))$ lies almost entirely ``below'' the table, except for the tangential point $f(c)$. The following example illustrates Theorems \ref{theoIndifferenceNormDom} and \ref{theoOfferFlatDom}.

\begin{example} \label{ex2}
I build on Example \ref{ex1}. Then, for $c,p\in\mathbb{R}^{2}_{++}$, $u,q>0$, $f(c)=(c_{2}/c_{1},c_{1}c_{2})$ and $f^{-1}(q,u)=h((q,1),u)=(\sqrt{qu},\sqrt{u/q})$. The canonical manifolds are $\mathcal{I}(c)=\{y\in\mathbb{R}^{2}_{++}\mid y_{1}y_{2}=c_{1}c_{2}\}$, $\mathcal{O}(c)=\{y\in\mathbb{R}^{2}_{++}\mid c_{1}/2y_{1}+c_{2}/2y_{2}=1\}$ and $\mathcal{H}(c)=\{y\in\mathbb{R}^{2}_{++}\mid y_{1}/2c_{1}+y_{2}/2c_{2}=1\}$. The counterparts in the normalized domain are 
\begin{eqnarray*}
    x^{-1}_{n}(\mathcal{I}(c))&=&\{p\in\mathbb{R}^{2}_{++}\mid p_{1}p_{2}=1/4c_{1}c_{2}\}\\
    x^{-1}_{n}(\mathcal{O}(c))&=&\{p\in\mathbb{R}^{2}_{++}\mid p_{1}c_{1}+p_{2}c_{2}=1\}\\
    x^{-1}_{n}(\mathcal{H}(c))&=&\{p\in\mathbb{R}^{2}_{++}\mid 1/p_{1}c_{1}+1/p_{2}c_{2}=4\},
\end{eqnarray*}
and the counterparts in the flat domain are 
\begin{eqnarray*}
    f(\mathcal{I}(c))&=&\{(q,u)\in\mathbb{R}^{2}_{++}\mid u=c_{1}c_{2}\}\\
    f(\mathcal{O}(c))&=&\{(q,u)\in\mathbb{R}^{2}_{++}\mid u\geq c_{1}c_{2}, q=(\sqrt{u}\pm \sqrt{u-c_{1}c_{2}})^{2}/c^{2}_{2}\}\\
    f(\mathcal{H}(c))&=&\{(q,u)\in\mathbb{R}^{2}_{++}\mid u\leq c_{1}c_{2}, q=(c_{1}c_{2}\pm \sqrt{c^{2}_{1}c^{2}_{2}-c_{1}c_{2}u})^{2}/c^{2}_{2}u\}.
\end{eqnarray*}
The figure below depicts these sets, along with $\Omega(\cdot)$ and $\Gamma(\cdot)$ from Theorems \ref{theoIndifferenceNormDom} and \ref{theoOfferFlatDom}, at $c(1)=p(1)=(1/\sqrt{2},1/\sqrt{2})$.
\begin{figure}[H]
\centering
\includegraphics{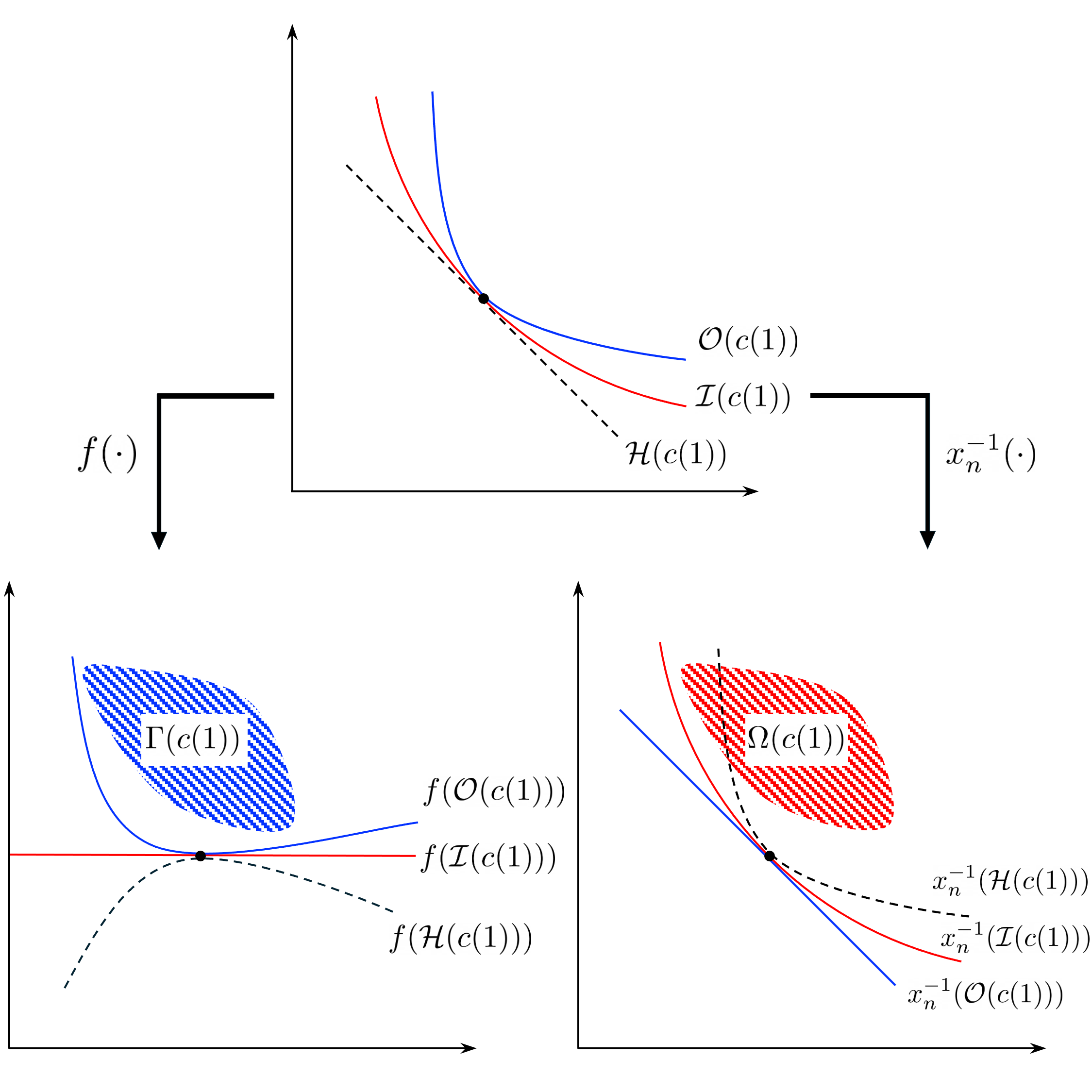}
\caption{The counterparts of the canonical manifolds in the flat domain (bottom-left) and in the normalized domain (bottom-right).}
\label{FigCanonicalManifolds}
\end{figure}

\end{example}
 
The normalized Walrasian demand and the flattening diffeomorphism also lead to the identification of a manifold of Pareto optimal allocations in classical general equilibrium economies \parencite{ArrowDebreu_1954}. Let the economy be a pure exchange with $I\geq2$ households. The preferences satisfy all assumptions of Section \ref{sec2} and the endowments are given by $\omega_{i}\in\mathbb{R}^{L}_{++}$, $1\leq i\leq I$. 

The budget set of household $i$ when facing prices $p\in\mathbb{R}^{L}_{++}$ is given by $B(p,p\omega^{T}_{i})$. The optimal consumption bundle is $x_{ni}(p/p\omega^{T}_{i})$ and, by Proposition \ref{propAreManifolds}, its trade hyperplane is given by $\mathcal{H}(x_{ni}(p/p\omega^{T}_{i}))=\{y\in\mathbb{R}^{L}_{++}\mid py^{T}=p\omega^{T}_{i}\}$. Therefore, $\mathcal{H}(x_{ni}(p/p\omega^{T}_{i}))=\partial B(p,p\omega^{T}_{i})\cap\mathbb{R}^{L}_{++}$. Proposition \ref{propAreManifolds} also implies that $x_{ni}(p/p\omega^{T}_{i})\in \mathcal{O}(\omega_{i})$ so that, in the classical general equilibrium model, the strictly positive frontier of the budget set can be seen as the trade hyperplane of the optimal consumption bundle, and this optimal bundle is an element of the offer hypersurface that passes through the household's endowment. 

An equilibrium for this economy is given by $p\in\mathbb{R}^{L}_{++}$ solving the following market clearing equation $\sum^{I}_{i=1}x_{i}(p,p\omega^{T}_{i})=\sum^{I}_{i=1}\omega_{i}=\omega$, which can also be written as
\begin{eqnarray}\label{eqAlternativeMarketClearing}
\sum^{I}_{i=1}h_{i}((q,1),v_{n}((q,1)/(q,1)\omega^{T}_{i}))=\omega,
\end{eqnarray}
with $q=(p_{1}/p_{L},\ldots,p_{L-1}/p_{L})\in\mathbb{R}^{L-1}_{++}$. The \textit{set of equilibrium allocations} is given by 
\begin{eqnarray*}
    \mathcal{E}(\omega_{1},\ldots,\omega_{I})=\{(h_{1}((q,1),v_{n1}((q,1)/(q,1)\omega^{T}_{1})),\ldots)\in\mathbb{R}^{IL}_{++}\mid q\in\mathbb{R}^{L-1}_{++}\textrm{ satisfies (\ref{eqAlternativeMarketClearing})}\},
\end{eqnarray*}
and one may notice clearly that, in the flat domain, all the counterparts of the equilibrium consumption bundles lie in the same vertical line. I proceed with the following definition of Pareto dominance.
\begin{definition}\label{defParetoOptimality}
    An allocation $y=(y_{1},\ldots,y_{I})\in \mathbb{R}^{IL}_{++}$ \textit{Pareto dominates} another allocation $z=(z_{1},\ldots,z_{I})\in\mathbb{R}^{IL}_{++}$ (denoted by $y\succ_{P}z$), if $u_{i}(y_{i})\geq u_{i}(z_{i})$, $1\leq i\leq I$, with at least one strict inequality, and $\sum^{I}_{i=1}y_{i}=\sum^{I}_{i=1}z_{i}$. An allocation $y\in\mathbb{R}^{IL}_{++}$ is \textit{Pareto optimal} if there is no allocation that Pareto dominates it and $\mathcal{P}\subseteq\mathbb{R}^{IL}_{++}$ is the \textit{set of Pareto optimal allocations}.
\end{definition}

The distinguishing feature of Definition \ref{defParetoOptimality} is that the aggregate endowment of the economy is only \textit{implicitly} defined by the allocation itself, which differentiates it from the classical one that \textit{explicitly} constrains the aggregate resources of the economy, usually by a feasibility criteria (e.g., \textcite[pp. 312-313]{Mas-ColellWhinstonGreen_1995}). 

Definition \ref{defParetoOptimality} leads to some subtleties. For example, there are $y,z\in\mathcal{P}$ (i.e., two Pareto optimal allocations) with $u_{i}(y_{i})>u_{i}(z_{i})$, $1\leq i\leq I$, but this can only happen if $\sum^{I}_{i=1}y_{i}\neq \sum^{I}_{i=1}z_{i}$. Nonetheless, there are good reasons for adopting Definition \ref{defParetoOptimality} and the next result, which is a direct consequence of the First and Second Welfare Theorems, is one of them\footnote{\textcite{Dognini_2025} shows that this definition of Pareto optimality is particularly useful when dealing with overlapping generations economies.}.

\begin{theorem}\label{theoParetoOptimalSet}
    The set of Pareto optimal allocations $\mathcal{P}\subseteq\mathbb{R}^{IL}_{++}$ is a connected manifold without border of dimension $L+I-1\geq3$ and $\mathcal{P}=\{(h_{1}((q,1),u_{1}),\ldots,h_{I}((q,1),u_{I}))\in\mathbb{R}^{IL}_{++}\mid q\in \mathbb{R}^{L-1}_{++}, u_{i}>0,1\leq i\leq I\}=\{(h_{1}((q,1),v_{n1}((q,1)/w_{1})),\ldots)\in\mathbb{R}^{IL}_{++}\mid q\in \mathbb{R}^{L-1}_{++}, w_{i}>0,1\leq i\leq I\}$.
\end{theorem}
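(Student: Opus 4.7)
The plan is to first characterize Pareto optima by gradient proportionality and then transport that characterization to the flat domain via the product of flattening diffeomorphisms. In one direction, given $y\in\mathcal{P}$, I would apply the Second Welfare Theorem to the exchange economy with endowments $\omega_{i}:=y_{i}$ to obtain a supporting price $p\in\mathbb{R}^{L}_{++}$ at which each $y_{i}$ solves the utility maximization problem; the first-order conditions yield $\nabla u_{i}(y_{i})=\lambda_{i}p$ for some $\lambda_{i}>0$. Conversely, if all $\nabla u_{i}(y_{i})$ are proportional to a common $p\in\mathbb{R}^{L}_{++}$, strict quasi-concavity plus the absence of corner solutions makes these first-order conditions sufficient, so $y_{i}=x_{i}(p,p y_{i}^{T})$ and $y$ is a competitive equilibrium of the auxiliary economy with endowments $\omega_{i}=y_{i}$. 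The First Welfare Theorem then yields $y\in\mathcal{P}$, using that any comparison allocation in Definition \ref{defParetoOptimality} already keeps $\sum y_{i}$ fixed.

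Next I would introduce the product map $F:\mathbb{R}^{IL}_{++}\rightarrow\mathbb{R}^{IL}_{++}$, $F(y_{1},\ldots,y_{I})=(f_{1}(y_{1}),\ldots,f_{I}(y_{I}))$, which is a diffeomorphism since each $f_{i}$ is (Definition \ref{defFlatteningDiffeo}). The first $L-1$ coordinates of $f_{i}(y_{i})$ are precisely the normalized marginal substitution rates of $u_{i}$ at $y_{i}$, so gradient proportionality of the $\nabla u_{i}(y_{i})$ is equivalent to equality of these first $L-1$ coordinates across $i$. Hence
\begin{eqnarray*}
F(\mathcal{P})=\{((q,u_{1}),\ldots,(q,u_{I}))\in\mathbb{R}^{IL}_{++}\mid q\in\mathbb{R}^{L-1}_{++},\ u_{i}>0,\ 1\leq i\leq I\},
\end{eqnarray*}
which is the intersection of $\mathbb{R}^{IL}_{++}$ with an affine subspace of dimension $L-1+I=L+I-1\geq3$ (using $L\geq2$ and $I\geq2$), hence a connected manifold without border of that dimension. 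Since $F$ is a diffeomorphism, $\mathcal{P}$ inherits these properties.

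For the two explicit formulae, applying $F^{-1}$ coordinate-by-coordinate and using $f_{i}^{-1}(q,u_{i})=h_{i}((q,1),u_{i})$ from Definition \ref{defFlatteningDiffeo} gives the first parametrization directly. For the second, I would show that, for each fixed $q\in\mathbb{R}^{L-1}_{++}$, the map $w_{i}\mapsto v_{ni}((q,1)/w_{i})$ is a smooth strictly increasing bijection from $\mathbb{R}_{++}$ onto $\mathbb{R}_{++}$ (monotone in wealth, and exhausting $\mathbb{R}_{++}$ by the boundary behavior of $u_{i}$ imposed in Section \ref{sec2}); substituting $u_{i}=v_{ni}((q,1)/w_{i})$ then produces the second parametrization as an equivalent one.

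The main obstacle I expect is the converse direction of the gradient-proportionality characterization: verifying that under strict quasi-concavity and the no-corner-solutions assumption the first-order conditions are genuinely sufficient for $y_{i}$ to be the unique solution of the utility maximization problem, and then that Definition \ref{defParetoOptimality}'s feasibility-free formulation of Pareto dominance is still compatible with invoking the First Welfare Theorem. The latter compatibility is nearly tautological---any $z$ dominating $y$ in the sense of Definition \ref{defParetoOptimality} already satisfies $\sum z_{i}=\sum y_{i}$---but deserves to be stated explicitly in order to bridge the gap with the classical feasibility-based formulation of the theorem.
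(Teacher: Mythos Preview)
Your proposal is correct and follows essentially the same approach as the paper: both directions of the gradient-proportionality characterization are obtained via the First and Second Welfare Theorems applied to the auxiliary economy with endowments $\omega_{i}=y_{i}$, the manifold structure is read off from the image under the (product) flattening diffeomorphism, and the equivalence of the $u_{i}$- and $w_{i}$-parametrizations is established via the bijection $w_{i}\mapsto v_{ni}((q,1)/w_{i})$. The only cosmetic difference is that the paper works directly with the Hicksian parametrization rather than isolating the intermediate gradient-proportionality statement, but the logical content is identical.
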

The main message of Theorem \ref{theoParetoOptimalSet} is that to obtain a well-behaved structure for the set of Pareto optimal allocations, one must forego any resource constraint. The total resources are already implicitly defined by the allocation itself and, by limiting the aggregate resources, one may unnecessarily ``chop'' the manifold of Theorem \ref{theoParetoOptimalSet}. 

A more subtle interpretation of Theorem \ref{theoParetoOptimalSet} is that the well-behaved set $\mathcal{P}$ is the ``locus'' from where all pure exchange economies originate according to a \textit{decentralizing procedure} under the prevailing vector of marginal substitution rates. Seen from the consumption domain, this decentralizing procedure is simply a movement of each households bundle on the respective trade hyperplanes (notice, however, that this ``centrifugal movement'' is contrary to our common notion that the allocations move toward an equilibrium point and, not, away from it). 

Furthermore, let $\omega(y)\in\mathbb{R}^{IL}_{++}$ be a decentralized allocation obtained from $y\in\mathcal{P}$ and $\omega(z)\in\mathbb{R}^{IL}_{++}$ be obtained from $z\in\mathcal{P}$, $z\neq y$. If $\omega(y)=\omega(z)$, then $y,z\in\mathcal{E}(\omega(y))=\mathcal{E}(\omega(z))$, and we have multiple equilibria in the corresponding pure exchange economy.

Theorem \ref{theoParetoOptimalSet} also shows that the dimension of the manifold $\mathcal{P}$ is $L+I-1$ and that it is connected. Moreover, it reveals that one may ``build'' Pareto optimal allocations by simply picking an arbitrary vector of marginal substitution rates $q\in\mathbb{R}^{L-1}_{++}$ and, then, by distributing wealth (which leads us see to points in a single ``ray'' in the normalized domain) or, more directly, utility (which leads us to see points in a single ``vertical ray'' in the flat domain). To proceed with this analysis, the following definition is needed.

\begin{definition}\label{defSetEqAllocationsTransfers}
For $\omega\in\mathbb{R}^{L}_{++}$, the \textit{set of equilibrium allocations with transfers} $\mathcal{G}(\omega)\subseteq\mathcal{R}^{IL}_{++}$ is given by $\mathcal{G}(\omega)=\{(h_{1}((q,1),u_{1}),\ldots)\in\mathbb{R}^{IL}_{++} \mid \sum^{I}_{i=1}h_{i}((q,1),u_{i})= \omega, q\in\mathbb{R}^{L-1}_{++}, u_{i}>0, 1\leq i\leq I\}$.
\end{definition}

Definition \ref{defSetEqAllocationsTransfers} is based on the fact that, in the Second Welfare Theorem, an equilibrium with transfers admits an arbitrary wealth distribution among households (and, equivalently, an arbitrary utility distribution). Furthermore, for a given utility distribution $(u_{1},\ldots,u_{I})\in\mathbb{R}^{I}_{++}$, one can identify the \textit{contract subset} $\mathcal{C}(\omega,u_{1},\ldots,u_{I})\subseteq \mathcal{G}(\omega)$ \parencite[p. 523]{Mas-ColellWhinstonGreen_1995} given by $\mathcal{C}(\omega,u_{1},\ldots,u_{I})=\{(h_{1}(q,s_{1}),\ldots)\in\mathbb{R}^{IL}_{++} \mid \sum^{I}_{i=1}h_{i}(q,s_{i})= \omega, q\in\mathbb{R}^{L-1}_{++},s_{i}\geq u_{i}, 1\leq i\leq I\}$, which is essential for the analysis of stochastic non-tâtonnement processes in Section \ref{sec5}.

\textcite[p. 363]{Balasko_1979} has shown that, under reasonable conditions, the set $\mathcal{G}(\omega)\subseteq\mathcal{P}$ is a connected submanifold of dimension $I-1$. Also, a well-known result by \textcite{Debreu_1970} states that, on an open set of full measure over strictly positive endowment distributions $(\omega_{1},\ldots,\omega_{I})\in\mathbb{R}^{IL}_{++}$, the set $\mathcal{E}(\omega_{1},\ldots,\omega_{I})$ is finite. Therefore, generically, we have $\mathcal{E}(\omega_{1},\ldots,\omega_{I})\subsetneq \mathcal{G}(\omega)\subsetneq \mathcal{P}$ and, when going from $\mathcal{P}$ to $\mathcal{G}(\omega)$, we lose $L$ dimensions and, when going from $\mathcal{G}(\omega)$ to $\mathcal{E}(\omega_{1},\ldots,\omega_{I})$, there is a complete degeneration of the manifold.

There is still another remark to be made about the manifold of Pareto optimal allocations. Consider a classical consumption-loan overlapping generations economy (e.g., \textcite{BalaskoCassShell_1980}) with discrete time periods $t\in\mathbb{N}$ and $L_{t}\in\mathbb{N}$ perishable commodities in each period. There is a single representative household in each generation. Households live for two periods, the one they are born and the next, except for those in $G_{0}$, and have preferences that satisfy all the assumptions in Section \ref{sec2}. 

In these overlapping generations economies, the First Welfare Theorem is no longer valid due to the double infinity of households and dated commodities \parencite{Shell_1971} and, therefore, Theorem \ref{theoParetoOptimalSet} is not applicable. One can, however, identify the set of \textit{weak Pareto optimal allocations}, given by $\mathcal{P}_{\textrm{weak}}=\{y\in\mathbb{R}^{\infty}_{++}\mid \nexists z\in\mathbb{R}^{\infty}_{++}, H\geq1, \textrm{ s.t. } z\succ_{P}y, z_{h}=y_{h}, h\geq H\}$, which identifies the allocations that cannot be Pareto improved by a redistribution of commodities restricted to a finite time horizon.

As noted by \textcite{BalaskoShell_1980}, the set of weak Pareto optimal allocations can also be written as $\mathcal{P}_{\textrm{weak}}=\{(h_{0}(p_{1},u_{0}),h_{1}(p_{1},p_{2},u_{1}),\ldots)\in\mathbb{R}^{\infty}_{++}\mid (p_{1},p_{2},\ldots)\in\mathbb{R}^{\infty}_{++}, u_{t}>0, t\geq0\}$ and this is precisely the ``infinite-dimensional version'' of $\mathcal{P}$. 

However, when passing to this infinite-dimensional setting, the set of Pareto optimal allocations $\mathcal{P}_{\textrm{ogm}}=\{(y_{0},y_{1},\ldots)\in\mathbb{R}^{\infty}_{++}\mid \nexists z\in\mathbb{R}^{\infty}_{++}, z\succ_{P} y\}$ becomes contained in $\mathcal{P}_{\textrm{weak}}\supseteq\mathcal{P}_{\textrm{ogm}}$. Furthermore, if the \textcite{Cass_1972} criterion is valid, then $\mathcal{P}_{\textrm{ogm}}=\{(h_{0}(p_{1},u_{0}),\ldots)\in\mathbb{R}^{\infty}_{++}\mid (p_{1},p_{2},\ldots)\in\mathbb{R}^{\infty}_{++},\sum^{\infty}_{t=1}\Vert p_{t}\Vert^{-1}=+\infty, u_{t}>0, t\geq0\}\subseteq\mathcal{P}_{\textrm{weak}}$.


\section{The Attraction Principle}\label{sec4}

This section derives a principle governing non-tâtonnement trade based on the flattening diffeomorphism $f(\cdot)$. I start with the definition of joint trade paths, which is the main concept embodying non-tâtonnement trade.

\begin{definition}\label{defJointTradePath}
    Let $c_{h}:[0,1]\rightarrow\mathbb{R}^{L}_{++}$, $1\leq h\leq H$, $H\geq2$, be a piecewise smooth function. Then $c(t)=(c_{1}(t),\ldots,c_{H}(t))$ is a \textit{joint trade path} if $u_{h}\circ c_{h}$ is non-decreasing, $1\leq h\leq H$, $u_{h}(c_{h}(1))>u_{h}(c_{h}(0))$, for some $1\leq h\leq H$, and $\sum^{H}_{h=1}c_{h}(t)=\sum^{H}_{h=1}c_{h}(0)$, $t\in [0,1]$.
\end{definition}

Definition \ref{defJointTradePath} states that a joint trade path is characterized by the aggregate resource constraint and the fact that utility never decreases when the trade unfolds during the normalized time period, with the final allocation Pareto dominating the initial one (so that at least one household benefits from ongoing trade). Then, a natural question that arises is whether there is, for a given allocation, some joint trade path departing from it. This motivates the following definition.

\begin{definition}\label{defTradeCompAllocations}
    The \textit{set of trade-compatible allocations} is given by $\mathcal{T}_{A}=\{y\in\mathbb{R}^{HL}_{++}\mid \exists c(\cdot) \textrm{ joint trade path s.t. } c_{h}(0)=y_{h},1\leq h\leq H\}$, and a joint trade path $c(\cdot)$ is \textit{maximal} if $c(1)\notin \mathcal{T}_{A}$.
\end{definition}

Definition \ref{defTradeCompAllocations} is fairly intuitive and identifies the set of trade-compatible allocations as the one that gathers all allocations for which there is some joint trade path departing from it. Furthermore, the concept of \textit{maximality}\footnote{The literature (e.g., \textcite{YuHosoya_2022}) also uses the term \textit{complete} for maximal joint trade paths. Since every joint trade path that reaches its end at $t=1$ is, in this sense, complete, I have adopted the term \textit{maximal} to denote the impossibility of \textit{further} trade.} can be seen as a property of joint trade paths that exhaust all trade possibilities once they reach their end. 

Notice that if $y\in\mathcal{T}_{A}$, then $\sum^{H}_{h=1} y_{h}=\sum^{H}_{h=1} c_{h}(1)$, with $c(\cdot)$ a joint trade path starting at $y$. Therefore, the allocation $c(1)$ Pareto dominates $y$ and $\mathcal{T}_{A}\subseteq \mathcal{P}^{C}$. Also, if $y\notin\mathcal{T}_{A}$, let $p\in\mathbb{R}^{L}_{++}$ be an equilibrium price \parencite{ArrowDebreu_1954} and $c_{h}:[0,1]\rightarrow\mathbb{R}^{L}_{++}$, $1\leq h\leq H$, be given by $c_{h}(t)=y_{h}+t(x_{nh}(p/py^{T}_{h})-y_{h})$. Then, $u_{h}\circ c_{h}$, $1\leq h\leq H$, is non-decreasing and market clearing implies that
\begin{eqnarray*}
    \sum^{H}_{h=1}c_{h}(t)=\sum^{H}_{h=1}y_{h}+t\sum^{H}_{h=1}\biggr(x_{nh}\biggr(\frac{p}{py^{T}_{h}}\biggr)-y_{h}\biggr)=\sum^{H}_{h=1}y_{h}=\sum^{H}_{h=1}c_{h}(0),
\end{eqnarray*}
for $t\in[0,1]$. Definition \ref{defJointTradePath} implies that $u_{h}(c_{h}(1))=u_{h}(c_{h}(0))$, $h\leq 1\leq H$, and, since utilities are strictly quasi-concave, $y_{h}=c_{h}(0)=c_{h}(1)=x_{nh}(p/py^{T}_{h})$. Then $y\in\mathcal{P}$ and, therefore, $\mathcal{T}^{C}_{A}\subseteq\mathcal{P}$. We conclude that $\mathcal{T}_{A}=\mathcal{P}^{C}$ and state this as the following result.
\begin{proposition}\label{propTradeCompAllocations}
    The set of trade-compatible allocations $\mathcal{T}_{A}\subset \mathbb{R}^{HL}_{++}$ is equivalent to the complement of the set of Pareto optimal allocations (i.e., $\mathcal{T}_{A}=\mathcal{P}^{C}$).
\end{proposition}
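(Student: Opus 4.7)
The plan is to prove the equality $\mathcal{T}_A = \mathcal{P}^C$ by establishing both inclusions. The inclusion $\mathcal{T}_A \subseteq \mathcal{P}^C$ is immediate from Definition \ref{defJointTradePath}: given $y \in \mathcal{T}_A$ with a joint trade path $c(\cdot)$ starting at $y$, the allocation $c(1)$ satisfies $\sum^{H}_{h=1} c_h(1) = \sum^{H}_{h=1} y_h$ together with $u_h(c_h(1)) \geq u_h(y_h)$ for all $h$ and strict inequality for at least one $h$, so $c(1) \succ_P y$ and thus $y \notin \mathcal{P}$.

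For the reverse inclusion $\mathcal{P}^C \subseteq \mathcal{T}_A$, equivalently $\mathcal{T}_A^C \subseteq \mathcal{P}$, I would take $y \in \mathbb{R}^{HL}_{++} \setminus \mathcal{T}_A$ and invoke the classical existence result of \textcite{ArrowDebreu_1954} to obtain an equilibrium price $p \in \mathbb{R}^L_{++}$ for the pure exchange economy with endowments $(y_1,\ldots,y_H)$. Define the candidate path $c_h(t) = y_h + t(x_{nh}(p/py_h^T) - y_h)$, which is smooth and linear in $t$. Market clearing yields $\sum^{H}_{h=1} c_h(t) = \sum^{H}_{h=1} y_h$ for all $t \in [0,1]$, while utility maximization at prices $p$ combined with strict quasi-concavity of $u_h$ forces $u_h \circ c_h$ to be non-decreasing: each $c_h(t)$ lies in the convex budget set, hence $u_h(c_h(t)) \leq u_h(x_{nh}(p/py_h^T))$, and a convex-combination argument via strict quasi-concavity yields the required monotonicity along the segment.

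The main obstacle is that the constructed $c(\cdot)$ could fail to qualify as a joint trade path only because \emph{no} strict utility increase occurs along it, which is precisely what $y \notin \mathcal{T}_A$ dictates. In that case $u_h(c_h(1)) = u_h(y_h)$ for every $h$, so each $y_h$ attains the same utility as the budget-set maximizer $x_{nh}(p/py_h^T)$; strict quasi-concavity makes this maximizer unique within the budget set, forcing $y_h = x_{nh}(p/py_h^T)$. Hence $(y_1,\ldots,y_H)$ is itself an Arrow-Debreu equilibrium allocation for its own endowments, and the First Welfare Theorem yields $y \in \mathcal{P}$, completing the argument.
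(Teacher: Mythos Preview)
Your proposal is correct and follows essentially the same route as the paper: the forward inclusion comes straight from the definition of a joint trade path, and for the reverse you invoke Arrow--Debreu existence, build the linear path $c_h(t)=y_h+t(x_{nh}(p/py_h^T)-y_h)$, use market clearing for feasibility, observe that $y\notin\mathcal{T}_A$ forces $u_h(c_h(1))=u_h(y_h)$ for every $h$, and then conclude $y_h=x_{nh}(p/py_h^T)$ via strict quasi-concavity so that the First Welfare Theorem yields $y\in\mathcal{P}$. The only cosmetic difference is that you spell out the monotonicity of $u_h\circ c_h$ via a convexity argument, whereas the paper simply asserts it (relying in spirit on Lemma~\ref{lemmaLinearIncreasingUtility}).
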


Proposition \ref{propTradeCompAllocations} states that the set of Pareto optimal allocation is, precisely, the set of allocations which are not compatible with trade. Stated otherwise, if an allocation is not Pareto optimal, then there is a joint trade path departing from it. Proposition \ref{propTradeCompAllocations} also implies that every maximal joint trade path must reach the Pareto optimal set at its end. 

For the next definition, given an allocation $y\in\mathbb{R}^{HL}_{++}$, $H\geq2$, let $\mathcal{D}(y)=\{z\in\mathbb{R}^{HL}_{++}\mid u_{h}(z_{h})\geq u_{h}(y_{h}),1\leq h\leq H, \sum^{H}_{h=1}z_{h}=\sum^{H}_{h=1}y_{h}\}$. Also, let $\mathcal{D}_{h}(y)=\pi_{h}(\mathcal{D}(y))=\{\tilde{z}\in\mathbb{R}^{L}_{++}\mid \exists (z_{1},\ldots,z_{H})\in\mathcal{D}(y),z_{h}=\tilde{z}\}$, $1\leq h\leq H$. Therefore, $\mathcal{D}(y)$ is non-empty and gathers all feasible allocations that do not decrease the utility of any household, with $\pi_{h}(\mathcal{D}(y))$ the projection leading to all the corresponding consumption bundles of household $1\leq h\leq H$.

Furthermore, for $1\leq h\leq H$, one may notice that $D_{h}(y)$ is a convex and closed set in $\mathcal{R}^{HL}_{++}$, and if all households' indifference curves are strictly contained in $\mathbb{R}^{L}_{++}$, then $D_{h}(y)$ is compact. I proceed with the following definition.   

\begin{definition}\label{defDominantFloorCoord}
 Given an allocation $y\in\mathbb{R}^{HL}_{++}$, $H\geq2$, the \textit{set of dominant floor-coordinates} is given by $\mathcal{F}(y)=\cup^{H}_{h=1}\pi_{L-1}(f_{h}(\mathcal{D}_{h}(y)))\subseteq\mathbb{R}^{L-1}_{++}$. 
\end{definition}

Definition \ref{defDominantFloorCoord} establishes a relation between the consumption and the flat domains. Notice that $\mathcal{D}_{h}(y)$ lies in the consumption domain, since it gathers households' bundles that are part of some allocation that Pareto dominates $y$\footnote{Allocation $y\in\mathcal{D}(y)$  clearly does not Pareto dominates itself, but we ignore this minor technical imprecision.}. Applying the flattening diffeomorphism to $\mathcal{D}_{h}(y)$, one obtains its counterpart in the flat domain, $f_{h}(\mathcal{D}_{h}(y))$. 

This counterpart can then be projected on the ``floor'' of the flat domain by taking only the first $L-1$ coordinates of its elements, $\pi_{L-1}(f_{h}(\mathcal{D}_{h}(y)))$. The union of all these projections, $\mathcal{F}(y)$, is the set that lies in the ``floor'' of the flat domain and gathers information on the marginal substitution rates of all allocations in the consumption domain that Pareto dominate $y$, and this is why $\mathcal{F}(y)$ is called the \textit{set of dominant floor-coordinates}.   

In order to derive the Attraction Principle, let $c(\cdot)$ be a maximal joint trade path. First, notice that $c_{h}(1)\in\mathcal{D}_{h}(c(t_{2}))\subseteq\mathcal{D}_{h}(c(t_{1}))$, for $0\leq t_{1}\leq t_{2}\leq 1$, $1\leq h\leq H$. Also, since $\pi_{L-1}\circ f_{h}$ is continuous and $\mathcal{D}_{h}(c(t))$ is convex, $0\leq t\leq 1$, $1\leq h\leq H$, then $\pi_{L-1}(f_{h}(D_{h}(c(t))))$ is arc-connected. Furthermore, if all households' indifference curves are strictly contained in $\mathbb{R}^{L}_{++}$, then $\mathcal{D}_{h}(c(t))$ is compact, $0\leq t\leq 1$, $1\leq h\leq H$, and, by continuity, $\pi_{L-1}(f_{h}(D_{h}(c(t))))$ is also compact. 

Proposition \ref{propTradeCompAllocations} implies that there is $p\in\mathbb{R}^{L}_{++}$ such that $c_{h}(1)=x_{nh}(p/pc_{h}(1)^{T})$, $1\leq h\leq H$. Let $q=(p_{1}/p_{L},\ldots,p_{L-1}/p_{L})$, so that, for $0\leq t_{1}\leq t_{2}\leq 1$, $1\leq h\leq H$,
\begin{eqnarray*}
    \pi_{L-1}(f_{h}(c_{h}(1)))=q\in\pi_{L-1}(f_{h}(\mathcal{D}_{h}(c(t_{2}))))\subseteq\pi_{L-1}(f_{h}(\mathcal{D}_{h}(c(t_{1})))),
\end{eqnarray*}
and $\pi_{L-1}(f_{h}(\mathcal{D}_{h}(c(1))))=\{q\}$. Therefore, $\mathcal{F}(c(t_{2}))\subseteq \mathcal{F}(c(t_{1}))$, $0\leq t_{1}\leq t_{2}\leq 1$, and $q\in\cap^{H}_{h=1}\pi_{L-1}(f_{h}(\mathcal{D}_{h}(c(t))))$, which implies that $\mathcal{F}(c(t))=\cup^{H}_{h=1}\pi_{L-1}(f_{h}(\mathcal{D}_{h}(c(t))))$ is arc-connected. We are now able to state the Attration Principle.

\begin{theorem}[Attraction Principle]\label{theoAttractionPrinciple}
    Let $c(\cdot)$ be a maximal joint trade path and $q=\pi_{L-1}(f_{1}(c_{1}(1)))\in\mathbb{R}^{L-1}_{++}$. Then, $\{\mathcal{F}(c(t))\}_{0\leq t\leq 1}$ is a non-increasing net of arc-connected sets, with $\mathcal{F}(c(1))=\{q\}$. If all households' indifference curves are strictly contained in $\mathbb{R}^{L}_{++}$, then $\{\mathcal{F}(c(t))\}_{0\leq t\leq 1}$ is a non-increasing net of arc-connected compact sets.
\end{theorem}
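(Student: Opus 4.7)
The plan is to treat the three claims (monotonicity, arc-connectedness, and the collapse $\mathcal{F}(c(1))=\{q\}$) in turn, drawing largely on the preparatory discussion immediately preceding the theorem, and then handle compactness as an addendum.

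For monotonicity, suppose $0 \leq t_1 \leq t_2 \leq 1$. Definition \ref{defJointTradePath} gives $u_h(c_h(t_2)) \geq u_h(c_h(t_1))$ for each $h$ while $\sum_h c_h(t_1) = \sum_h c_h(t_2)$, so $\mathcal{D}(c(t_2)) \subseteq \mathcal{D}(c(t_1))$ and hence $\mathcal{D}_h(c(t_2)) \subseteq \mathcal{D}_h(c(t_1))$. Pushing through $\pi_{L-1} \circ f_h$ (which preserves inclusion) and taking unions over $h$ yields $\mathcal{F}(c(t_2)) \subseteq \mathcal{F}(c(t_1))$.

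For arc-connectedness, I would note that $\mathcal{D}(c(t))$ is convex as an intersection of upper-contour sets (convex by quasi-concavity of each $u_h$) with an affine constraint, so each $\mathcal{D}_h(c(t))$, being the coordinate projection of a convex set, is itself convex and hence arc-connected. Continuity of $\pi_{L-1} \circ f_h$ makes each $\pi_{L-1}(f_h(\mathcal{D}_h(c(t))))$ arc-connected. To stitch the $H$ pieces together, the key observation is that maximality plus Proposition \ref{propTradeCompAllocations} place $c(1)$ in $\mathcal{P}$, so there is $p \in \mathbb{R}^L_{++}$ with $c_h(1) = x_{nh}(p / p c_h(1)^T)$ for every $h$. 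Setting $q=(p_1/p_L,\ldots,p_{L-1}/p_L)$, Definition \ref{defFlatteningDiffeo} gives $\pi_{L-1}(f_h(c_h(1))) = q$ for every $h$, and since $c_h(1) \in \mathcal{D}_h(c(t))$ throughout, $q$ is a common point of all $H$ pieces, forcing arc-connectedness of their union.

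The delicate step will be $\mathcal{F}(c(1)) = \{q\}$, which I would reduce to $\mathcal{D}(c(1)) = \{c(1)\}$. Pick any $z \in \mathcal{D}(c(1))$ and note that Pareto optimality of $c(1)$ forces $u_h(z_h) = u_h(c_h(1))$ for every $h$ (else $z$ would Pareto dominate $c(1)$). If $z \neq c(1)$, the resource constraint yields at least two indices with $z_h \neq c_h(1)$. For any $\lambda \in (0,1)$, strict quasi-concavity of each such $u_h$ along the shared level set gives $u_h(\lambda z_h + (1-\lambda) c_h(1)) > u_h(c_h(1))$, while coordinates with $z_h = c_h(1)$ retain their utility, and the aggregate resources are unchanged. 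This is a Pareto improvement of $c(1)$, contradicting Pareto optimality, so $\mathcal{D}(c(1)) = \{c(1)\}$ and the claim follows.

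Finally, for compactness under the hypothesis that every indifference hypersurface is strictly contained in $\mathbb{R}^L_{++}$, each upper-contour set $\{z_h \in \mathbb{R}^L_{++} : u_h(z_h) \geq u_h(c_h(t))\}$ is closed in $\mathbb{R}^L_{++}$ and stays a positive distance from its boundary, while the aggregate resource constraint bounds it above; hence $\mathcal{D}_h(c(t))$ is compact in $\mathbb{R}^L_{++}$. Continuity of $\pi_{L-1} \circ f_h$ and finiteness of the union over $h$ pass compactness on to $\mathcal{F}(c(t))$. The hardest part of the whole argument is the uniqueness step at $t=1$: monotonicity and arc-connectedness flow mechanically from the preparatory lemmas, but pinning $\mathcal{F}(c(1))$ down to a single point requires a genuine appeal to strict quasi-concavity via the convex-combination trick above.
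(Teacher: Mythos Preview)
Your proof is correct and follows the same line the paper takes in the paragraphs immediately preceding the theorem: monotonicity from the nesting $\mathcal{D}_h(c(t_2))\subseteq\mathcal{D}_h(c(t_1))$, arc-connectedness of each piece from convexity of $\mathcal{D}_h(c(t))$ plus continuity of $\pi_{L-1}\circ f_h$, gluing the pieces via the common point $q$ supplied by Proposition~\ref{propTradeCompAllocations}, and compactness from the boundary hypothesis. The one place you go beyond the paper is the collapse $\mathcal{F}(c(1))=\{q\}$: the paper simply asserts $\pi_{L-1}(f_h(\mathcal{D}_h(c(1))))=\{q\}$, whereas you supply the underlying reason $\mathcal{D}(c(1))=\{c(1)\}$ via the strict-quasi-concavity convex-combination argument, which is a genuine improvement in rigor over the paper's sketch.
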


The Attraction Principle states that, when non-tâtonnement trade unfolds according to a maximal joint trade path, one will always observe, in the flat domain, the same sort of dynamics: the counterparts of the consumption paths $f_{h}\circ c_{h}$ will move upward (or, at least, not downward) and their projections on the ``floor'' of the flat domain will lie within a non-increasing net of arc-connected sets that shrinks overtime towards the point that represents the final common marginal substitution rates.  

\begin{figure}[H]
\centering
\includegraphics[scale=0.8]{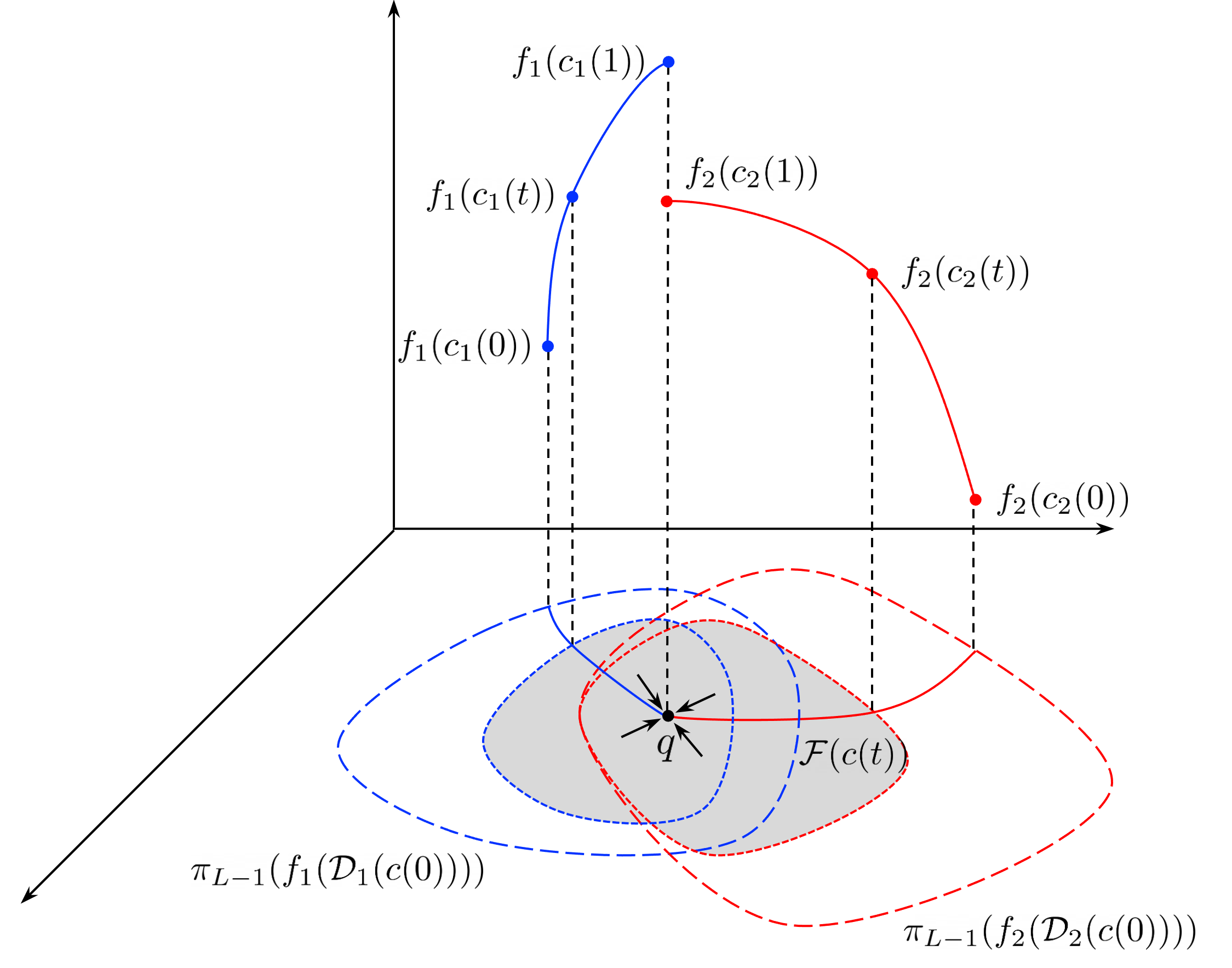}
\caption{Graphical depiction of the Attraction Principle.}
\label{FigAttractionPrinciple}
\end{figure}

\section{Stochastic Non-tâtonnement Processes}\label{sec5}

This section builds a model of stochastic non-tâtonnement trade based on the Attraction Principle. I start with the following definition.

\begin{definition}\label{defLinearTradePath}
The \textit{linear trade path} under prices $p\in\mathbb{R}^{L}_{++}$ starting at $y\in\mathbb{R}^{L}_{++}$, $c_{y,p}:[0,1]\rightarrow\mathbb{R}^{L}_{++}$, is given by $c_{y,p}(t)=y+t(x_{n}(p/py^{T})-y)$.
\end{definition}

The linear trade path represents an intuitive way to model non-tâtonnement trade, based on two premises: (i) households trade to obtain their most desired bundle given current prices; and (ii) they do it through the shortest possible trade path. The next lemma reveals that this shortest possible trade path increases utility in every step of the way, thus furnishing a ``local'' justification for the adoption of linear trade paths as the fundamental non-tâtonnement trade paradigm.

\begin{lemma}\label{lemmaLinearIncreasingUtility}
If $c_{y,p}(\cdot)$ is the linear trade path under prices $p\in\mathbb{R}^{L}_{++}$ starting at $y\in\mathbb{R}^{L}_{++}$, $p\nparallel x_{n}^{-1}(y)$, then $u(c_{y,p}(\cdot))$ is strictly increasing.
\end{lemma}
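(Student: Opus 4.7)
The plan is to exploit two structural facts: first, the linear trade path $c_{y,p}(t)=y+t(x^{\ast}-y)$, where $x^{\ast}:=x_{n}(p/py^{T})$, lies entirely on the budget hyperplane $\{c\in\mathbb{R}^{L}_{++}\mid pc^{T}=py^{T}\}$; and second, $x^{\ast}$ is the unique maximizer of $u$ on the budget set $B(p,py^{T})$, so every other point on that hyperplane gives strictly smaller utility. The strict monotonicity of $u(c_{y,p}(\cdot))$ will then follow from strict quasi-concavity applied to the segment $[c_{y,p}(t_{1}),x^{\ast}]$.

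More precisely, I would first apply Lemma \ref{lemmaBasicIdentities}(i) to the normalized price $p/py^{T}$ to get $(p/py^{T})\,x^{\ast T}=1$, i.e.\ $p\,x^{\ast T}=py^{T}$, so $p(x^{\ast}-y)^{T}=0$ and therefore $pc_{y,p}(t)^{T}=py^{T}$ for every $t\in[0,1]$. Next I would show $x^{\ast}\neq y$: if $x^{\ast}=y$, then $y$ is the Walrasian demand at $(p,py^{T})$, so the first-order conditions force $p$ to be proportional to $\nabla u(y)$, hence to $x_{n}^{-1}(y)=\nabla u(y)/(\nabla u(y)\,y^{T})$ by Proposition \ref{propInverseDemand}, contradicting $p\nparallel x_{n}^{-1}(y)$.

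Now I would combine these facts with strict quasi-concavity. Uniqueness of the Walrasian demand together with strict quasi-concavity implies $u(c)<u(x^{\ast})$ for every $c\neq x^{\ast}$ on the budget hyperplane; in particular $u(c_{y,p}(t))<u(x^{\ast})$ for all $t\in[0,1)$. For $0\leq t_{1}<t_{2}\leq 1$, write
\begin{equation*}
c_{y,p}(t_{2})=(1-s)\,c_{y,p}(t_{1})+s\,x^{\ast},\qquad s=\frac{t_{2}-t_{1}}{1-t_{1}}\in(0,1],
\end{equation*}
which is a direct computation using $x^{\ast}-c_{y,p}(t_{1})=(1-t_{1})(x^{\ast}-y)$. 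If $s=1$, then $c_{y,p}(t_{2})=x^{\ast}$ and $u(c_{y,p}(t_{2}))=u(x^{\ast})>u(c_{y,p}(t_{1}))$. If $s\in(0,1)$, strict quasi-concavity at a non-trivial convex combination yields $u(c_{y,p}(t_{2}))>\min\{u(c_{y,p}(t_{1})),u(x^{\ast})\}=u(c_{y,p}(t_{1}))$. Either way, $u\circ c_{y,p}$ is strictly increasing.

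The argument is essentially a packaging of standard consumer-theoretic facts, so I don't anticipate a major obstacle. The only delicate step is guaranteeing $x^{\ast}\neq y$ (which is where the hypothesis $p\nparallel x_{n}^{-1}(y)$ enters) and then being careful to handle the endpoint case $t_{2}=1$ separately from the interior case, since strict quasi-concavity only gives a strict inequality for non-trivial convex combinations.
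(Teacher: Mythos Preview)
Your proof is correct, but it takes a different route from the paper's own argument. The paper computes the derivative $(u\circ c_{y,p})'(t)$ explicitly: writing $p(t)=x_{n}^{-1}(c_{y,p}(t))$, it obtains
\[
(u\circ c_{y,p})'(t)=\frac{\lambda_{n}(p(t))\,p(t)\bigl(x_{n}(p/pc_{y,p}(t)^{T})-x_{n}(p(t))\bigr)^{T}}{1-t},
\]
and then shows $p(t)\,x_{n}(p/pc_{y,p}(t)^{T})^{T}>1$ by a revealed-preference step (since $x_{n}(p/pc_{y,p}(t)^{T})$ yields strictly higher utility than $c_{y,p}(t)$, it cannot be affordable at the normalized price $p(t)$). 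So the paper's proof is differential and uses revealed preference, whereas you argue directly from strict quasi-concavity on the segment $[c_{y,p}(t_{1}),x^{\ast}]$.

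Your approach is more elementary and self-contained: it does not need smoothness of $u$, only strict quasi-concavity and the uniqueness of the Walrasian demand. The paper's approach, on the other hand, yields the sign of the derivative together with an explicit formula, which fits the paper's later differential machinery (marginal substitution rates, attractiveness, etc.). Both rely on the same two preliminary observations---that the path stays on the budget hyperplane and that the hypothesis $p\nparallel x_{n}^{-1}(y)$ forces $x^{\ast}\neq y$---so the difference is purely in how the strict increase is extracted from those facts.
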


With Definition \ref{defJointTradePath} in mind, once we have the concept of linear trade paths, it is natural to proceed with the definition of the corresponding joint paths.

\begin{definition}\label{defTradeSpeedTradePrices}
Let $y\in\mathbb{R}^{HL}_{++}$ be an allocation. Also, let $c_{y_{h},p}(\cdot)$ be the linear trade path under prices $p\in\mathbb{R}^{L}_{++}$ starting at $y_{h}$, $1\leq h\leq H$. Then, the \textit{set of relative trade speeds} under prices $p\in\mathbb{R}^{L}_{++}$ starting at $y\in\mathbb{R}^{HL}_{++}$, $\mathcal{S}(y,p)\subset[0,1]^{H}$, is given by
\begin{eqnarray*}
    \mathcal{S}(y,p)=\{\sigma\in[0,1]^{H}\mid\sum^{H}_{h=1}\sigma_{h}c^{\prime}_{y_{h},p}(0)=0, \sum^{H}_{h=1}\sigma_{h}\Vert c^{\prime}_{y_{h},p}(0)\Vert>0\}.
\end{eqnarray*}
For $\sigma\in\mathcal{S}(y,p)$, the corresponding \textit{linear joint trade path} $c_{y,p,\sigma}:[0,1]\rightarrow\mathbb{R}^{HL}_{++}$ is given by $c_{y,p,\sigma}(t)=(c_{y_{1},p}(\sigma_{1}t),\ldots,c_{y_{H},p}(\sigma_{H}t))$.
\end{definition}

Definition \ref{defTradeSpeedTradePrices} states the main concept leading to stochastic non-tâtonnement processes. First, notice that if $\sigma\in\mathcal{S}(y,p)$, then
\begin{eqnarray*}
    \sum^{H}_{h=1}\sigma_{h}c^{\prime}_{y_{h},p}(0)=\biggr(\sum^{H}_{h=1}c_{y_{h},p}(\sigma_{h}t)\biggr)^{\prime}=0\implies \sum^{H}_{h=1}c_{y_{h},p}(\sigma_{h}t)=\sum^{H}_{h=1}c_{y_{h},p}(0),
\end{eqnarray*}
for $t\in[0,1]$, and, therefore, the ``time scaled'' linear trade paths are compatible with the aggregate resources of the economy. Since $\sum^{H}_{h=1}\sigma_{h}\Vert c^{\prime}_{y_{h},p}(0)\Vert=\sum^{H}_{h=1}\Vert (c_{y_{h},p}(\sigma_{h}t))^{\prime}\Vert>0$, for $t\in[0,1]$, trade truly takes place in this time interval and $c_{y,p,\sigma}(\cdot)$ is, in fact, a joint trade path due to Lemma \ref{lemmaLinearIncreasingUtility}.

In this setting, the value of $\sigma_{h}\in[0,1]$, $1\leq h\leq H$, can be seen as the percentage of the trade path that the household is able to traverse during the reference time interval in which the price remains unchanged. This is why the set $\mathcal{S}(y,p)\in [0,1]^{H}$ is called the \textit{set of relative trade speeds}, since it measures, for a given stable price level in a reference time interval, how fast households trade. However, for any chosen $p\in\mathbb{R}^{L}_{++}$, there is no guarantee that $\mathcal{S}(y,p)\neq \emptyset$, which leads us to the following definition.

\begin{definition}\label{defTradeCompPrices}
Let $y\in\mathbb{R}^{HL}_{++}$ be an allocation. The \textit{set of trade-compatible normalized prices} after $y\in\mathbb{R}^{HL}_{++}$, $\mathcal{T}(y)\subset\mathbb{R}^{L-1}_{++}$, is given by $\mathcal{T}(y)=\{q\in\mathbb{R}^{L-1}_{++}\mid \mathcal{S}(y,(q,1))\neq \emptyset\}$.
\end{definition}

Definition \ref{defTradeCompPrices} simply identifies, for a given allocation $y\in\mathbb{R}^{HL}_{++}$, the set of normalized prices that yield linear joint trade paths. The next result follows the same lines as Proposition \ref{propTradeCompAllocations} and is a direct consequence of the equilibrium existence result of \textcite{ArrowDebreu_1954} and the First Welfare Theorem. 

\begin{proposition}\label{propTradeCompPrices}
Let $y\in\mathbb{R}^{HL}_{++}$ be an allocation. Then $\mathcal{T}(y)=\emptyset$ if, and only if, $y\in\mathcal{P}$.
\end{proposition}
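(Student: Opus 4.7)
The plan is to prove both directions of the biconditional by contrapositive, assembling three ingredients already at hand: Proposition \ref{propTradeCompAllocations} (which identifies $\mathcal{P}^{C}$ with the set of trade-compatible allocations $\mathcal{T}_{A}$), Lemma \ref{lemmaLinearIncreasingUtility} (which guarantees that a non-degenerate linear trade path strictly increases utility), and the classical Arrow-Debreu existence theorem coupled with the First Welfare Theorem.

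For the direction $\mathcal{T}(y)\neq\emptyset \Rightarrow y\notin\mathcal{P}$, I would pick $q\in\mathcal{T}(y)$ and $\sigma\in\mathcal{S}(y,(q,1))$ and verify that the associated linear joint trade path $c_{y,(q,1),\sigma}(\cdot)$ from Definition \ref{defTradeSpeedTradePrices} satisfies the three requirements of Definition \ref{defJointTradePath}. Aggregate resource preservation $\sum_{h}c_{y_{h},(q,1)}(\sigma_{h}t)=\sum_{h}y_{h}$ follows by integrating the first defining condition of $\mathcal{S}(y,(q,1))$ (the argument already sketched just after Definition \ref{defTradeSpeedTradePrices}); monotonicity of each $u_{h}\circ c_{y_{h},(q,1)}(\sigma_{h}\cdot)$ is either trivial (when $\sigma_{h}c'_{y_{h},(q,1)}(0)=0$) or strict by Lemma \ref{lemmaLinearIncreasingUtility}; and a strict utility gain for at least one household follows from the second condition $\sum_{h}\sigma_{h}\Vert c'_{y_{h},(q,1)}(0)\Vert>0$, which forces some index $h$ with $\sigma_{h}>0$ and $(q,1)\nparallel x_{nh}^{-1}(y_{h})$. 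Hence $y\in\mathcal{T}_{A}$, and Proposition \ref{propTradeCompAllocations} delivers $y\notin\mathcal{P}$.

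For the reverse direction $y\notin\mathcal{P}\Rightarrow\mathcal{T}(y)\neq\emptyset$, I would view $(y_{1},\ldots,y_{H})$ as the endowment profile of a pure exchange economy and apply Arrow-Debreu existence to obtain $p\in\mathbb{R}^{L}_{++}$ satisfying market clearing $\sum_{h}x_{nh}(p/py_{h}^{T})=\sum_{h}y_{h}$. Setting $(q,1):=p/p_{L}$ and invoking the degree-zero homogeneity of $x_{nh}$ in its argument preserves market clearing in the form $\sum_{h}c'_{y_{h},(q,1)}(0)=0$. The First Welfare Theorem rules out the degenerate case in which every $c'_{y_{h},(q,1)}(0)$ vanishes (that would make $y$ itself an equilibrium allocation and hence Pareto optimal), so $\sigma=(1,\ldots,1)\in[0,1]^{H}$ belongs to $\mathcal{S}(y,(q,1))$ and $q\in\mathcal{T}(y)$.

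The main obstacle is purely bookkeeping: checking that the normalization by $p_{L}$ genuinely lands inside $\mathcal{S}$ (both the zero-sum and the strict-positivity conditions must survive the scaling) and confirming that the hypothesis ``$p\nparallel x_{n}^{-1}(y)$'' of Lemma \ref{lemmaLinearIncreasingUtility} is automatic from $\Vert c'_{y_{h},(q,1)}(0)\Vert>0$. Both reduce to the observation that $x_{n}$ is a diffeomorphism (Proposition \ref{propInverseDemand}), with $c_{y,p}(\cdot)$ degenerate precisely when $p$ lies on the ray through $x_{n}^{-1}(y)$, so no substantive analytic work is required beyond invoking the named prior results.
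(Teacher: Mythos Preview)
Your proposal is correct and follows essentially the same approach as the paper: both directions hinge on Lemma \ref{lemmaLinearIncreasingUtility} for the forward implication and on Arrow--Debreu existence together with the First Welfare Theorem (packaged in the paper as Theorem \ref{theoParetoOptimalSet}) for the reverse. The only cosmetic differences are that you route the forward direction through Proposition \ref{propTradeCompAllocations} rather than exhibiting the Pareto-dominating endpoint directly, and you argue the reverse direction in direct rather than contrapositive form; neither changes the substance.
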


Propositions \ref{propTradeCompAllocations} and \ref{propTradeCompPrices} imply that $y\in\mathcal{T}_{A}$ if, and only if, $\mathcal{T}(y)\neq\emptyset$, meaning that the set of trade compatible allocations is, precisely, the set of allocations for which there exists a linear joint trade path departing from it. Furthermore, after Definitions \ref{defTradeCompAllocations}, \ref{defTradespeedTradePrices} and \ref{defTradeCompPrices}, one may notice that, for an allocation $y\in \mathcal{T}_{A}$, the classical equilibrium existence result from \textcite{ArrowDebreu_1954} can be seen as a statement on the existence of $q\in\mathcal{T}(y)$ furnishing a joint linear trade path starting at $y$, which is also maximal by the First Welfare Theorem.

The next result relates linear joint trade paths and the corresponding sets of dominant floor coordinates.

\begin{proposition}\label{propTradeUnfold}
    Let $y\in\mathbb{R}^{LH}_{++}$, $H\geq2$, be an allocation, $q\in\mathcal{T}(y)$, $p=(q,1)\in\mathbb{R}^{L}_{++}$, and $\sigma\in\mathcal{S}(y,p)$. Then, $\{\mathcal{F}(c_{y,p,\sigma}(t))\}_{0\leq t\leq 1}$ is a non-increasing net of arc-connected sets, $q\in\mathcal{T}(c_{y,p,\sigma}(t))\subseteq \mathcal{F}(c_{y,p,\sigma}(t))$, $t\in[0,1)$, and $q\in\mathcal{F}(c_{y,p,\sigma}(1))$. If all households' indifference curves are strictly contained in $\mathbb{R}^{L}_{++}$, then  $\{\mathcal{F}(c_{y,p,\sigma}(t))\}_{0\leq t\leq 1}$ is a non-increasing net of arc-connected compact sets.
\end{proposition}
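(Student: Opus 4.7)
The key observation is that $c_{y,p,\sigma}(\cdot)$ is itself a joint trade path (as noted after Definition \ref{defTradeSpeedTradePrices}), so I can reuse much of the machinery developed for the Attraction Principle. A technical identity I would rely on repeatedly is that $p\, c_{y_h,p}(\sigma_h t)^{T} = p\, y_h^{T}$ everywhere along the linear trade path, which yields $c'_{c_{y,p,\sigma}(t)_h,\, p}(0) = (1 - \sigma_h t)\, c'_{y_h,p}(0)$ for every $h$ and $t \in [0,1]$.

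For the non-increasing property of $\{\mathcal{F}(c_{y,p,\sigma}(t))\}_{0\leq t\leq 1}$, I would mimic the argument in the paragraph preceding Theorem \ref{theoAttractionPrinciple}: since utilities are non-decreasing along a joint trade path and aggregate resources are preserved, every allocation in $\mathcal{D}(c_{y,p,\sigma}(t_2))$ lies in $\mathcal{D}(c_{y,p,\sigma}(t_1))$ whenever $t_1 \leq t_2$, and the inclusions propagate to $\mathcal{D}_h$ and hence to $\mathcal{F}$. For the arc-connectedness of each $\mathcal{F}(c_{y,p,\sigma}(t))$, each piece $\pi_{L-1}(f_h(\mathcal{D}_h(c_{y,p,\sigma}(t))))$ is the continuous image of the convex set $\mathcal{D}_h(c_{y,p,\sigma}(t))$ and is therefore arc-connected; to arc-connect the union I would produce a common point by applying the existence theorem of \textcite{ArrowDebreu_1954} to the pure-exchange economy with endowments $c_{y,p,\sigma}(t)$, since the resulting equilibrium allocation lies in $\mathcal{D}(c_{y,p,\sigma}(t))$ by the First Welfare Theorem and has the same MRS $q^{\star}$ across all households, placing $q^{\star}$ in every $\pi_{L-1}(f_h(\mathcal{D}_h(\cdot)))$ simultaneously.

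For $q \in \mathcal{T}(c_{y,p,\sigma}(t))$ with $t \in [0,1)$, I would exhibit an explicit element of $\mathcal{S}(c_{y,p,\sigma}(t),p)$ by setting $\hat{\sigma}_h := \alpha\, \sigma_h / (1 - \sigma_h t)$ for a small $\alpha > 0$; the identity above converts the balance $\sum \sigma_h c'_{y_h,p}(0) = 0$ (which holds because $\sigma \in \mathcal{S}(y,p)$) into $\sum \hat{\sigma}_h c'_{c_{y,p,\sigma}(t)_h,p}(0) = 0$, and $\alpha \leq 1 - t$ keeps $\hat{\sigma} \in [0,1]^{H}$. For the inclusion $\mathcal{T}(c_{y,p,\sigma}(t)) \subseteq \mathcal{F}(c_{y,p,\sigma}(t))$, given $q' \in \mathcal{T}(c_{y,p,\sigma}(t))$ with witness $\sigma' \in \mathcal{S}(c_{y,p,\sigma}(t),(q',1))$, I would rescale $\sigma'$ by $1/\max_h \sigma'_h$ so that some coordinate $h^{\star}$ equals $1$; the corresponding linear joint trade path delivers an allocation in $\mathcal{D}(c_{y,p,\sigma}(t))$ whose $h^{\star}$-th component is $x_{nh^{\star}}((q',1)/(q',1)c_{y,p,\sigma}(t)_{h^{\star}}^{T})$, which has MRS exactly $q'$.

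The claim $q \in \mathcal{F}(c_{y,p,\sigma}(1))$ splits by cases: if $\sigma_h = 1$ for some $h$, then $c_{y,p,\sigma}(1)_h = x_{nh}(p/py_h^{T})$ has MRS $q$ and trivially lies in $\mathcal{D}_h(c_{y,p,\sigma}(1))$; otherwise $1 - \sigma_h > 0$ for every $h$ and the construction of $\hat{\sigma}$ above can be repeated at $t = 1$, yielding $q \in \mathcal{T}(c_{y,p,\sigma}(1)) \subseteq \mathcal{F}(c_{y,p,\sigma}(1))$. The compactness addendum follows from the observation (noted before Definition \ref{defDominantFloorCoord}) that each $\mathcal{D}_h(c_{y,p,\sigma}(t))$ is compact when all indifference curves are strictly contained in $\mathbb{R}^{L}_{++}$, so every $\pi_{L-1}(f_h(\mathcal{D}_h(\cdot)))$ is compact and the finite union over $h$ remains so. I expect the main obstacle to be the arc-connectedness step: because the linear joint trade path need not be maximal, the point $q$ itself need not belong to every $\pi_{L-1}(f_h(\mathcal{D}_h(c_{y,p,\sigma}(t))))$ as it does in the Attraction Principle setting, and the Arrow-Debreu detour is how I would circumvent this.
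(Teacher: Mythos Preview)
Your proof is correct and follows essentially the same line as the paper's. For $q\in\mathcal{T}(c_{y,p,\sigma}(t))$ your $\hat{\sigma}_h=\alpha\sigma_h/(1-\sigma_h t)$ is the paper's $\sigma^*_h$ before normalization by the maximum, and your case split at $t=1$ is equivalent to the paper's device of passing to $\tilde{\sigma}=\sigma/\Vert\sigma\Vert_\infty$ so that some $\tilde{\sigma}_{\tilde h}=1$. The one genuine difference is the arc-connectedness argument: the paper (see the footnote following the proposition) extends $c_{y,p,\sigma}$ to a \emph{maximal} joint trade path and then invokes Theorem~\ref{theoAttractionPrinciple}, so that the terminal common MRS of the extended path sits in every $\pi_{L-1}(f_h(\mathcal{D}_h(\cdot)))$; you instead apply Arrow--Debreu existence directly at each $t$ to produce an equilibrium allocation in $\mathcal{D}(c_{y,p,\sigma}(t))$ whose common MRS plays the same role. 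Since the paper's extension step ultimately rests on the same existence result (via Proposition~\ref{propTradeCompAllocations}), the two routes are equivalent in substance; yours is slightly more self-contained because it avoids the detour through maximality, while the paper's packaging lets it cite Theorem~\ref{theoAttractionPrinciple} wholesale.
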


Proposition \ref{propTradeUnfold} states that, when trade unfolds according to the joint linear trade path defined through $q\in\mathcal{T}(y)$ and relative speed $\sigma\in\mathcal{S}(y,p)$, $p=(q,1)$, the vector of normalized prices lies within the corresponding set of trade-compatible normalized prices, $\mathcal{T}(c_{y,p,\sigma}(t))$, which is contained in the corresponding set of dominant floor-coordinates, $\mathcal{F}(c_{y,p,\sigma}(t))$, at all times $t\in[0,1)$. Also, the vector of marginal substitution rates may or may not lie in the set of trade compatible prices at $t=1$, $\mathcal{T}(c_{y,p,\sigma}(1))$, but it always lies within $\mathcal{F}(c_{y,p,\sigma}(1))$. 

The fact that $\{\mathcal{F}(c_{y,p,\sigma}(t))\}_{0\leq t\leq 1}$ is a non-increasing net of arc-connected, possibly compact, sets is a direct consequence of the Attraction Principle\footnote{If $c_{y,p,\sigma}(\cdot)$ is maximal, then Theorem \ref{theoAttractionPrinciple} is valid. If not, then extend it according to a subsequent maximal linear joint trade path starting at $c_{y,p,\sigma}(1)$, normalize the time interval, and apply Theorem \ref{theoAttractionPrinciple}.}. However, it should be noted that Proposition \ref{propTradeUnfold} differs from Theorem \ref{theoAttractionPrinciple} in two aspects: (i) the joint trade path is not necessarily maximal; and (ii) trade evolves linearly according to ongoing normalized trade prices.

Also, although $\{\mathcal{F}(c_{y,p,\sigma}(t))\}_{0\leq t\leq 1}$ is a non-increasing net that contains the ongoing normalized trade prices, we still cannot say much about the actual dynamics of $f_{h}(c_{y_{h},p}(t))$, $t\in[0,1]$, $1\leq h\leq H$, except that their ``floor'' projection lies within $\mathcal{F}(c_{y,p,\sigma}(t))\subseteq\mathbb{R}^{L-1}_{++}$ and that they do not move downward due to Lemma \ref{lemmaLinearIncreasingUtility}. We do not know, for example, whether $\pi_{L-1}(f_{h}(c_{y_{h},p}(t)))$ is moving towards $q$ (this will be addressed later in Definition \ref{defAttractiveUtility} and Proposition \ref{propAttractionOfMarginalRates}).

Furthermore, Proposition \ref{propTradeUnfold} implies that $\mathcal{T}(y)\subseteq \mathcal{F}(y)$. Therefore, if we are looking for trade-compatible prices after an allocation $y\in\mathbb{R}^{HL}$, we can restrict our attention to the set of dominant floor coordinates $\mathcal{F}(y)$ (although this might be fairly large and not easily calculated). The proper identification of trade-compatible prices is a fundamental step to deploy stochastic non-tâtonnement processes, which we are now able to define.

\begin{definition}\label{defStochasticNTProcess}
Let $(\Omega, \mathcal{A},\mathbb{P})$ be a probability space and $y\in\mathbb{R}^{HL}_{++}$, $H\geq2$, an allocation. Then, an \textit{stochastic non-tâtonnement process} starting at $y$, $\{Y_{t}\}_{t\geq0}$, is a discrete time stochastic process with state space $\mathbb{R}^{HL}_{++}$ such that, for almost every $\omega\in\Omega$,
\begin{eqnarray*}
    Y_{t}(\omega)=\begin{cases}
        y\textrm{, if } t=0\\
        c_{Y_{t-1}(\omega),(Q_{t}(\omega),1),S_{t}(\omega)}(1)\textrm{, if } t\geq1 \textrm{ and } \mathcal{T}(Y_{t-1}(\omega))\neq\emptyset\\
        Y_{t-1}(\omega)\textrm{, if } t\geq1 \textrm{ and } \mathcal{T}(Y_{t-1}(\omega))=\emptyset,
    \end{cases}
\end{eqnarray*}
for $t\geq0$, with the corresponding \textit{stochastic normalized prices process}, $\{Q_{t}\}_{t\geq1}$, and \textit{stochastic relative trade speed process}, $\{S_{t}\}_{t\geq1}$, satisfying
\begin{eqnarray*}
    (Q_{t}(\omega),S_{t}(\omega))\in\begin{cases}
        \mathcal{T}(Y_{t-1}(\omega))\times\mathcal{S}(Y_{t-1}(\omega),(Q_{t}(\omega),1)),\textrm{ if }\mathcal{T}(Y_{t-1}(\omega))\neq\emptyset\\
        \{(\pi_{L-1}(f_{1}(Y_{(t-1),1}(\omega))),0)\},\textrm{ if }\mathcal{T}(Y_{t-1}(\omega))=\emptyset,
    \end{cases}
\end{eqnarray*}
for $t\geq1$.
\end{definition}

Definition \ref{defStochasticNTProcess} states that a stochastic non-tâtonnement process (SNTP) is a discrete-time process that starts at a given allocation $y\in\mathbb{R}^{HL}_{++}$. Then, in each time interval, the process evolves according to a randomly defined linear joint trade path, unless it has already reached the Pareto optimal set (in which case it becomes invariant thereafter). The following example illustrates a SNTP in the Edgeworth box.

\begin{example}[Edgeworth box]\label{ex3}
I build on Examples \ref{ex1}-\ref{ex2}. Let $H=L=2$ and $u_{1}(c)=u_{2}(c)=c_{1}c_{2}$, $c\in\mathbb{R}^{2}_{+}$. Therefore, $x_{n}(p)=(1/2p_{1},1/2p_{2})$ and $f(c)=(c_{2}/c_{1},c_{1}c_{2})$. Let $y=(y_{1},y_{2})\in\mathbb{R}^{4}_{++}$ be an allocation, with $y_{1}+y_{2}=(3,3)$ and $y_{12}/y_{11}\leq y_{22}/y_{21}$. Then
\begin{eqnarray*}
    \mathcal{T}(y)=\begin{cases}
        (y_{12}/y_{11}, y_{22}/y_{21})\textrm{, if } y_{12}/y_{11}<y_{22}/y_{21}\\
        \emptyset\textrm{, if } y_{12}/y_{11}=y_{22}/y_{21},
    \end{cases}
\end{eqnarray*}
and 
\begin{eqnarray*}
    \mathcal{S}(y,(q,1))=\begin{cases}
       \biggr\{\lambda\biggr(1,\frac{y_{22}-qy_{21}+3(q-1)}{y_{22}-qy_{21}}\biggr)\mid \lambda\in(0,1]\biggr\} \textrm{, if } q\in(y_{12}/y_{11},1]\\
        \biggr\{\lambda\biggr(\frac{qy_{11}-y_{12}+3(1-q)}{qy_{11}-y_{12}},1\biggr)\mid \lambda\in(0,1]\biggr\} \textrm{, if } q\in(1,y_{22}/y_{21})
    \end{cases}
\end{eqnarray*}
for $q\in(y_{12}/y_{11}, y_{22}/y_{21})$. In particular, $(1,1)\in\mathcal{S}(y,1)$, if $y_{12}/y_{11}<y_{22}/y_{21}$. Next, notice that
\begin{eqnarray*}
    c_{y,(q,1),1,1}(1)&=&\begin{cases}
        ((y_{11}q+y_{12})/2q,(y_{11}q+y_{12})/2), q\in (y_{12}/y_{11},1]\\
        ((2qy_{11}+y_{21}q-y_{22})/2q,(2y_{12}-y_{21}q+y_{22})/2), q\in (1,y_{22}/y_{21}),
    \end{cases}\\
    c_{y,(q,1),1,2}(1)&=&\begin{cases}
        ((2qy_{21}+y_{11}q-y_{12})/2q,(2y_{22}-y_{11}q+y_{12})/2), q\in (y_{12}/y_{11},1]\\
        ((y_{21}q+y_{22})/2q,(y_{21}q+y_{22})/2), q\in (1,y_{22}/y_{21})
    \end{cases}
\end{eqnarray*}
for $q\in\mathcal{T}(y)$. Let $(y_{11},y_{12})=(2,1)$, $(\Omega,\mathcal{A},\mathbb{P})$ be a probability space and $\{X_{t}\}_{t\geq1}$ be a sequence of i.i.d. random variables with Bernoulli distribution and equal probabilities. Let $\{q_{t}\}_{t\geq0}$ be given by $q_{t+1}=(1+q_{t})/2$, $t\geq0$, with $q_{0}=1/2$. Then, $q_{t}=1-2^{-(t+1)}$, $t\geq0$, $\{q_{t}\}_{t\geq0}$ is strictly increasing and $\lim_{t\rightarrow\infty}q_{t}=1$. Define then
\begin{eqnarray*}
    Y_{t}(\omega)&=&\begin{cases}
        y\textrm{, if } t=0\\
        c_{Y_{t-1}(\omega),(Q_{t}(\omega),1),S_{t}(\omega)}(1)\textrm{, if } t\geq1, X_{t}(\omega)=1 \textrm{ and } X_{j}(\omega)=1,\forall 1\leq j<t\\
        (x_{n1}(Y_{(t-1)1}(\omega)(1,1)^{T}(1/2,1/2)),x_{n2}(Y_{(t-1)2}(\omega)(1,1)^{T}(1/2,1/2)))\textrm{, if } t\geq1,\ldots\\
        \hfill \ldots X_{t}(\omega)=0 \textrm{ and } X_{j}(\omega)=1,\forall 1\leq j<t\\
        Y_{t-1}(\omega)\textrm{, if } t\geq1 \textrm{ and } X_{j}(\omega)=0,\textrm{ for some } 1\leq j<t,
    \end{cases}\\
    Q_{t}(\omega)&=&\begin{cases}
        q_{t}\textrm{, if } t\geq1, X_{t}(\omega)=1 \textrm{ and } X_{j}(\omega)=1,\forall 1\leq j<t\\
        1 \textrm{, if } t\geq1, X_{t}(\omega)=0 \textrm{ and } X_{j}(\omega)=1,\forall 1\leq j<t\\
        1\textrm{, if } t\geq1 \textrm{ and } X_{j}(\omega)=0,\textrm{ for some } 1\leq j<t,
    \end{cases}\\
    S_{t}(\omega)&=&\begin{cases}
        \biggr(1,\frac{Y_{(t-1)22}(\omega)-Q_{t}(\omega)Y_{(t-1)21}(\omega)+3(Q_{t}(\omega)-1)}{Y_{(t-1)22}(\omega)-Q_{t}(\omega)Y_{(t-1)21}(\omega)}\biggr)\textrm{, if } X_{t}(\omega)=1 \textrm{ and } \ldots\\
        \hfill\ldots X_{j}(\omega)=1,\forall 1\leq j<t\\
        (1,1) \textrm{, if } X_{t}(\omega)=0 \textrm{ and } X_{j}(\omega)=1,\forall 1\leq j<t\\
        (0,0)\textrm{, if } X_{j}(\omega)=0,\textrm{ for some } 1\leq j<t,
    \end{cases}
\end{eqnarray*}
for $t\geq1$, so that $\{Y_{t}\}_{t\geq0}$ is an SNTP. For brevity, I will check the conditions of Definition \ref{defStochasticNTProcess} only for $t=0$. For all $\omega\in\Omega$, $Y_{0}(\omega)=y$ and $\mathcal{T}(Y_{0}(\omega))\neq\emptyset$. If $X_{1}(\omega)=1$, then 
\begin{eqnarray*}
    Q_{1}(\omega)=q_{1}=3/4\in (1/2,2)=\mathcal{T}(Y_{0}(\omega)),
\end{eqnarray*}
and 
\begin{eqnarray*}
    S_{1}(\omega)&=&\biggr(1,\frac{Y_{022}(\omega)-Q_{1}(\omega)Y_{021}(\omega)+3(Q_{1}(\omega)-1)}{Y_{022}(\omega)-Q_{1}(\omega)Y_{021}(\omega)}\biggr)\\
    &=&\biggr(1,\frac{y_{22}-q_{1}y_{21}+3(q_{1}-1)}{y_{22}-q_{1}y_{21}}\biggr)\in \mathcal{S}(y,(3/4,1))=\mathcal{S}(Y_{0}(\omega),(Q_{1}(\omega),1)),
\end{eqnarray*}
so that $(Q_{1}(\omega),S_{1}(\omega))\in \mathcal{T}(Y_{0}(\omega))\times \mathcal{S}(Y_{0}(\omega),(Q_{1}(\omega),1))$. If $X_{1}(\omega)=0$, then 
\begin{eqnarray*}
    Q_{1}(\omega)=1 \in (1/2,2)=\mathcal{T}(S_{0}(\omega))
\end{eqnarray*}
and $S_{1}(\omega)=(1,1)\in \mathcal{S}(y,(1,1))=\mathcal{S}(Y_{0}(\omega),(Q_{1}(\omega),1))$, so that, once again, $(Q_{1}(\omega),S_{1}(\omega))\in \mathcal{T}(Y_{0}(\omega))\times \mathcal{S}(Y_{0}(\omega),(Q_{1}(\omega),1))$. 

Next, notice that $\mathbb{P}(X_{i}=1,\forall i\geq1)=0$ and if $X_{1}(\omega)=0$, then
\begin{eqnarray*}
    Y_{11}(\omega)=\frac{1}{2}(q_{0}^{-1},1)(1,1)^{T}(1,1)=\frac{3}{2}(1,1).
\end{eqnarray*}
Also, if $X_{t}(\omega)=0$, $X_{i}(\omega)=1$, $1\leq i<t$, $t\geq2$, then
\begin{eqnarray*}
    Y_{t1}(\omega)
    =\prod^{t-1}_{i=1}\biggr(1+\frac{1}{2^{i+2}-4}\biggr)\biggr(1+\frac{1}{2^{t+1}-2}\biggr)(1,1).
\end{eqnarray*}
Therefore, $\lim_{t\rightarrow\infty}Y_{t1}$ is well-defined and has the following discrete distribution over the contract curve
\begin{eqnarray*}
\mathbb{P}\biggr(\lim_{t\rightarrow\infty}Y_{t1}=\prod^{j-1}_{i=1}\biggr(1+\frac{1}{2^{i+2}-4}\biggr)\biggr(1+\frac{1}{2^{j+1}-2}\biggr)(1,1)\biggr)=\frac{1}{2^{j}},
\end{eqnarray*}
for $j\geq1$. In particular, $\mathbb{P}(\lim_{t\rightarrow\infty}Y_{t}\in\mathcal{P})=\sum_{j\geq1}2^{-j}=1$ (this remark will be further addressed in Theorem \ref{theoStochasticFirstWelfare}).
\end{example}

Example \ref{ex3} reveals that the proper definition of the normalized prices process and the relative trade speeds process is the core of any SNTP, and that analytical solutions are challenging. Numerical simulations of SNTP through Monte Carlo methods are, therefore, a natural way to proceed, and, to do so, we start with the following definition.

\begin{definition}\label{defBayesianStochasticNTP}
 Let $(\Omega,\mathcal{A},\mathbb{P})$ be a probability space and $\{Y_{t}\}_{t\geq0}$ a stochastic non-tâtonnement process, with $\{Q_{t}\}_{t\geq0}$ and $\{S_{t}\}_{t\geq0}$ the corresponding normalized prices and relative trade speeds processes. Then, $\{Y_{t}\}_{t\geq0}$ is a \textit{Bayesian stochastic non-tâtonnement process} if there are distributions $f_{Q}:\mathbb{R}^{L-1}\rightarrow\mathbb{R}_{+}$ and $f_{S}:[0,1]^{H}\rightarrow\mathbb{R}_{+}$, $\int_{\mathbb{R}^{L-1}}f_{Q}(q)dq=\int_{[0,1]^{H}}f_{S}(s)ds=1$, such that
 \begin{eqnarray}
     \mathbb{P}(Q_{t}\in A \mid Y_{t-1})&=&\frac{\int_{\mathcal{T}(Y_{t-1})\cap A} f_{Q}(q)dq}{\int_{\mathcal{T}(Y_{t-1})} f_{Q}(q)dq} \label{eqBayesQ}\\
     \mathbb{P}(S_{t}\in B \mid Y_{t-1}, Q_{t})&=&\frac{\int_{\mathcal{S}(Y_{t-1},Q_{t})\cap B} f_{S}(s)ds}{\int_{\mathcal{S}(Y_{t-1},Q_{t})} f_{S}(s)ds},\label{eqBayesS}
 \end{eqnarray}
for $t\geq1$, and $A\subseteq \mathbb{R}^{L-1}$,  $B\subseteq[0,1]^{H}$ measurable sets.
\end{definition}

Definition \ref{defBayesianStochasticNTP} states that a Bayesian stochastic non-tâtonnement process (BSNTP) $\{Y_{t}\}_{t\geq1}$ is characterized through the choice of common priors $f_{Q}(\cdot)$ and $f_{S}(\cdot)$ that govern, according to Bayes' rule, the conditional probability distributions of the normalized prices process $\{Q_{t}\}_{t\geq1}$ and the relative trade speeds process $\{S_{t}\}_{t\geq1}$.

The numerical simulation of BSNTP is considerably more tractable than that of a general SNTP. This is because the priors $f_{Q}(\cdot)$ and $f_{S}(\cdot)$ already encompass all the information about the relative chances between possible normalized prices in $\mathcal{T}(\cdot)$ and trade speeds in $\mathcal{S}(\cdot)$ (compare this with the need to explicitly define all the SNTP possible ``steps'' in Example \ref{ex3}, which involves a simple setting of i.i.d. random variables with Bernoulli distribution). The following example reveals how BSNTP can be used to model price stickiness and a scenario of sustained economic disequilibrium. 

\begin{example}[Price stickiness]\label{ex4}
   I build on examples \ref{ex1}-\ref{ex3}. Let $H=L=2$ and $u_{1}(c)=u_{2}(c)=c_{1}c_{2}$, $c\in\mathbb{R}^{2}_{+}$. Consider that, initially, the economy is in equilibrium with both households holding the consumption bundle $(3/2,3/2)\in\mathbb{R}^{2}_{++}$ and the equilibrium prices being given by $(1,1)\in\mathbb{R}^{2}_{++}$. However, after a shock, the allocation becomes $y=(y_{1},y_{2})$, $y_{1}=(2,1)$, and $y_{2}=(1,2)$. Trade under fluctuating market prices unfolds according to a BSNTP, $\{Y_{t}\}_{t\geq0}$, $Y_{0}=y$, with $f_{Q}:\mathbb{R}\rightarrow\mathbb{R}_{+}$ given by
\begin{eqnarray*}
    f_{Q}(q)=\biggr(\int^{2}_{1/2}e^{-\frac{(\arctan   x-\pi/4)^{2}}{2\sigma^{2}}}\frac{dx}{1+x^{2}}\biggr)^{-1}\frac{e^{-\frac{(\arctan   q-\pi/4)^{2}}{2\sigma^{2}}}\mathbf{1}_{[1/2,2]}(q)}{1+q^{2}}
\end{eqnarray*}
for $\sigma>0$, and $f_{S}:[0,1]^{2}\rightarrow\mathbb{R}_{+}$ given by $f_{S}(s)=1$. 

Notice that, for all $t\geq0$, $Y_{t12}/Y_{t11}\leq 1 \leq Y_{t22}/Y_{t21}$, $\mathcal{T}(Y_{t})=(Y_{t12}/Y_{t11},Y_{t22}/Y_{t21})$, and $\mathcal{T}(Y_{t+1})\subseteq \mathcal{T}(Y_{t})$. Also,
\begin{eqnarray*}
    \mathbb{P}(Q_{t}\leq s \mid Y_{t})&=&\frac{\int^{s}_{Y_{t12}/Y_{t11}}e^{-\frac{(\arctan   x-\pi/4)^{2}}{2\sigma^{2}}}\frac{dx}{1+x^{2}}}{\int^{Y_{t22}/Y_{t21}}_{Y_{t12}/Y_{t11}}e^{-\frac{(\arctan   x-\pi/4)^{2}}{2\sigma^{2}}}\frac{dx}{1+x^{2}}}\\
    &=&\frac{\Phi_{\pi/4,\sigma}(\arctan s) - \Phi_{\pi/4,\sigma}(\arctan Y_{t12}/Y_{t11})}{\Phi_{\pi/4,\sigma}(\arctan Y_{t22}/Y_{t21}) - \Phi_{\pi/4,\sigma}(\arctan Y_{t12}/Y_{t11})},
\end{eqnarray*}
for $s\in \mathcal{T}(Y_{t})$, with $\Phi_{\pi/4,\sigma}:\mathbb{R}\rightarrow\mathbb{R}_{+}$ the cumulative normal distribution of mean $\pi/4$ and standard deviation $\sigma$. Therefore, this probability distribution can be seen as a normal draw on the ``arc of trade-compatible angles'', with the necessary normalization. 

The original equilibrium $(1,1)$ can be seen as the $\pi/4$ point in the ``arc of trade-compatible angles'' and, after the shock, this point defines a possible market price to drive ongoing trade. The degree of price stickiness is defined by the standard deviation $\sigma>0$, since this parameter measures the probability that the ongoing trade price will deviate from its historical value $\pi/4$. Indeed, small values of $\sigma>0$ lead to high stickiness, since market prices will probably make only slight fluctuations from its historical value.   

Furthermore, the relative trade speeds are uniformly chosen given the set $\mathcal{S}(Y_{t-1},Q_{t})$. Since trade speeds are \textit{relative}, this uniform draw can be seen from the following perspective. Although the BSNTP is indexed by $t\geq0$, we may consider that the real time duration between $Y_{t}$ and $Y_{t+1}$, $t\geq0$, fluctuates and, with it, the current price. Therefore, in this scenario of price shifts, trade can remain highly incomplete, which is represented by a draw of a small trade speeds vector. The figure below (cf. \textcite[p. 316]{Edgeworth_1925}) illustrates a Monte Carlo simulation of this BSNTP.

\begin{figure}[H]
\centering
\includegraphics[scale=0.75]{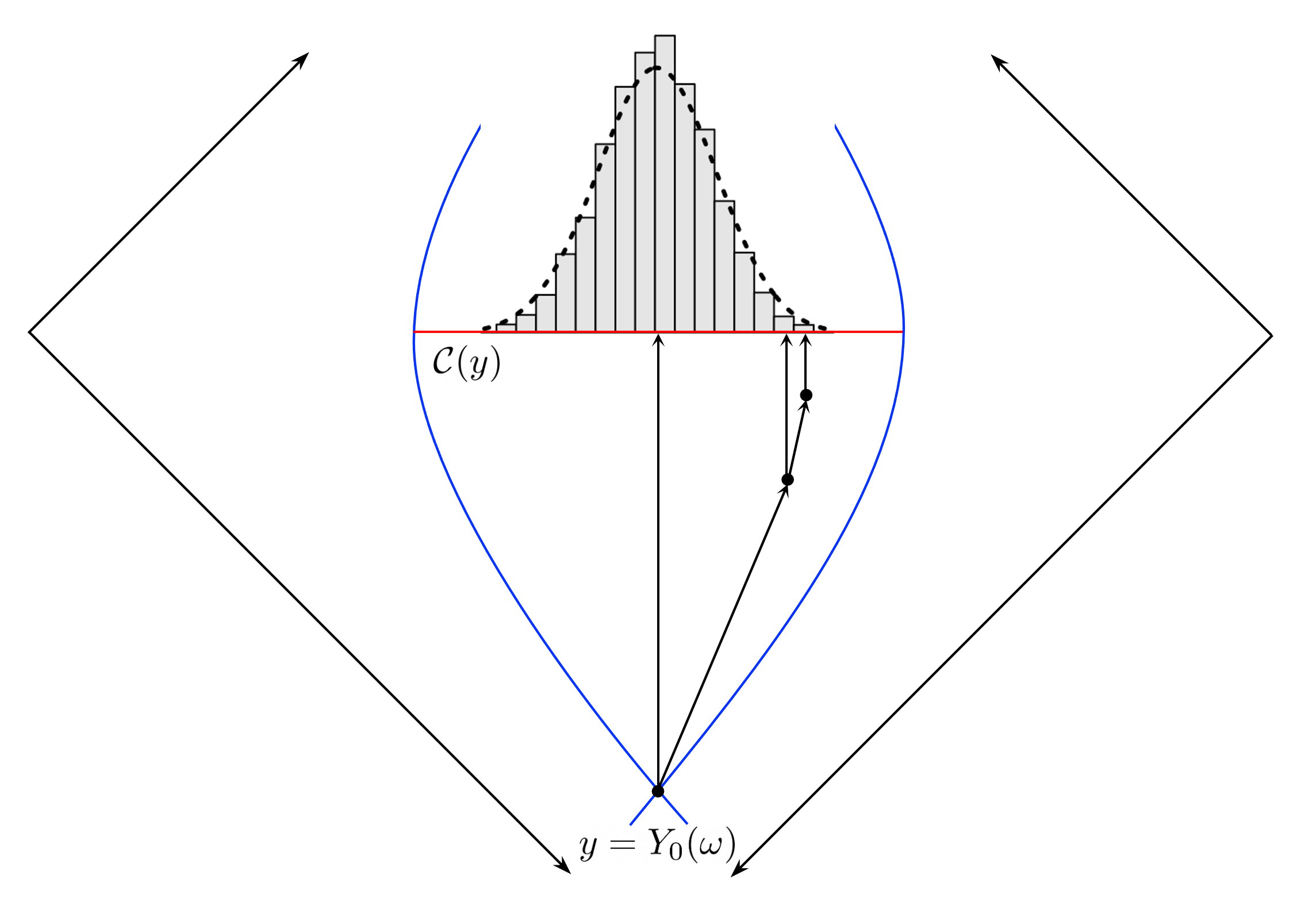}
\caption{Monte Carlo simulation of the BSNTP representing price stickness with a normal distribution over the ``arc of trade-compatible angles''.}
\label{FigMonteCarloEx4}
\end{figure}

Figure \ref{FigMonteCarloEx4} reveals that, although the original equilibrium is the average and most probable outcome of trade after the shock, the economy may also deviate from this forecast. In this setting, the degree of price stickiness defines how far from the original equilibrium the economy may land, therefore defining ``confidence regions'' on the contract curve. This point will be later addressed in Section \ref{sec6}.
\end{example}

\begin{example}[Sustained economic disequilibrium]\label{ex5}
I build on Example \ref{ex4}. This time, however, $f_{Q}:\mathbb{R}\rightarrow\mathbb{R}_{+}$ is given by
\begin{eqnarray*}
    f_{Q}(q)&=&\lim_{\sigma\rightarrow\infty} \biggr(\int^{2}_{1/2}e^{-\frac{(\arctan   x-\pi/4)^{2}}{2\sigma^{2}}}\frac{dx}{1+x^{2}}\biggr)^{-1}\frac{e^{-\frac{(\arctan   q-\pi/4)^{2}}{2\sigma^{2}}}\mathbf{1}_{[1/2,2]}(q)}{1+q^{2}}\\
    &=&\frac{1}{\arctan 2-\arctan 1/2}\frac{\mathbf{1}_{[1/2,2]}(q)}{1+q^{2}}
\end{eqnarray*}
and $f_{S}:[0,1]^{2}\rightarrow\mathbb{R}_{+}$ given by $f_{S}(s)=1$. This $f_{Q}(\cdot)$ can be seen as a uniform draw on the ``arc of trade-compatible angles'', and represents an economic landscape under sustained economic disequilibrium, where market imbalances make no particular historical price preferable to any other (i.e., there is no original equilibrium to which prices can be ``sticked''). The figure below illustrates a Monte Carlo simulation of this BSNTP.

\begin{figure}[H]
\centering
\includegraphics[scale=0.75]{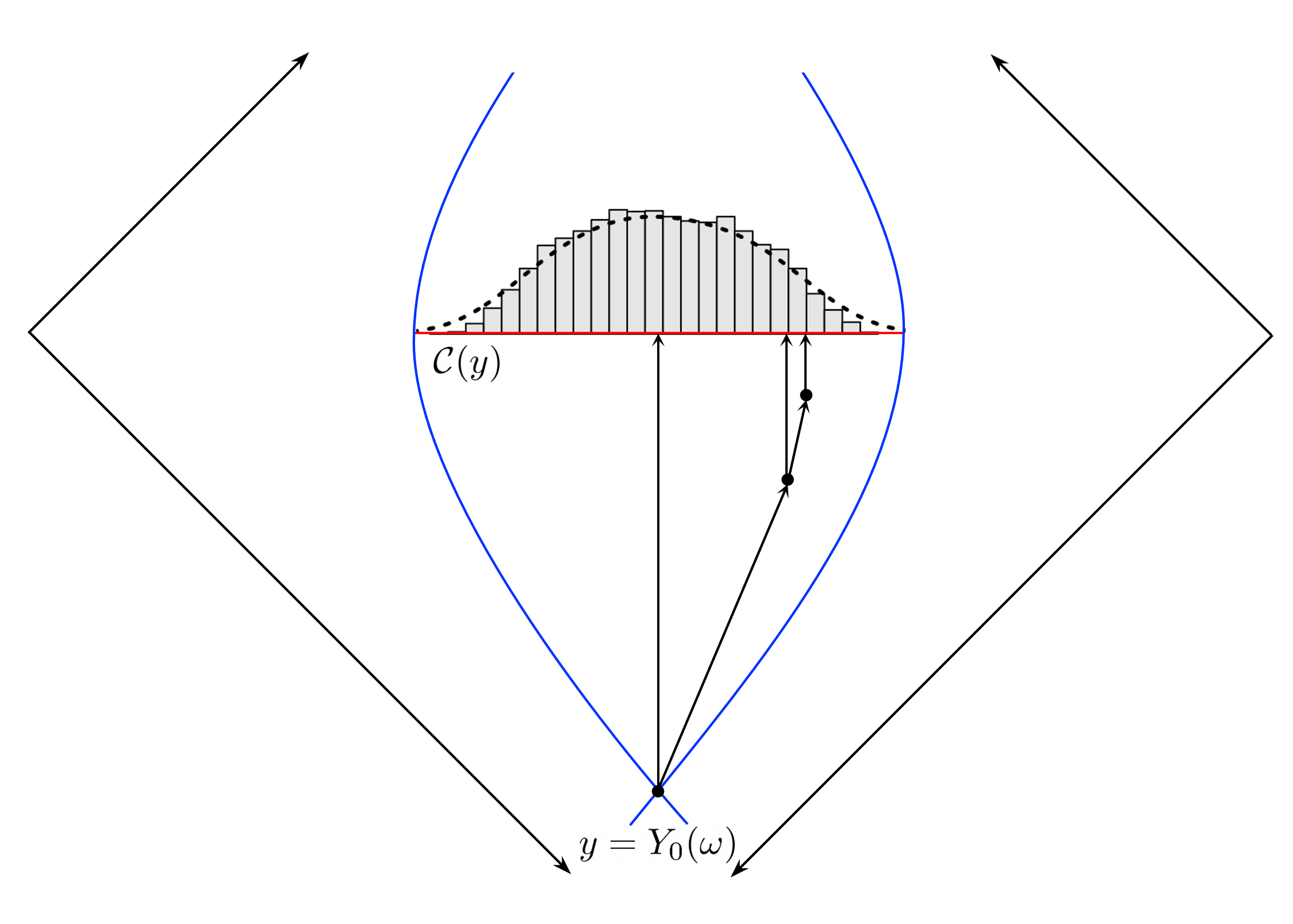}
\caption{Monte Carlo simulation of the BSNTP representing sustained economic disequilibrium with a uniform distribution over the ``arc of trade-compatible angles''.}
\label{Fig1MonteCarloEx5}
\end{figure}

Figure \ref{Fig1MonteCarloEx5} reveals that, even in this sustained disequilibrium scenario, the equilibrium outcome calculated from the original allocation $y\in\mathbb{R}^{4}_{++}$ still represents the average and most probable outcome, although there is a significant dispersion when compared to the outcomes of trade depicted in Figure \ref{FigMonteCarloEx4}. A natural question that arises is what role the uniform distribution of relative trade velocities plays in the outcome of the BSNTP on the contract curve. For instance, if, in this market, prices do not shift while trade is still possible (or, equivalently, if relative trade velocities are as high as possible), what happens? The figure below answers this question by taking the distribution on the contract curve obtained after the following limit $f_{S}(s)=\lim_{\delta\rightarrow1}(\mathbf{1}_{s_{1}\geq\delta}(s)+\mathbf{1}_{s_{2}\geq\delta}(s)-\mathbf{1}_{s_{1},s_{2}\geq\delta}(s))/(1-\delta^{2})$.

\begin{figure}[H]
\centering
\includegraphics[scale=0.75]{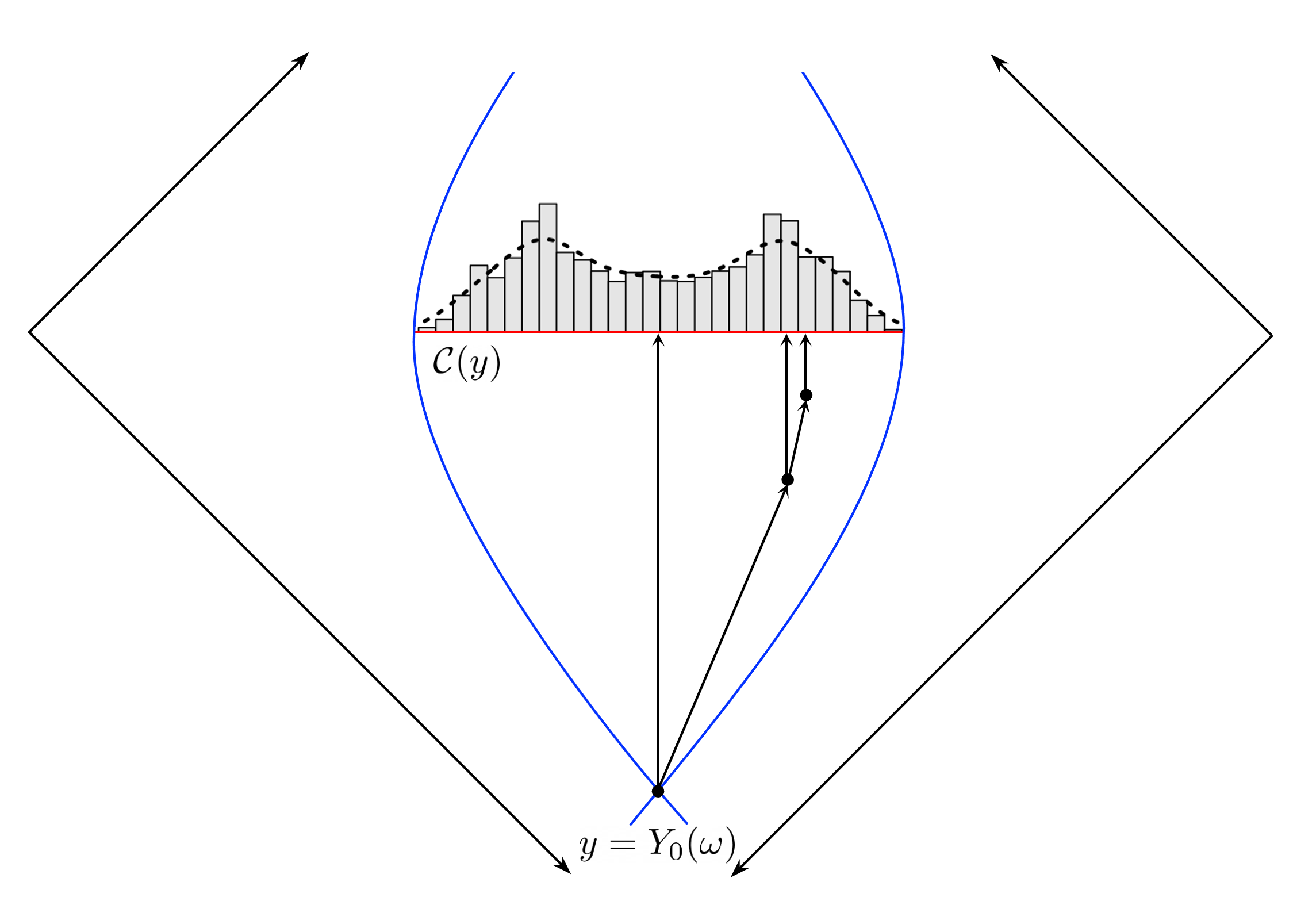}
\caption{Monte Carlo simulation of the BSNTP representing sustained economic disequilibrium with a uniform distribution over the ``arc of trade-compatible angles'' and maximum relative trade speeds.}
\label{Fig2MonteCarloEx5}
\end{figure}

Figure \ref{Fig2MonteCarloEx5} reveals that if one takes the maximum possible trade speeds (i.e., $\Vert \sigma \Vert_{\infty}=1$) in every step of the BSNTP, the Walrasian equilibrium calculated for the initial allocation $y$ still represents the average outcome over the contract curve, but is no longer the most probable one. Therefore, if prices only shift slowly and all possible trade is exhausted in every period of constant prices, then the most probable outcome distances itself from the endpoint of the maximal linear joint trade path departing from $y$. This point will be later addressed in Section \ref{sec6}. 
\end{example}

In order to Monte Carlo a BSNTP, Definition \ref{defBayesianStochasticNTP} requires the calculation, at every step of the way, first of the current set of trade-compatible prices $\mathcal{T}(\cdot)$ and then, given a draw from (\ref{eqBayesQ}), the set of relative trade speeds $\mathcal{S}(\cdot)$ and a corresponding draw from (\ref{eqBayesS}) to obtain the next step. Although in Examples \ref{ex4} and \ref{ex5} this calculation procedure has been simplified by taking $L=2$, for larger dimensions, it can be computationally demanding. The remainder of this section is directed towards results that can enhance the performance of Monte Carlo simulations' of BSNTP. I proceed with the following lemma.

\begin{lemma}\label{lemmaLargerDomainQ}
    Let $(\Omega,\mathcal{A},\mathbb{P})$ be a probability space, $\mathcal{T},\mathcal{G}\subseteq\mathbb{R}^{L-1}$, $\mathcal{T}\subseteq \mathcal{G}$, $f_{X}:\mathbb{R}^{L-1}\rightarrow\mathbb{R}_{+}$, $\int_{\mathbb{R}^{L-1}}f_{X}(x)dx=1$, $\int_{\mathcal{T}}f_{X}(x)dx>0$, and $X:\Omega\rightarrow \mathbb{R}^{L-1}$ a random variable with
    \begin{eqnarray*}
        \mathbb{P}(X\in A)=\frac{\int_{\mathcal{G}\cap A}f_{X}(x)dx}{\int_{\mathcal{G}}f_{X}(x)dx},
    \end{eqnarray*}
    for $A\subset\mathbb{R}^{L-1}$ measurable set. Then, $\mathbb{P}(X\in\mathcal{T})>0$ and
    \begin{eqnarray*}
        \mathbb{P}(X\in A \mid X\in \mathcal{T})=\frac{\int_{\mathcal{T}\cap A}f_{X}(x)dx}{\int_{\mathcal{T}}f_{X}(x)dx}.
    \end{eqnarray*} 
\end{lemma}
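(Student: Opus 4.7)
The plan is to prove the lemma by a direct application of the definition of conditional probability, using the fact that the inclusion $\mathcal{T}\subseteq\mathcal{G}$ makes the intersections $\mathcal{G}\cap\mathcal{T}$ and $\mathcal{G}\cap A \cap \mathcal{T}$ collapse to $\mathcal{T}$ and $A\cap\mathcal{T}$, respectively. There is no substantive obstacle here; the statement is essentially a measure-theoretic bookkeeping result that justifies the following computational shortcut used in Monte Carlo simulations of a BSNTP: instead of sampling directly from the restricted set $\mathcal{T}(\cdot)$ (which may be geometrically complicated), one can sample from a more convenient superset $\mathcal{G}$ and then condition on landing in $\mathcal{T}$, obtaining the correct conditional distribution.

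First I would verify the positivity claim. Since $\mathcal{T}\subseteq\mathcal{G}$ and $f_{X}$ is non-negative, we have $\int_{\mathcal{G}}f_{X}(x)dx \geq \int_{\mathcal{T}}f_{X}(x)dx >0$, so the denominator defining $\mathbb{P}(X\in\cdot)$ is strictly positive. Taking $A=\mathcal{T}$ then gives
\begin{eqnarray*}
    \mathbb{P}(X\in\mathcal{T})=\frac{\int_{\mathcal{G}\cap\mathcal{T}}f_{X}(x)dx}{\int_{\mathcal{G}}f_{X}(x)dx}=\frac{\int_{\mathcal{T}}f_{X}(x)dx}{\int_{\mathcal{G}}f_{X}(x)dx}>0,
\end{eqnarray*}
where the middle equality uses $\mathcal{G}\cap\mathcal{T}=\mathcal{T}$.

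Next I would apply the standard definition of conditional probability:
\begin{eqnarray*}
\mathbb{P}(X\in A\mid X\in\mathcal{T})=\frac{\mathbb{P}(X\in A\cap\mathcal{T})}{\mathbb{P}(X\in\mathcal{T})}.
\end{eqnarray*}
Evaluating the numerator via the hypothesis (with the measurable set $A\cap\mathcal{T}$ in place of $A$) and using $\mathcal{G}\cap(A\cap\mathcal{T})=A\cap\mathcal{T}$, I get
\begin{eqnarray*}
\mathbb{P}(X\in A\cap\mathcal{T})=\frac{\int_{\mathcal{T}\cap A}f_{X}(x)dx}{\int_{\mathcal{G}}f_{X}(x)dx}.
\end{eqnarray*}
Dividing this expression by the formula already obtained for $\mathbb{P}(X\in\mathcal{T})$, the factor $\int_{\mathcal{G}}f_{X}(x)dx$ cancels and yields the desired identity. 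The only minor care needed is to ensure that $A\cap\mathcal{T}$ is measurable whenever $A$ and $\mathcal{T}$ are, which follows from the assumed measurability of $\mathcal{T}$ (implicit in $\int_{\mathcal{T}}f_{X}(x)dx$ being well-defined).
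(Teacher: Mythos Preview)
Your proof is correct and follows essentially the same approach as the paper: verify $\int_{\mathcal{G}}f_{X}\geq\int_{\mathcal{T}}f_{X}>0$, compute $\mathbb{P}(X\in\mathcal{T})$ by collapsing $\mathcal{G}\cap\mathcal{T}=\mathcal{T}$, then apply the definition of conditional probability and cancel the common denominator $\int_{\mathcal{G}}f_{X}$.
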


Lemma \ref{lemmaLargerDomainQ} tells us that, in order to simulate a random variable whose distribution is in the form of (\ref{eqBayesQ}) or (\ref{eqBayesS}), one can actually take draws by adopting larger \textit{box sets} (i.e., sets over which the integrals in (\ref{eqBayesQ}) and (\ref{eqBayesS}) are taken) and only keep the first that actually ``falls'' within the baseline reference set. Therefore, instead of calculating the entire baseline reference set, one only needs to check if a given draw lies within it. When simulating a random draw from (\ref{eqBayesS}), a natural choice for the box set is $\mathcal{G}(Y_{t-1},Q_{t})=[0,1]^{H}\supseteq \mathcal{S}(Y_{t-1},Q_{t})$.

When simulating a random draw from (\ref{eqBayesQ}), Proposition \ref{propTradeUnfold} leads us to choose $\mathcal{G}(Y_{t-1})=\mathcal{F}(Y_{t-1})\supseteq \mathcal{T}(Y_{t-1})$ (i.e., the set of dominant floor coordinates is taken as the box set). However, calculating $\mathcal{F}(Y_{t-1})$ can be no less demanding than calculating $\mathcal{T}(Y_{t-1})$ and, therefore, if possible, we should aim for better behaved box sets to Monte Carlo a BSNTP, and the following definitions go in this direction.

\begin{definition}\label{defSharpUtility}
 Let $u:\mathbb{R}^{L}_{+}\rightarrow\mathbb{R}$ be a utility function that satisfies all the assumptions in Section \ref{sec2}. Then, $u(\cdot)$ is \textit{sharp} if
\begin{eqnarray*}
    p_{i}>\max_{j\neq i}\frac{p_{j}x^{-1}_{n}(y)e^{T}_{i}}{x^{-1}_{n}(y)e^{T}_{j}}\implies x_{ni}\biggr(\frac{p}{py^{T}}\biggr)-y_{i}<0,
\end{eqnarray*}
and
\begin{eqnarray*}
    p_{i}<\min_{j\neq i}\frac{p_{j}x^{-1}_{n}(y)e^{T}_{i}}{x^{-1}_{n}(y)e^{T}_{j}}\implies x_{ni}\biggr(\frac{p}{py^{T}}\biggr)-y_{i}>0.
\end{eqnarray*}
for $1\leq i\leq L$, $y,p\in\mathbb{R}^{L}_{++}$.
\end{definition}

Definition \ref{defSharpUtility} states that if the ongoing price of any good is higher (lower) than the one implied by the marginal substitution rates and the prices of all other goods, then the linear trade path leads to an excess offer (demand) of this good. Furthermore, Lemma 3 from \textcite[p. 89]{ArrowHurwicz_1959} (also in \textcite[p. 653]{Negishi_1962}) states that gross substitutability implies sharpness. The following definition is a technical one and will become clearer after Proposition \ref{propAttractionOfMarginalRates}.

\begin{definition}\label{defAttractiveUtility}
Let $u:\mathbb{R}^{L}_{+}\rightarrow\mathbb{R}$ be a utility function that satisfies all the assumptions in Section \ref{sec2}. Then, $u(\cdot)$ is \textit{attractive} if 
\begin{eqnarray*}
    \biggr(\frac{x^{-1}_{n}(y)e^{T}_{i}}{x^{-1}_{n}(y)e_{j}^{T}}-\frac{p_{i}}{p_{j}}\biggr)x^{-1}_{n}(y)(e_{j}^{T}e_{i}-e_{i}^{T}e_{j})\mathbf{H}u(y)\biggr(x_{n}\biggr(\frac{p}{py^{T}}\biggr)-y\biggr)\leq0,
\end{eqnarray*}
for $y,p\in\mathbb{R}^{L}_{++}$, $1\leq i,j\leq L$.
\end{definition}

The next result reveals that composition with strictly increasing smooth functions preserves the attractiveness and sharpness of the utility function, since these properties are actually related to the underlying preferences.

\begin{lemma}\label{lemmaUtilitiesPreferencesAttractive}
 If $u:\mathbb{R}^{L}_{+}\rightarrow\mathbb{R}$ is attractive (sharp), and $g:\mathbb{R}\rightarrow\mathbb{R}$ is a strictly increasing smooth function, then $g\circ u$ is attractive (sharp).
\end{lemma}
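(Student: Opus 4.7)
The plan is to exploit the ordinality of Walrasian demand together with an algebraic identity showing that the skew-symmetric factor in the attractiveness inequality annihilates the rank-one correction produced by the chain rule on the Hessian.

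First I would note that both the normalized Walrasian demand $x_n$ and its inverse depend only on the underlying preferences, not on the particular utility representation. Indeed, $x_n(p)$ is defined via the utility maximization problem, which is invariant under strictly increasing transformations; alternatively, Proposition \ref{propInverseDemand} gives
\begin{eqnarray*}
x^{-1}_{n,g\circ u}(c)=\frac{\nabla(g\circ u)(c)}{\nabla(g\circ u)(c)c^{T}}=\frac{g'(u(c))\nabla u(c)}{g'(u(c))\nabla u(c)c^{T}}=x^{-1}_{n,u}(c).
\end{eqnarray*}
Hence the quantities $x^{-1}_{n}(y)e_{i}^{T}$, the ratios $x^{-1}_{n}(y)e_{i}^{T}/x^{-1}_{n}(y)e_{j}^{T}$, and the bundle $x_{n}(p/py^{T})$ are all identical for $u$ and $g\circ u$. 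Sharpness involves only these objects in both its hypothesis and its conclusion, so it transfers immediately.

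For attractiveness, the only new ingredient is the Hessian, and the chain rule gives
\begin{eqnarray*}
\mathbf{H}(g\circ u)(y)=g''(u(y))\,\nabla u(y)^{T}\nabla u(y)+g'(u(y))\,\mathbf{H}u(y).
\end{eqnarray*}
I would then plug this into the attractiveness inequality for $g\circ u$ and show that the rank-one $g''$-term contributes zero. Writing $a_k:=x^{-1}_{n}(y)e_{k}^{T}=\nabla u(y)_{k}/(\nabla u(y)y^{T})$, a direct expansion yields
\begin{eqnarray*}
x^{-1}_{n}(y)(e_{j}^{T}e_{i}-e_{i}^{T}e_{j})=a_{j}e_{i}-a_{i}e_{j},
\end{eqnarray*}
so that
\begin{eqnarray*}
x^{-1}_{n}(y)(e_{j}^{T}e_{i}-e_{i}^{T}e_{j})\nabla u(y)^{T}=a_{j}\nabla u(y)_{i}-a_{i}\nabla u(y)_{j}=\frac{\nabla u(y)_{j}\nabla u(y)_{i}-\nabla u(y)_{i}\nabla u(y)_{j}}{\nabla u(y)y^{T}}=0.
\end{eqnarray*}
This is the crucial cancellation: the skew-symmetric structure of $e_{j}^{T}e_{i}-e_{i}^{T}e_{j}$, combined with the fact that $x^{-1}_{n}(y)$ is proportional to $\nabla u(y)$, makes the left multiplier of the rank-one piece vanish.

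Consequently the rank-one contribution drops out entirely, and the attractiveness expression for $g\circ u$ reduces to $g'(u(y))$ times the expression for $u$; since $g'(u(y))>0$, the sign is preserved and the inequality carries over. The main obstacle is purely notational, namely unwinding the row/column conventions to confirm the identity $(a_{j}e_{i}-a_{i}e_{j})\nabla u(y)^{T}=0$; once that is in hand, the rest is direct substitution. I would conclude by invoking Proposition \ref{propInverseDemand} one last time to justify that $a_{k}\propto\nabla u(y)_{k}$, which is what makes the cancellation possible and explains, at a conceptual level, why attractiveness is a property of preferences rather than of the particular utility representation.
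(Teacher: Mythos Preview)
Your proposal is correct and follows essentially the same route as the paper: you use the ordinality of $x_n$ (and hence $x_n^{-1}$) to handle sharpness immediately, apply the chain rule for the Hessian, and then kill the rank-one $g''$-term via the skew-symmetry identity $x^{-1}_{n}(y)(e_{j}^{T}e_{i}-e_{i}^{T}e_{j})\nabla u(y)^{T}=0$, which is exactly the paper's observation $x^{-1}_{n}(y)(e_{j}^{T}e_{i}-e_{i}^{T}e_{j})x^{-1}_{n}(y)^{T}=0$ up to the scalar $\nabla u(y)y^{T}$. The only cosmetic difference is that the paper writes the rank-one correction as $g''(u(y))(\nabla u(y)y^{T})^{2}x^{-1}_{n}(y)^{T}x^{-1}_{n}(y)$ rather than $g''(u(y))\nabla u(y)^{T}\nabla u(y)$, but these are identical.
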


One could think that attractiveness and sharpness, especially the first, are rarely found. The following result reveals that the ubiquitous Cobb-Douglas and Constant Relative Risk Aversion (CRRA) utility functions are attractive and sharp.  

\begin{proposition}\label{propIsAttractive}
Let $u:\mathbb{R}^{L}_{+}\rightarrow [-\infty,+\infty)$ be given by $u(c)=(\sum^{L}_{i=1}\alpha_{i}c_{i}^{\sigma})^{\frac{1}{\sigma}}$, $\sigma\in(0,1)$, or $u(c)=\sum^{L}_{i=1}\alpha_{i}\ln c_{i}$, $\alpha_{i}>0$, $1\leq i\leq L$, $\sum^{L}_{i=1}\alpha_{i}=1$. Then $u(\cdot)$ is attractive and sharp.
\end{proposition}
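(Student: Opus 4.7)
The plan is to reduce both families to additively separable representatives using Lemma \ref{lemmaUtilitiesPreferencesAttractive}. The Cobb-Douglas form $u(c)=\sum_i\alpha_i\ln c_i$ is already additive, and for CES I would apply the strictly increasing smooth transformation $g(t)=t^{\sigma}$ on $\mathbb{R}_{++}$ to replace $(\sum_i\alpha_i c_i^{\sigma})^{1/\sigma}$ by $\tilde u(c)=\sum_i\alpha_i c_i^{\sigma}$. Both representatives then take the form $u=\sum_i\phi_i(c_i)$ with $\phi_i'>0$ and $\phi_i''<0$, so the Hessian $\mathbf{H}u(y)$ is \emph{diagonal} with strictly negative entries. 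In both cases the normalized Walrasian demand is explicit — $x_{ni}(p)=\alpha_i/p_i$ for Cobb-Douglas, and $x_{ni}(p)=\alpha_i^{\rho}p_i^{-\rho}/D(p)$ with $\rho=1/(1-\sigma)$ and $D(p)=\sum_k\alpha_k^{\rho}p_k^{1-\rho}$ for CES — and each is homogeneous of degree $-1$ in $p$, yielding the identity $x_n(p/py^T)=(py^T)\,x_n(p)$, which drives every sign analysis below. As a preliminary, the sharpness hypothesis $p_i>\max_{j\ne i}p_j u_i(y)/u_j(y)$ is equivalent to $p_i u_j(y)>p_j u_i(y)$ for every $j\ne i$.

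For sharpness, I would substitute $x_n(p/py^T)=(py^T)\,x_n(p)$ into $x_{ni}(p/py^T)-y_i$ and use $\sum_k\alpha_k=1$ to rewrite the result as a sum over $k\ne i$ of same-signed terms. For Cobb-Douglas this produces
\begin{eqnarray*}
x_{ni}(p/py^T)-y_i=p_i^{-1}\sum_{k\ne i}\bigl(\alpha_i p_k y_k-\alpha_k p_i y_i\bigr),
\end{eqnarray*}
and the hypothesis $p_i u_k(y)>p_k u_i(y)$ becomes $\alpha_k p_i y_i>\alpha_i p_k y_k$ after substituting $u_k=\alpha_k/y_k$, so every summand is strictly negative. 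For CES the same manipulation produces a sum $\sum_{k\ne i}p_k\bigl[(\alpha_i/p_i)^{\rho}y_k-(\alpha_k/p_k)^{\rho}y_i\bigr]$ (times a positive factor), and the hypothesis becomes $(\alpha_i/p_i)^{\rho}y_k<(\alpha_k/p_k)^{\rho}y_i$ after raising $p_k u_i<p_i u_k$ to the positive power $\rho$, again making every summand strictly negative. The symmetric lower-bound case is handled by reversing every inequality.

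For attractiveness, I would first unpack the defining expression: $x_n^{-1}(y)(e_j^T e_i-e_i^T e_j)$ is the row vector with $\pi_j$ at position $i$ and $-\pi_i$ at position $j$, where $\pi_k=u_k(y)/(\nabla u(y)y^T)$, so contracting with the diagonal Hessian annihilates all cross terms and collapses the expression to
\begin{eqnarray*}
\bigl(u_i/u_j-p_i/p_j\bigr)\cdot\bigl[u_j(y)u_{ii}(y)\Delta_i-u_i(y)u_{jj}(y)\Delta_j\bigr]\cdot C,
\end{eqnarray*}
with $\Delta_k=x_{nk}(p/py^T)-y_k$ and $C=1/(\nabla u(y)y^T)>0$. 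For Cobb-Douglas, substituting the explicit $u_i,u_{ii},\Delta_k$ factors the bracket as a strictly negative multiple of $(u_i/u_j-p_i/p_j)$, reducing the whole quantity to a non-positive multiple of $(u_i/u_j-p_i/p_j)^2$. For CES, the ratio identity $u_j u_{ii}/(u_i u_{jj})=y_j/y_i$ rewrites the bracket as $u_i u_{jj}\bigl[(y_j/y_i)\Delta_i-\Delta_j\bigr]$, and the degree $-1$ homogeneity shows this is a positive multiple of $y_j(\alpha_i/p_i)^{\rho}-y_i(\alpha_j/p_j)^{\rho}$, which by the sharpness translation has the \emph{same} sign as $u_i/u_j-p_i/p_j$; since $u_i u_{jj}<0$, the product acquires the opposite sign and is $\le 0$. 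The main obstacle I anticipate is not the arithmetic but the notational unpacking of the skew-symmetric wedge $e_j^T e_i-e_i^T e_j$; once additive separability is invoked and the Hessian is seen to be diagonal, the expression collapses to a two-coordinate bracket and both parts are controlled by the same sharpness estimate.
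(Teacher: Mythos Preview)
Your argument is correct. The core sign analysis is the same as the paper's: both routes reduce the attractiveness inequality to a product of two factors with matching sign (times a negative constant), and both verify sharpness by bounding each coordinate of the excess demand. The organizational difference is that you first invoke Lemma~\ref{lemmaUtilitiesPreferencesAttractive} to strip the outer power from the CES form, so that the working utility is additively separable and its Hessian is diagonal from the outset; the paper instead keeps the full CES utility, computes the Hessian as a rank-one piece $\theta(c)^{T}\theta(c)$ plus a diagonal piece, and then observes that the rank-one piece is annihilated by the skew-symmetric wedge $x_{n}^{-1}(c)(e_{j}^{T}e_{i}-e_{i}^{T}e_{j})$. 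Your route is slightly cleaner conceptually (the diagonal structure is immediate, and the homogeneity identity $x_{n}(p/py^{T})=(py^{T})x_{n}(p)$ for homothetic preferences packages the algebra nicely), while the paper's route has the advantage of handling both families with a single parameter $\eta=1/(1-\sigma)$ and a single explicit formula for $\Delta_{ij}(p,c)$, rather than treating Cobb--Douglas and CES as parallel cases.
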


As mentioned above, although Theorem \ref{theoAttractionPrinciple} (Attraction Principle) and Proposition \ref{propTradeUnfold} characterize the dynamic of the set of dominant floor coordinates when trade unravels according to joint linear trade paths, the actual dynamic of each household linear trade path seen in the flat domain still lacks a more precise description. The next result provides a better picture of this dynamic when utility functions are attractive.

\begin{proposition}\label{propAttractionOfMarginalRates}
Let $u(\cdot)$ be attractive and $c_{y,p}:[0,1]\rightarrow\mathbb{R}^{2}_{++}$ the linear trade path under prices $p\in\mathbb{R}^{L}_{++}$ starting at $y\in\mathbb{R}^{L}_{++}$. Then, $\delta_{ij}:[0,1]\rightarrow\mathbb{R}$, $i,j\in\{1,\ldots,L\}$, given by 
\begin{eqnarray*}
    \delta_{ij}(t)=\biggr(\frac{x^{-1}_{n}(c_{y,p}(t))e_{i}^{T}}{x^{-1}_{n}(c_{y,p}(t))e_{j}^{T}}-\frac{p_{i}}{p_{j}}\biggr)^{2},
\end{eqnarray*}
is non-increasing and $\delta_{ij}(1)=0$.
\end{proposition}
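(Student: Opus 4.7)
The plan is to differentiate $\delta_{ij}$ along the trade path and, after a rearrangement, identify the resulting expression with the left-hand side of Definition \ref{defAttractiveUtility}, so that non-positivity follows directly from attractiveness. The terminal value is essentially free: since $c_{y,p}(1)=x_n(p/py^T)$, Proposition \ref{propInverseDemand} gives $x_n^{-1}(c_{y,p}(1))=p/py^T$, whose $i$-to-$j$ coordinate ratio is $p_i/p_j$, and therefore $\delta_{ij}(1)=0$.

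The structural step that makes the plan work is an invariance along the trade path. Set $z:=x_n(p/py^T)$ and $\tilde y:=c_{y,p}(t)=y+t(z-y)$; the identity $qx_n(q)^T=1$ from Lemma \ref{lemmaBasicIdentities}(i), applied at $q=p/py^T$, gives $pz^T=py^T$, hence $p\tilde y^T=py^T$ for all $t\in[0,1]$. So the normalized price $p/p\tilde y^T$ is constant along the trade path, and
$$x_n(p/p\tilde y^T)-\tilde y=z-\tilde y=(1-t)\,c'_{y,p}(t).$$

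Next I would compute $d\delta_{ij}/dt$ explicitly. Writing $r_{ij}(c):=(\partial u/\partial c_i)(c)/(\partial u/\partial c_j)(c)$, so that $r_{ij}(c)=x_n^{-1}(c)e_i^T/x_n^{-1}(c)e_j^T$, the quotient rule together with $x_n^{-1}(c)=\nabla u(c)/(\nabla u(c)c^T)$ gives
$$\nabla r_{ij}(c)=\frac{\nabla u(c)\,c^T}{\bigl((\partial u/\partial c_j)(c)\bigr)^{2}}\bigl((x_n^{-1}(c))_j e_i-(x_n^{-1}(c))_i e_j\bigr)\mathbf{H}u(c).$$
Substituting this into $d\delta_{ij}/dt=2(r_{ij}(\tilde y)-p_i/p_j)\,\nabla r_{ij}(\tilde y)\,(c'_{y,p}(t))^T$ and replacing $c'_{y,p}(t)$ by $(1-t)^{-1}(x_n(p/p\tilde y^T)-\tilde y)$ on $[0,1)$ realises the derivative as the left-hand side of Definition \ref{defAttractiveUtility} evaluated at $\tilde y$, multiplied by the strictly positive factor $2\,\nabla u(\tilde y)\tilde y^T/\bigl[\bigl((\partial u/\partial c_j)(\tilde y)\bigr)^{2}(1-t)\bigr]$. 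Attractiveness then yields $d\delta_{ij}/dt\leq 0$ on $[0,1)$, and continuity of $\delta_{ij}$ extends the monotonicity to $[0,1]$.

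The main obstacle I expect is the second paragraph: recognizing that $p\tilde y^T$ is \emph{constant} along the linear trade path. Without this invariance, Definition \ref{defAttractiveUtility} — whose direction vector $x_n(p/py^T)-y$ is pinned to the starting point $y$ — could only be applied at $\tilde y=y$; the invariance is precisely what promotes the pointwise bound at the starting point into a pointwise bound at every intermediate $\tilde y=c_{y,p}(t)$, and hence into the global monotonicity of $\delta_{ij}$ on $[0,1]$.
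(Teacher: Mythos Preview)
Your proposal is correct and follows essentially the same route as the paper: differentiate $\delta_{ij}$ along the path, use the quotient rule on the ratio of partials, and identify the result (up to a strictly positive scalar involving $(1-t)^{-1}$) with the expression in Definition~\ref{defAttractiveUtility} evaluated at $\tilde y=c_{y,p}(t)$. You are in fact more explicit than the paper about the key invariance $p\tilde y^{T}=py^{T}$ (which the paper leaves implicit, having used it earlier in the proof of Lemma~\ref{lemmaLinearIncreasingUtility}); this is exactly the step that turns $c'_{y,p}(t)$ into $(1-t)^{-1}\bigl(x_n(p/p\tilde y^{T})-\tilde y\bigr)$ and makes the attractiveness inequality applicable at every intermediate point.
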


Proposition \ref{propAttractionOfMarginalRates} is directly related to the Attraction Principle. Let $y\in \mathbb{R}^{HL}_{++}$, $H\geq2$, be an allocation, $q\in\mathcal{T}(y)$, $p=(q,1)$, and $\sigma\in\mathcal{S}(y,p)$. Also, suppose that all utility functions are attractive. Notice that 
\begin{eqnarray*}
    \biggr(\frac{x^{-1}_{nh}(c_{y_{h},p}(\sigma_{h}t))e_{i}^{T}}{x^{-1}_{nh}(c_{y_{h},p}(\sigma_{h}t))e_{L}^{T}}-\frac{p_{i}}{p_{L}}\biggr)^{2}=(f_{hi}(c_{y_{h},p}(\sigma_{h}t))-q_{i})^{2},
\end{eqnarray*}
for $t\in[0,1]$, $1\leq i\leq L-1$, $1\leq h \leq H$. Then, Proposition \ref{propAttractionOfMarginalRates} implies that $\pi_{L-1}(f_{h}(c_{y_{h},p}(\sigma_{h}t)))$, $t\in[0,1]$, $1\leq h\leq H$, is ``attracted'' by $q\in\mathbb{R}^{L-1}$, coordinate by coordinate, thus furnishing a well-behaved dynamic. 

Therefore, when trade unfolds according to a linear trade path, the projection in the floor of the flat domain of the counterpart of the trade path gets closer each time to the ongoing normalized trade price, coordinate by coordinate. If we are looking at a maximal joint linear trade path, then not only the set of dominant floor coordinates shrinks toward the normalized prices, but the projection of every household trade path moves closer to such prices, coordinate by coordinate, in every step of the way.

The figure below provides a graphical depiction, in the consumption domain, of an attractive utility function and an unattractive one.

\begin{figure}[H]
\centering
\includegraphics{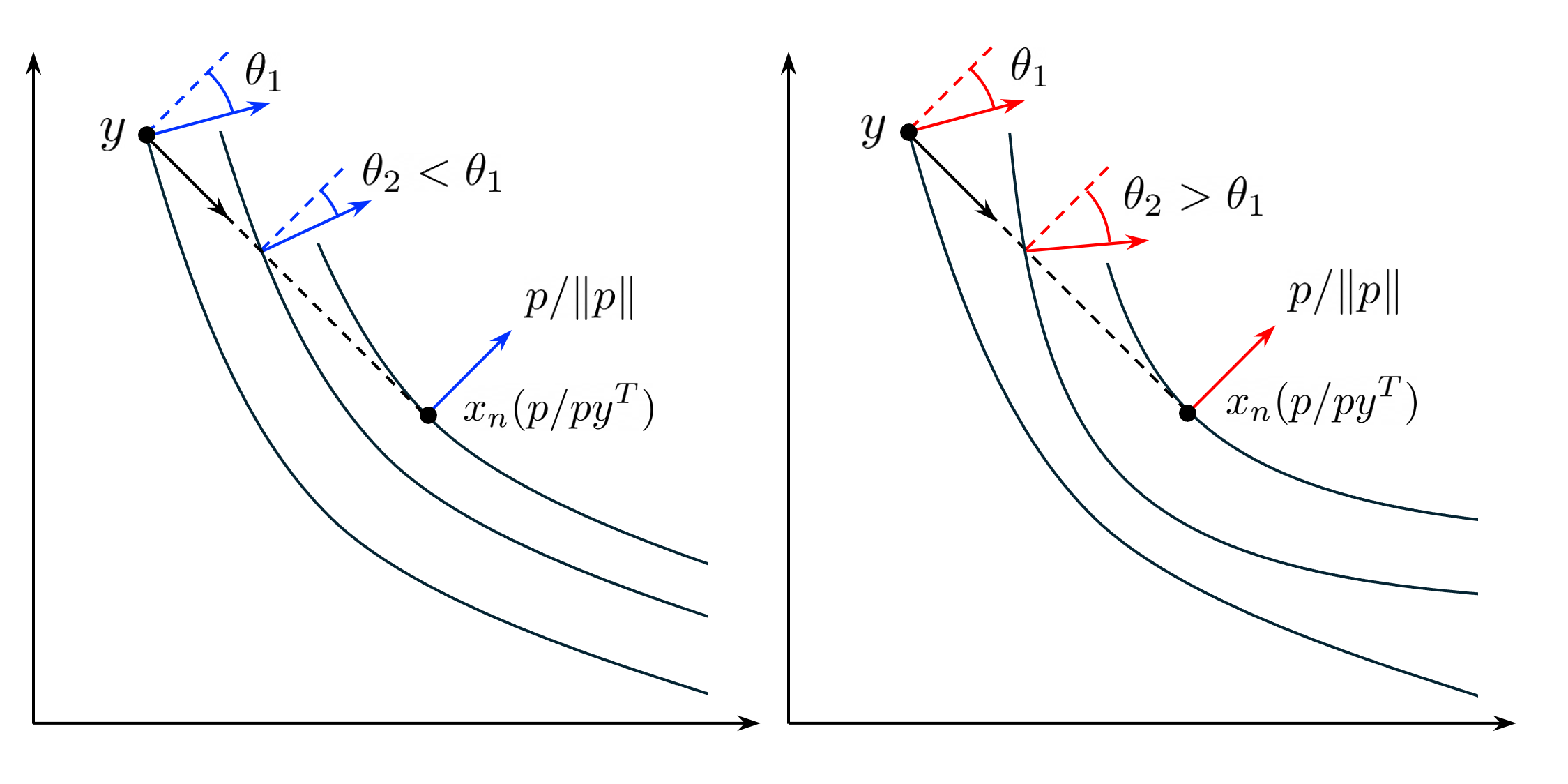}
\caption{Graphical depiction, in the consumption domain, of an attractive utility function (left) and an unattractive one (right).}
\label{FigAttractiveUtilityFunction}
\end{figure}

To state the next results, let $m_{ij}:\mathbb{R}^{HL}_{++}\rightarrow \mathbb{R}_{++}$ and $M_{ij}:\mathbb{R}^{HL}_{++}\rightarrow \mathbb{R}_{++}$, $1\leq i,j\leq L$, $i\neq j$, be given by $m_{ij}(y)=\min_{1\leq h\leq H}x^{-1}_{nh}(y_{h})e_{i}^{T}/x^{-1}_{nh}(y_{h})e_{j}^{T}$ and $M_{ij}(y)=\max_{1\leq h\leq H}x^{-1}_{nh}(y_{h})e_{i}^{T}/x^{-1}_{nh}(y_{h})e_{j}^{T}$, so that $m_{ij}(y)=M_{ji}(y)^{-1}$. These functions furnish, for a given allocation, the minimum and maximum values attained by the marginal substitution rates. Also, let $\mathcal{B}(y)\subset\mathbb{R}^{L-1}_{++}$ be given by
\begin{eqnarray*}
    \mathcal{B}(y)&=&\{q\in\mathbb{R}^{L-1}_{++}\mid \min_{j\neq i}p_{j} m_{ij}(y)\leq p_{i}\leq \max_{j\neq i}p_{j}M_{ij}(y), 1\leq i\leq L,p=(q,1)\},\\
    &=&\bigcap^{L}_{i=1}\bigcup_{j,k\neq i}\biggr(\mathcal{B}^{-}_{ij}(y)\bigcap \mathcal{B}^{+}_{ik}(y)\biggr),
\end{eqnarray*}
with its \textit{$(L-1)$-dimensional half-spaces} $\mathcal{B}^{-}_{ij},\mathcal{B}^{+}_{ik}\subset\mathbb{R}^{L-1}_{++}$, given by $\mathcal{B}^{-}_{ij}(y)=\{q\in\mathbb{R}^{L-1}_{++}\mid p_{j}m_{ij}(y)\leq p_{i},p=(q,1)\}$ and $\mathcal{B}^{+}_{ik}(y)=\{q\in\mathbb{R}^{L-1}_{++}\mid p_{i}\leq p_{k}M_{ik}(y),p=(q,1)\}$, for $1\leq i,j,k\leq L$, $j,k\neq i$. The following lemma reveals that $\mathcal{B}(\cdot)$ can be used as a box set to Monte Carlo a BSNTP.
\begin{lemma}\label{lemmaTradePricesInBox}
Let $y\in\mathbb{R}^{HL}_{++}$, $H\geq2$, be an allocation. If $u_{h}(\cdot)$ is sharp, $1\leq h\leq H$, then $\mathcal{T}(y)\subseteq \mathcal{B}(y)$.
\end{lemma}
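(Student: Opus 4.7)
The plan is to argue by contraposition. Suppose $q \in \mathcal{T}(y)$ and set $p = (q,1) \in \mathbb{R}^{L}_{++}$; I will show that for every coordinate $i \in \{1,\ldots,L\}$, $\min_{j \neq i} p_j m_{ij}(y) \leq p_i \leq \max_{j \neq i} p_j M_{ij}(y)$. By Definition \ref{defTradeSpeedTradePrices}, nonemptiness of $\mathcal{S}(y,p)$ produces some $\sigma \in [0,1]^{H}$ satisfying
\[
\sum_{h=1}^{H}\sigma_{h}\bigl(x_{nh}(p/p y_h^{T}) - y_h\bigr) = 0, \qquad \sum_{h=1}^{H}\sigma_{h}\bigl\| x_{nh}(p/p y_h^{T}) - y_h \bigr\| > 0.
\]
Read coordinate by coordinate, the first identity gives $\sum_h \sigma_h\bigl(x_{nh,i}(p/p y_h^T) - y_{hi}\bigr) = 0$ for every $i$, while the second forces at least one $\sigma_h$ to be strictly positive alongside a nontrivial trade direction.

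Suppose, for contradiction, that $q \notin \mathcal{B}(y)$. Then some index $i$ violates one of the two bounds. Consider first the upper failure $p_i > \max_{j \neq i} p_j M_{ij}(y)$. Since $M_{ij}(y)$ is the maximum over households of $x^{-1}_{nh}(y_h) e_i^T / x^{-1}_{nh}(y_h) e_j^T$, the commutation of the two maxima yields, for \emph{every} household $h$,
\[
p_i \;>\; \max_{j \neq i} \frac{p_j\, x^{-1}_{nh}(y_h) e_i^T}{x^{-1}_{nh}(y_h) e_j^T}.
\]
The upper branch of sharpness (Definition \ref{defSharpUtility}) applied to each $u_h$ then gives $x_{nh,i}(p/p y_h^T) - y_{hi} < 0$ uniformly in $h$. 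Consequently every summand in $\sum_h \sigma_h(x_{nh,i}(p/p y_h^T) - y_{hi}) = 0$ is non-positive, and strictly negative whenever $\sigma_h > 0$, so the sum vanishes only if $\sigma_h = 0$ for every $h$. This contradicts the non-triviality clause $\sum_h \sigma_h \|x_{nh}(p/p y_h^T) - y_h\| > 0$, since the $i$-th coordinate of $x_{nh}(p/p y_h^T) - y_h$ is nonzero for every $h$.

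The lower failure $p_i < \min_{j \neq i} p_j m_{ij}(y)$ is symmetric: it forces $p_i < \min_{j \neq i} p_j\, x^{-1}_{nh}(y_h) e_i^T / x^{-1}_{nh}(y_h) e_j^T$ uniformly in $h$, the second branch of sharpness yields $x_{nh,i}(p/p y_h^T) - y_{hi} > 0$ uniformly in $h$, and the same feasibility identity again forces $\sigma \equiv 0$, contradicting membership in $\mathcal{S}(y,p)$. The whole argument is essentially a bookkeeping exercise; the one real insight — and the reason Definition \ref{defSharpUtility} is tailored this way — is that the ``$\min$/$\max$ over $h$'' in $m_{ij}$ and $M_{ij}$ is exactly what promotes the individual sharpness condition into the uniform coordinate-wise sign pattern that kills the aggregate feasibility equation. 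I do not anticipate any serious obstacle beyond being careful about which direction of the inequality activates which half of Definition \ref{defSharpUtility}.
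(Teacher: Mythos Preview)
Your proposal is correct and follows essentially the same contrapositive argument as the paper: assume $q\in\mathcal{T}(y)\setminus\mathcal{B}(y)$, use the violated bound together with the definition of $M_{ij}$ (resp.\ $m_{ij}$) to promote sharpness into a uniform strict sign on the $i$-th excess demand across all households, and then observe that the feasibility identity $\sum_h\sigma_h(x_{nh}(p/py_h^T)-y_h)=0$ forces $\sigma\equiv 0$, contradicting non-triviality. The only cosmetic difference is that you are more careful than the paper in distinguishing $q\in\mathbb{R}^{L-1}_{++}$ from $p=(q,1)\in\mathbb{R}^{L}_{++}$, which is a plus.
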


For the next results, let $\mathcal{H}^{-}_{ij}(y,p)=\{h\in\{1,\ldots,H\}\mid  x^{-1}_{nh}(y_{h})e_{i}^{T}/x^{-1}_{nh}(y_{h})e_{j}^{T}\leq p_{i}/p_{j}\}$ and $\mathcal{H}^{+}_{ij}(y,p)=\{h\in\{1,\ldots,H\}\mid p_{i}/p_{j}\leq x^{-1}_{nh}(y_{h})e_{i}^{T}/x^{-1}_{nh}(y_{h})e_{j}^{T}\}$, for $p\in\mathbb{R}^{L}_{++}$, $1\leq i,j\leq L$, $i\neq j$, with $\mathcal{H}^{-}_{ij}=\mathcal{H}^{+}_{ji}$.

\begin{lemma}\label{lemmaMonotoneMm}
Let $y\in\mathbb{R}^{HL}_{++}$, $H\geq2$, be an allocation, $q\in\mathcal{T}(y)$, $p=(q,1)$ and $\sigma\in\mathcal{S}(y,p)$. If $u_{h}(\cdot)$ is attractive, $1\leq h\leq H$, then, for $1\leq i,j \leq L$, $i\neq j$: (i) if $\mathcal{H}^{-}_{ij}(y,p)\neq\emptyset$, then $m_{ij}\circ c_{y,p,\sigma}$ is non-decreasing; (ii) if $\mathcal{H}^{-}_{ij}(y,p)=\emptyset$, then $m_{ij}\circ c_{y,p,\sigma}$ is non-increasing; (iii) if $\mathcal{H}^{+}_{ij}(y,p)\neq\emptyset$, then $M_{ij}\circ c_{y,p,\sigma}$ is non-increasing; and, (iv) if $\mathcal{H}^{+}_{ij}(y,p)=\emptyset$, then $M_{ij}\circ c_{y,p,\sigma}$ is non-decreasing.
\end{lemma}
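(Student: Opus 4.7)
The plan is to apply Proposition \ref{propAttractionOfMarginalRates} household by household and then combine by min/max. For each $h$, set
\[r_{h,ij}(t):=\frac{x^{-1}_{nh}(c_{y_{h},p}(\sigma_{h}t))e_{i}^{T}}{x^{-1}_{nh}(c_{y_{h},p}(\sigma_{h}t))e_{j}^{T}},\]
so that $m_{ij}(c_{y,p,\sigma}(t))=\min_{h}r_{h,ij}(t)$, $M_{ij}(c_{y,p,\sigma}(t))=\max_{h}r_{h,ij}(t)$, and $h\in\mathcal{H}^{-}_{ij}(y,p)$ (resp.\ $\mathcal{H}^{+}_{ij}(y,p)$) is exactly the statement $r_{h,ij}(0)\leq p_{i}/p_{j}$ (resp.\ $\geq$).

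First I would invoke Proposition \ref{propAttractionOfMarginalRates} on each attractive $u_{h}$ to conclude that $(r_{h,ij}(t)-p_{i}/p_{j})^{2}$ is non-increasing in $t$, noting that the proposition's conclusion for the linear trade path $c_{y_{h},p}$ transfers to the reparametrization by the non-decreasing map $t\mapsto \sigma_{h}t$. Continuity of $r_{h,ij}$ then forces a ``no-crossing'' dichotomy: either $r_{h,ij}(t)\leq p_{i}/p_{j}$ for every $t$ and $r_{h,ij}$ is non-decreasing, or $r_{h,ij}(t)\geq p_{i}/p_{j}$ for every $t$ and $r_{h,ij}$ is non-increasing. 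Indeed, if $r_{h,ij}$ ever strictly crossed $p_{i}/p_{j}$, the non-negative quantity $|r_{h,ij}(\cdot)-p_{i}/p_{j}|$ would have to first hit zero and then grow, contradicting the fact that it is non-increasing.

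For case (i), fix $h^{*}\in\mathcal{H}^{-}_{ij}(y,p)$; by the dichotomy $r_{h^{*},ij}(t)\leq p_{i}/p_{j}$ for all $t$, whence $m_{ij}(c_{y,p,\sigma}(t))\leq p_{i}/p_{j}$ throughout. For $0\leq t_{1}\leq t_{2}\leq 1$ and any $h$: if $r_{h,ij}(0)\leq p_{i}/p_{j}$, then $r_{h,ij}(t_{2})\geq r_{h,ij}(t_{1})\geq m_{ij}(c_{y,p,\sigma}(t_{1}))$; if $r_{h,ij}(0)>p_{i}/p_{j}$, then $r_{h,ij}(t_{2})\geq p_{i}/p_{j}\geq m_{ij}(c_{y,p,\sigma}(t_{1}))$. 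Taking the minimum over $h$ yields the non-decreasingness of $m_{ij}\circ c_{y,p,\sigma}$. Case (ii) is immediate: $\mathcal{H}^{-}_{ij}(y,p)=\emptyset$ means $r_{h,ij}(0)>p_{i}/p_{j}$ for every $h$, so each $r_{h,ij}$ is non-increasing by the dichotomy, and a finite min of non-increasing functions is non-increasing. Cases (iii) and (iv) follow by the mirror-image argument, obtained by swapping $\min\leftrightarrow\max$ and $\leq\leftrightarrow\geq$.

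I expect the only real (and modest) obstacle to be the no-crossing step, which is the sole place where continuity of $r_{h,ij}$ and the squared formulation of Proposition \ref{propAttractionOfMarginalRates} must interact non-trivially; once that dichotomy is secured, the four cases reduce to routine bookkeeping over the two-by-two combinations of ``side of $p_{i}/p_{j}$'' and ``min vs.\ max.''
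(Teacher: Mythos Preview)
Your proposal is correct and follows essentially the same route as the paper: invoke Proposition~\ref{propAttractionOfMarginalRates} on each household's (reparametrized) linear trade path to obtain the sign-preserving monotonicity of $r_{h,ij}(\cdot)-p_i/p_j$, then split households according to membership in $\mathcal{H}^{-}_{ij}(y,p)$ and combine via min/max. The only cosmetic difference is that in case~(i) the paper observes directly that the minimum is always attained on $\mathcal{H}^{-}_{ij}(y,p)$ and then uses ``min of non-decreasing functions is non-decreasing,'' whereas you verify $r_{h,ij}(t_2)\geq m_{ij}(c_{y,p,\sigma}(t_1))$ household by household; for (iii)--(iv) the paper shortcuts via $M_{ij}=m_{ji}^{-1}$ and $\mathcal{H}^{+}_{ij}=\mathcal{H}^{-}_{ji}$ rather than repeating the mirror argument.
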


Lemma \ref{lemmaMonotoneMm} is a necessary step towards our previous to last result.

\begin{theorem}\label{theoNested}
Let $y\in\mathbb{R}^{HL}_{++}$, $H\geq2$, be an allocation, $q\in\mathcal{T}(y)$, $p=(q,1)$ and $\sigma\in\mathcal{S}(u,p)$. If $u_{h}(\cdot)$ is attractive and sharp, $1\leq h\leq H$, then $q\in\mathcal{T}(c_{y,p,\sigma}(t))\subseteq \mathcal{B}(c_{y,p,\sigma}(t))\cap \mathcal{F}(c_{y,p,\sigma}(t))$, for $0\leq t<1$, and $q\in \mathcal{B}(c_{y,p,\sigma}(1))\cap \mathcal{F}(c_{y,p,\sigma}(1))$. Furthermore,  
\begin{eqnarray*}
    \mathcal{B}(c_{y,p,\sigma}(t))=\bigcap^{L}_{i=1}\bigcup_{j,k\neq i}\biggr(\mathcal{B}^{-}_{ij}(c_{y,p,\sigma}(t))\bigcap \mathcal{B}^{+}_{ik}(c_{y,p,\sigma}(t))\biggr)
\end{eqnarray*}
for $0\leq t\leq1$, with all $(L-1)$-dimensional half-spaces being either non-increasing or non-decreasing. If $\mathcal{H}^{-}_{ij}(y,p)\neq\emptyset$, $1\leq i,j\leq L$, $i\neq j$, then $\{\mathcal{B}(c_{y,p,\sigma}(t))\}_{0\leq t\leq1}$ is a non-increasing net.
\end{theorem}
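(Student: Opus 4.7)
The plan is to reduce the theorem to already-established results---namely Proposition \ref{propTradeUnfold}, Lemma \ref{lemmaTradePricesInBox} and Lemma \ref{lemmaMonotoneMm}---with only one genuine continuity argument needed, at $t=1$. For $t\in[0,1)$, Proposition \ref{propTradeUnfold} already delivers $q\in\mathcal{T}(c_{y,p,\sigma}(t))\subseteq\mathcal{F}(c_{y,p,\sigma}(t))$, and since each $u_{h}$ is sharp, Lemma \ref{lemmaTradePricesInBox} applied to the allocation $c_{y,p,\sigma}(t)$ yields $\mathcal{T}(c_{y,p,\sigma}(t))\subseteq\mathcal{B}(c_{y,p,\sigma}(t))$, settling the first inclusion chain.

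The identity for $\mathcal{B}(c_{y,p,\sigma}(t))$ is pure set-algebra: the inequality $\min_{j\neq i}p_{j}m_{ij}(c(t))\leq p_{i}$ is equivalent to the existence of some $j\neq i$ with $q\in\mathcal{B}^{-}_{ij}(c(t))$, and analogously $p_{i}\leq\max_{k\neq i}p_{k}M_{ik}(c(t))$ is equivalent to $q\in\bigcup_{k\neq i}\mathcal{B}^{+}_{ik}(c(t))$; distributing the intersection of these two unions across the index $i$ gives the stated formula. The monotonicity of each half-space then follows from Lemma \ref{lemmaMonotoneMm}: since $\mathcal{B}^{-}_{ij}(c(t))$ is the closed half-space $\{q:p_{j}m_{ij}(c(t))\leq p_{i}\}$, a non-decreasing $m_{ij}\circ c_{y,p,\sigma}$ forces $\mathcal{B}^{-}_{ij}\circ c_{y,p,\sigma}$ to be non-increasing, and a non-increasing $m_{ij}\circ c_{y,p,\sigma}$ forces it to be non-decreasing, with the analogous correspondence for $\mathcal{B}^{+}_{ik}$ via $M_{ik}\circ c_{y,p,\sigma}$. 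Lemma \ref{lemmaMonotoneMm} exhausts the two possibilities in every case, so each half-space is indeed monotone along the path.

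The main obstacle is the endpoint inclusion $q\in\mathcal{B}(c_{y,p,\sigma}(1))$, because Lemma \ref{lemmaTradePricesInBox} is not directly available at $t=1$ (we might have $\mathcal{T}(c(1))=\emptyset$). I would use the following pigeonhole-and-limit argument: for each $t\in[0,1)$ and each $1\leq i\leq L$, the membership $q\in\mathcal{B}(c(t))$ supplies indices $j(t,i),k(t,i)\in\{1,\ldots,L\}\setminus\{i\}$ with $q\in\mathcal{B}^{-}_{i,j(t,i)}(c(t))\cap\mathcal{B}^{+}_{i,k(t,i)}(c(t))$. Finiteness of the index set yields a sequence $t_{n}\uparrow 1$ along which $j(t_{n},i)=j^{*}_{i}$ and $k(t_{n},i)=k^{*}_{i}$ are constant; continuity of $m_{ij}$, $M_{ij}$ and $c_{y,p,\sigma}$ then lets the closed half-space inequalities pass to the limit, giving $q\in\mathcal{B}^{-}_{i,j^{*}_{i}}(c(1))\cap\mathcal{B}^{+}_{i,k^{*}_{i}}(c(1))$ for each $i$, hence $q\in\mathcal{B}(c(1))$. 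Membership $q\in\mathcal{F}(c(1))$ is recorded in Proposition \ref{propTradeUnfold}.

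For the final assertion, the identity $\mathcal{H}^{+}_{ij}=\mathcal{H}^{-}_{ji}$ stated just before Lemma \ref{lemmaMonotoneMm} implies that the hypothesis $\mathcal{H}^{-}_{ij}(y,p)\neq\emptyset$ for every $i\neq j$ automatically forces $\mathcal{H}^{+}_{ij}(y,p)\neq\emptyset$ as well. Lemma \ref{lemmaMonotoneMm} then places us in the favourable branch for every pair, so every $m_{ij}\circ c_{y,p,\sigma}$ is non-decreasing and every $M_{ij}\circ c_{y,p,\sigma}$ is non-increasing, whence every $\mathcal{B}^{-}_{ij}\circ c_{y,p,\sigma}$ and every $\mathcal{B}^{+}_{ik}\circ c_{y,p,\sigma}$ is a non-increasing net of closed half-spaces. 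Since finite unions and intersections of non-increasing nets of sets are themselves non-increasing, the representation of $\mathcal{B}(c_{y,p,\sigma}(t))$ established in the second paragraph immediately gives that $\{\mathcal{B}(c_{y,p,\sigma}(t))\}_{0\leq t\leq 1}$ is a non-increasing net.
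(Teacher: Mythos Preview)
Your proof is correct and follows essentially the same route as the paper: invoke Proposition~\ref{propTradeUnfold} and Lemma~\ref{lemmaTradePricesInBox} for the inclusion chain on $[0,1)$, read the decomposition of $\mathcal{B}$ off its definition, use Lemma~\ref{lemmaMonotoneMm} for the monotonicity of each half-space, and combine $\mathcal{H}^{+}_{ij}=\mathcal{H}^{-}_{ji}$ with Lemma~\ref{lemmaMonotoneMm} for the final non-increasing conclusion. Your endpoint argument at $t=1$ is more explicit than the paper's (which simply asserts the conclusion); note, however, that the pigeonhole step is unnecessary, since $\min_{j\neq i}$ and $\max_{k\neq i}$ over a finite index set are continuous, so the closed defining inequalities of $\mathcal{B}(c_{y,p,\sigma}(t))$ pass to the limit $t\uparrow 1$ directly.
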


Theorem \ref{theoNested} refines Proposition \ref{propTradeUnfold} by furnishing a more precise description of the dynamics of the set trade compatible prices seen in the floor of the flat domain. In particular, it reveals that $\mathcal{B}(\cdot)$ is a well-behaved box set to Monte Carlo a BSNTP, since it is considerably easier to compute than $\mathcal{F}(\cdot)$, and, if a reasonable condition is met along the path (i.e., $\mathcal{H}^{-}_{ij}(\cdot)\neq\emptyset$, $1\leq i,j\leq L$, $i\neq j$), it also becomes nested as depicted below.

\begin{figure}[H]
\centering
\includegraphics{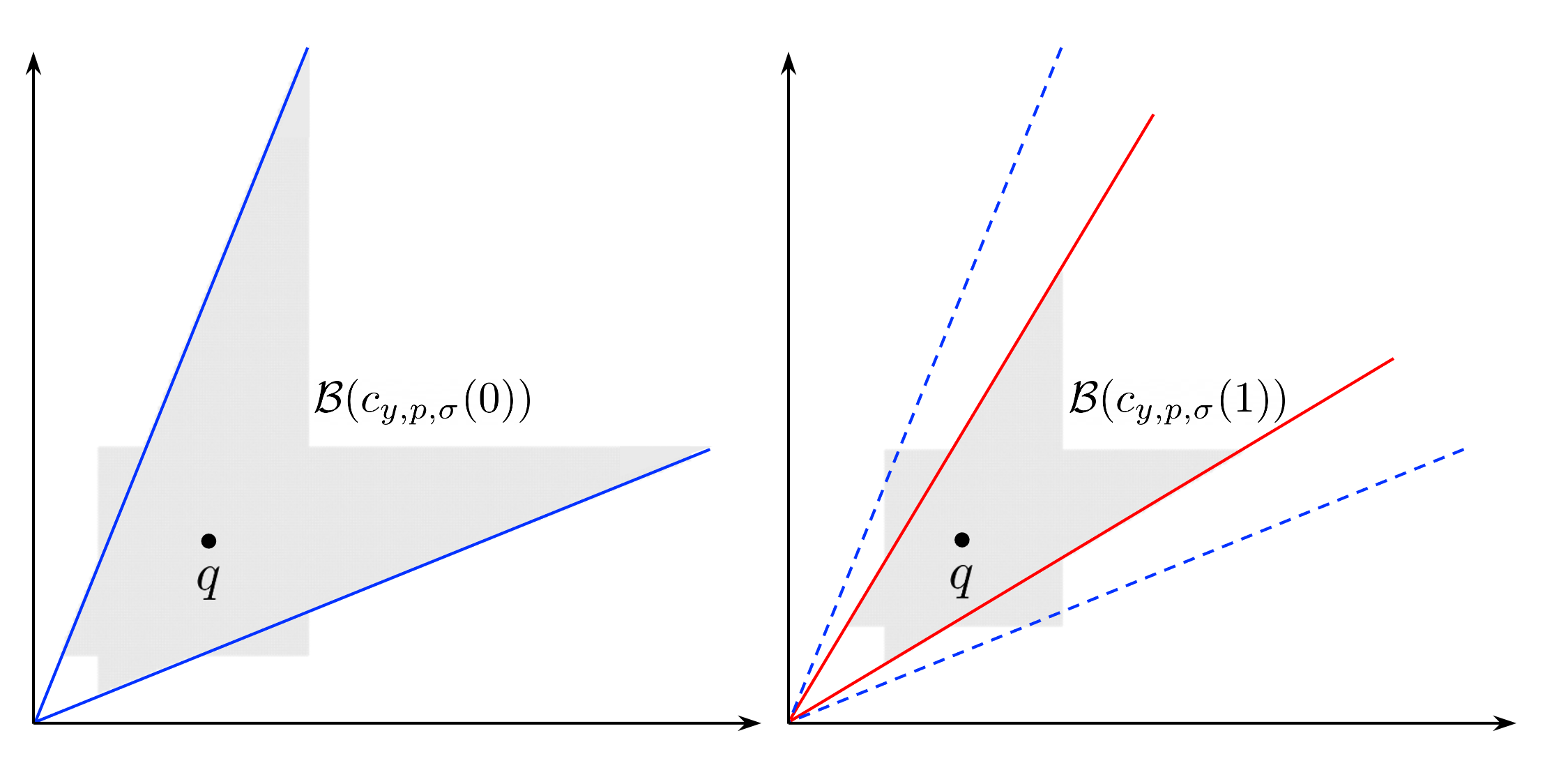}
\caption{Graphical depiction of the box set $\mathcal{B}(\cdot)$ at the beginning (left) and end (right) of a linear joint trade path when the final condition of Theorem \ref{theoNested} is met.}
\label{FigBoxSet}
\end{figure}

Figure \ref{FigBoxSet} depicts the dynamics of $\mathcal{B}(\cdot)$ throughout a joint linear trade path when utility functions are attractive and sharp and the initial allocation satisfies $\mathcal{H}^{-}_{ij}(y,p)\neq\emptyset$, $1\leq i,j\leq L$, $i\neq j$. When running a Monte Carlo simulation of a SBNTP starting at $y\in\mathbb{R}^{HL}_{++}$, the left graphic indicates the region over which Lemma \ref{lemmaLargerDomainQ} can be applied to find $q\in\mathcal{T}(y)$ (clearly, for every draw, one would need to further test if $\mathcal{S}(y,p)\neq\emptyset$). Once the ongoing trade price is defined along with the relative trade speeds $\sigma\in\mathcal{S}(y,p)$, then the right graphic depicts the box set that will be used for the random draw of the next normalized prices. 

The last result is a stochastic version of the First Welfare Theorem.

\begin{theorem}[Stochastic First Welfare Theorem]\label{theoStochasticFirstWelfare}
Let $\{Y_{t}\}_{t\geq0}$ be a BSNTP, with $H=L=2$, $Y_{0}=y\in\mathbb{R}^{4}_{++}$, $f_{Q}(q)>0$, $q\in\mathcal{T}(y)$, $\mathbb{P}(\Vert S_{t}\Vert_{\infty}=1)=1$, $t\geq1$, and all utility functions attractive and sharp. Then, $\mathbb{P}(\lim_{t\rightarrow\infty} Y_{t}\in \mathcal{P})=1$.
\end{theorem}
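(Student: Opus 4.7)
The plan is to reduce the theorem to showing that the gap $\Delta_t$ between the two households' marginal substitution rates (MRS) shrinks to zero almost surely, and then to deduce convergence of $Y_t$ to a Pareto optimum from the monotonicity of utilities. For $L = 2$ the trade-compatible set is an interval $\mathcal{T}(Y_t) = (m_t, M_t)$, where $m_t$ and $M_t$ denote the smaller and larger of the two households' MRS at $Y_t$. Because each $u_h$ is attractive, Proposition \ref{propAttractionOfMarginalRates} ensures that each household's MRS moves monotonically toward the ongoing price $Q_{t+1}$ along every linear trade path; hence $m_t$ is non-decreasing, $M_t$ is non-increasing, and both admit almost sure limits $m_\infty \leq M_\infty$. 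It thus suffices to prove $\mathbb{P}(M_\infty = m_\infty) = 1$.

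To this end, fix $\epsilon > 0$, set $\Delta_t := M_t - m_t$, and consider the middle subinterval $J_t := (m_t + \epsilon/4, m_t + 3\epsilon/4)$, which has length $\epsilon/2$ and lies in the compact subset $[m_0 + \epsilon/4, M_0 - \epsilon/4] \subset \mathcal{T}(Y_0)$ whenever $\Delta_t \geq \epsilon$ (using $m_t \leq M_t - \epsilon \leq M_0 - \epsilon$). Since $f_Q$ is continuous and strictly positive on $\mathcal{T}(Y_0)$, and since $\int_{\mathcal{T}(Y_t)} f_Q \leq 1$, there exists a uniform constant $p_0 > 0$ with $\mathbb{P}(Q_{t+1} \in J_t \mid \mathcal{F}_t) \geq p_0$ on $\{\Delta_t \geq \epsilon\}$. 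On $\{Q_{t+1} \in J_t\}$ the hypothesis $\Vert S_{t+1}\Vert_\infty = 1$ forces one household to reach its Hicksian bundle at price $(Q_{t+1}, 1)$, so that household's new MRS equals $Q_{t+1}$ exactly, while the other household's MRS moves monotonically toward $Q_{t+1}$ by Proposition \ref{propAttractionOfMarginalRates}. A four-case split (on which household has the smaller MRS at time $t$ and on which household trades fully) then yields in every subcase the contraction $\Delta_{t+1} \leq \Delta_t - \epsilon/4$.

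Since $\Delta_t \in [0, \Delta_0]$ cannot drop by $\epsilon/4$ more than $\lceil 4\Delta_0/\epsilon \rceil$ times, combining the uniform bound $p_0$ with the conditional Borel-Cantelli lemma applied to the events $\{\Delta_{t+1} \leq \Delta_t - \epsilon/4\}$ yields $\mathbb{P}(\Delta_t \geq \epsilon \textrm{ for all } t) = 0$; letting $\epsilon$ run along a countable sequence decreasing to $0$ gives $\Delta_t \to 0$ almost surely.

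To close the argument, Lemma \ref{lemmaLinearIncreasingUtility} implies that $u_h(Y_{t,h})$ is non-decreasing and bounded, hence converges to some $u_h^\ast$. By compactness of the state space, every subsequential limit $Y^\ast$ of $Y_t$ satisfies $u_h(Y^\ast_h) = u_h^\ast$, has common MRS (since $\Delta_t \to 0$), and obeys the resource constraint $Y^\ast_1 + Y^\ast_2 = y_1 + y_2$; hence $Y^\ast \in \mathcal{P}$ by Theorem \ref{theoParetoOptimalSet}. Strict quasi-concavity and the one-dimensional parametrization of the contract curve by either household's utility force $Y^\ast$ to be unique, so the whole sequence converges, $Y_t \to Y^\ast \in \mathcal{P}$ almost surely. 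The main obstacle will be the four-case verification of the uniform contraction $\Delta_{t+1} \leq \Delta_t - \epsilon/4$ on $\{Q_{t+1} \in J_t\}$, which demands the precise interplay between the Hicksian landing of the fully trading household, the monotone attraction of the partially trading household's MRS, and the location of $Q_{t+1}$ inside $J_t$.
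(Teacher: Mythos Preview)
Your approach is correct and parallels the paper's proof closely: both rest on the fact that for $L=2$ with sharp utilities $\mathcal{T}(Y_t)$ is the open interval between the two households' marginal substitution rates, and that $\|S_{t+1}\|_\infty=1$ forces one endpoint of $\mathcal{T}(Y_{t+1})$ to coincide with $Q_{t+1}$. The paper packages this by fixing rational subintervals $\mathcal{I}_{r,s}\subset\mathcal{T}(y)$ and showing $\mathbb{P}(\mathcal{I}_{r,s}\subseteq\mathcal{T}(Y_t))\leq(1-\alpha_{r,s})^t$ via the inclusion $\{Q_t\in\mathcal{I}_{r,s}\}\subseteq\{\mathcal{I}_{r,s}\nsubseteq\mathcal{T}(Y_t)\}$, then takes a countable union over $(r,s)\in\mathbb{Q}^2$; your moving subinterval $J_t$ with the explicit contraction $\Delta_{t+1}\leq\Delta_t-\epsilon/4$ and conditional Borel--Cantelli is equivalent in substance but more quantitative.

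Two small repairs are needed. First, continuity of $f_Q$ is not among the hypotheses; however, the map $s\mapsto\int_s^{s+\epsilon/2}f_Q$ is continuous by absolute continuity of the Lebesgue integral and strictly positive on the compact interval $[m_0+\epsilon/4,\,M_0-3\epsilon/4]$, which already yields the uniform bound $p_0>0$. Second, ``compactness of the state space'' is not available as stated, but you can bypass it entirely: since $f_h(Y_{t,h})=(\mathrm{MRS}_h(Y_{t,h}),u_h(Y_{t,h}))\to(q^\ast,u_h^\ast)\in\mathbb{R}^2_{++}$ and $f_h^{-1}$ is continuous, you get $Y_{t,h}\to f_h^{-1}(q^\ast,u_h^\ast)$ directly, and the resource constraint plus the common limiting MRS place the limit in $\mathcal{P}$. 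Your final convergence step is in fact more thorough than the paper's, which simply asserts the equivalence between $\lim Y_t\notin\mathcal{P}$ and the survival of some rational subinterval without argument.
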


Theorem \ref{theoStochasticFirstWelfare} states that a ``maximum speed'' BSNTP with a ``nonvanishing prior'' converges to the set of Pareto optimal allocations with full probability. The First Welfare Theorem can be seen as a result stating that every equilibrium defines a maximal linear joint trade path, so that the corresponding allocation converges towards the set of Pareto optimal allocations, and, in this sense, Theorem \ref{theoStochasticFirstWelfare} provides a stochastic version of the First Welfare Theorem. This and other features of the theory of SNTP are discussed in the next section.

\section{A natural outcome of General Equilibrium Theory}\label{sec6}

As shown in the \hyperref[sec1]{Introduction}, the inquiries on the dynamic aspect of General Equilibrium Theory are present since its earliest steps (e.g., \textcite{Walras_1874} and \textcite{Edgeworth_1881}). The results in this paper try to tackle this matter departing from three fundamental premises: (i) there exists a common prevailing market price at every moment; (ii) households can trade even out of equilibrium (i.e., under prices that do not clear all markets); and (iii) households trade aiming at their preferred affordable bundle and they do it on the shortest possible path (i.e., on a linear trade path). 

The first premise reflects the fact that in competitive markets, trade terms do not deviate significantly from current prices, and therefore households are ``price takers''. The second premise is derived from the observation that trade happens in ``steps'' and that apparently households do not refrain from trading at current prices because of a possible future imbalance of aggregate offer and demand. Stated otherwise, \textit{first we buy, then the market shelves are empty}, and so this premise leads us to the realm of non-tâtonnement processes.

Other than simplicity itself, the last premise stems from the fact that trade happens in ``steps'' that enhance utility (see Lemma \ref{lemmaLinearIncreasingUtility}) and must aim at the most desired consumption bundle through, presumably, the shortest possible path. This premise leads us directly to the linear trade path paradigm given by Definitions \ref{defLinearTradePath} and \ref{defTradeSpeedTradePrices}, in which two fundamental sets emerge: (i) the set of trade compatible normalized prices; and (ii) the set of relative trade speeds. 

The first represents all prices that lead to actual trade through a linear joint trade path. The second can be seen as reflecting \textit{implicit} market conditions that determine trade (e.g., if one buyer is, in a given period, closer to the seller than a second buyer, this likely leads to a larger speed of the first).

The choice of the linear trade path paradigm is also reinforced by the Attraction Principle (see Theorem \ref{theoAttractionPrinciple} and Proposition \ref{propTradeUnfold}), particularly when utility functions are attractive (see Proposition \ref{propAttractionOfMarginalRates}). By its turn, the Attraction Principle is well-grounded on the concepts of flat domain and flattening diffeomorphism, and these can be naturally derived from classical consumer theory after the proper definitions are made (see Definitions \ref{defNormalizedWalrasian} and \ref{defFlatteningDiffeo}). Furthermore, the symmetries between the canonical manifolds in the consumption domain and their counterparts in the normalized and flat domains (see Proposition \ref{propManifoldsDomains} and Theorems \ref{theoIndifferenceNormDom} and \ref{theoOfferFlatDom}), and the result regarding the manifold of Pareto optimal allocations (see Theorem \ref{theoParetoOptimalSet}), testify to the relevance of such definitions.

With these premises and concepts at hand, we can look at the \textit{static} results of General Equilibrium Theory and give them a \textit{dynamic} perspective. From this perspective, the classical equilibrium existence result \parencite{ArrowDebreu_1954} furnishes an element of the set of trade-compatible normalized prices and, therefore, a joint linear trade path\footnote{Except, of course, if we are dealing with a non-trade equilibrium.}. Also, the First Welfare Theorem must be seen as a statement regarding the maximality (see Definition \ref{defTradeCompAllocations}) of this joint linear trade path. 

Therefore, from this perspective, these \textit{static} results of the theory can be related to the analyzes of \textit{maximal joint linear trade paths}, which is a \textit{dynamic} concept of non-tâtonnement trade. However, the theory does not need to be constrained to \textit{maximal} joint linear trade paths (which can be seen as a good approximation of the economic dynamic for small economic shocks, for instance), and a natural outcome, given the three premises we carry, is to imagine a sequence of successive linear trade paths.

One possible way to pursue this direction is to directly state the adjustment procedure that will define the successive prices and, correspondingly, the trade paths (e.g., \textcite{Uzawa_1962, Smale_1976, CornetChampsaur_1990}). Another possibility is to state that prices and, therefore, trade paths, are stochastic (i.e., uncertain), thus leading to stochastic non-tâtonnement processes (SNTP) (see Definition \ref{defStochasticNTProcess}).

This stochastic setting can be seen as representing the \textit{economic engines} that are not explicitly incorporated in the model (e.g., the connection between aggregate demand and prices, and the frictions that reverberate in the trade speeds) and it does not prevent us from obtaining well-behaved theorems that parallel the static results of the theory (see Theorem \ref{theoStochasticFirstWelfare}). 

Furthermore, this stochastic setting brings about a fundamental change in the predictive nature of economic models. Instead of focusing on deterministic outcomes, SNTP furnish ``confidence regions'' over the contract curve and, thus, can be used to evaluate, with varying degrees of certainty, where the economy will be and, perhaps more importantly, \textit{where it will not be} (see Figures \ref{FigMonteCarloEx4}-\ref{Fig2MonteCarloEx5}). In particular, price stickness and sustained economic disequilibrium are two scenarios that can be addressed within this framework (see Examples \ref{ex4} and \ref{ex5}).

Clearly, SNTP theory, as far as this article goes, has its shortfalls. For instance, the analytical treatment of such models is not an easy one (see Example \ref{ex3}), and, to adopt numerical techniques (e.g., Monte Carlo methods), well-behaved Bayesian stochastic non-tâtonnement processes (BSNTP) are a possible alternative (see Definitions \ref{defBayesianStochasticNTP}-\ref{defAttractiveUtility}, Lemma \ref{lemmaLargerDomainQ}, Proposition \ref{propIsAttractive} and Theorem \ref{theoNested}). However, which priors are the best fit to model specific economic realities is a topic that still deserves great attention (one could try to cope with this difficulty by adopting some sort of robustness technique, focusing on a large set of ``reasonable'' priors instead, but this also brings its challenges).  

There is another shortfall that I believe deserves careful attention. In this STNP theory, consumption is not happening \textit{parallel} to trade, which is clearly a desirable characteristic. Indeed, I believe this is a natural way to improve the theory: (i) incorporate current consumption; and (ii) as long as consumption becomes possible, one needs to incorporate an ``inflow'' of commodities as well.

It is interesting to note that as long as we are ``refreshing'' the fundamentals of the economy according to (i) and (ii) above, the set of dominant floor-coordinates in the flat domain (see Definition \ref{defDominantFloorCoord} and Figure \ref{FigAttractionPrinciple}) is moving and, within it, also is the current price (see Theorem \ref{theoAttractionPrinciple} and Proposition \ref{propTradeUnfold}). In parallel to this dynamic in the flat domain, in the consumption domain, trade unravels so that the current allocation ``chases'' a moving contract subset (see Definition \ref{defSetEqAllocationsTransfers} and the paragraph below).

Even with these shortfalls (every theory has its own), SNTP theory is grounded on the fundamental definitions and results of General Equilibrium Theory and helps to build the dynamic aspect of it. In this sense, SNTP theory is a natural outcome of General Equilibrium Theory.   

\section{Concluding remarks}\label{sec7}

The main contribution of this paper is to lay a theory of stochastic non-tâtonnement processes (SNTP) based on the Attraction Principle, which is itself built over the definition of two diffeomorphisms (i.e., the normalized Walrasian demand and the flattening diffeomorphism) that characterize the consumption, the normalized and the flat domains.  

I believe that this theory can lead to a fruitful branch of research and, therefore, state here three necessary steps for its enhancement. First, it is desirable to have a more general version of Theorem \ref{theoStochasticFirstWelfare}. Second, one needs to carefully analyze the relation between SNTP, the Second Welfare Theorem and the matter of accessibility of Pareto optima. Third and last, it is necessary to tackle the two shortfalls described in the end of Section \ref{sec6}, since this would lead to a beautiful non-tâtonnement theory of parallel trade and consumption.

\appendix
\section*{Appendix}\label{appx}

All proofs are stated in this \hyperref[appx]{Appendix}.

\begin{proof}[Proof of Lemma~{\upshape\ref{lemmaBasicIdentities}}]
Let $(p,u)\in\mathbb{R}^{L+1}_{++}$. Walras' law implies $px_{n}(p)^{T}=1$, for $p\in\mathbb{R}^{L}_{++}$, and differentiating it yields $p\mathbf{J}x_{n}(p)+x_{n}(p)=0$. The definition of $\lambda_{n}(\cdot)$ implies $\nabla u(x_{n}(p))x_{n}(p)^{T}=\lambda_{n}(p)px_{n}(p)^{T}=\lambda_{n}(p)$. The definition of $v_{n}(\cdot)$ implies $\nabla v_{n}(p)=\nabla u(x_{n}(p))\mathbf{J}x_{n}(p)=\lambda_{n}(p)p\mathbf{J}x_{n}(p)=-\lambda_{n}(p)x_{n}(p)$. Also, $\nabla v_{n}(p)p^{T}=-\lambda_{n}(p)x_{n}(p)p^{T}=-\lambda_{n}(p)$. The necessary and sufficient first-order conditions on the utility maximization and expenditure minimization problems imply $x_{n}(p)=h(p,v_{n}(p))$ and $h(p,u)=x_{n}(p/e(p,u))$. Finally, $e(p,v_{n}(p))=ph(p,v_{n}(p))^{T}=px_{n}(p)^{T}=1$.
\end{proof}

\begin{proof}[Proof of Proposition~{\upshape\ref{propInverseDemand}}]
Let $c\in\mathbb{R}^{L}_{++}$ and $p=\nabla u(c)/\nabla u(c)c^{T}$. Then, $p\in\mathbb{R}^{L}_{++}$, $pc^{T}=1$ and $\nabla u(c)=\nabla u(c)c^{T}p$, so that $c\in B(p,1)$ satisfies the necessary and sufficient first-order condition of (\ref{UMP}), with $\lambda_{n}(p)=\nabla u(c)c^{T}$ by Lemma \ref{lemmaBasicIdentities}. Then, 
\begin{eqnarray*}
    x_{n}(p)=x_{n}\biggr(\frac{\nabla u(c)}{\nabla u(c)c^{T}}\biggr)=c,
\end{eqnarray*}
and, therefore, $x^{-1}_{n}(c)=\nabla u(c)/\nabla u(c)c^{T}$. Since both $x_{n}(\cdot)$ and $x^{-1}_{n}(\cdot)$ are smooth functions on $\mathbb{R}^{L}_{++}$, $x_{n}(\cdot)$ is a diffeomorphism.
\end{proof}

\begin{proof}[Proof of Proposition~{\upshape\ref{propFixedPoint}}]
Let $\alpha>0$. Since $u(\cdot)$ is continuous, strictly quasi-concave and without local maxima, and $\{y\in\mathbb{R}^{L}_{+}\mid \Vert y\Vert\leq \alpha\}$ is convex and compact, there is a unique $c(\alpha)\in\mathbb{R}^{L}_{+}$ such that
\begin{eqnarray}\label{eqFixedPoint}
    c(\alpha)=\arg\max_{y\in\mathbb{R}^{L}_{+}} u(y) \textrm{ s.t. } \Vert y\Vert \leq \alpha,
\end{eqnarray}
with $\Vert c(\alpha)\Vert=\alpha$. Also, $\Vert y(\alpha)\Vert\leq\alpha$ implies $u(c(\alpha))\geq u(y(\alpha))$, and, therefore,
\begin{eqnarray*}
    c(\alpha)\in\{c\in\mathbb{R}^{L}_{+}\mid u(c)\geq u(y(\alpha))\}\cap \{c\in\mathbb{R}^{L}_{+}\mid \Vert c\Vert\leq \alpha\}\subseteq\mathbb{R}^{L}_{++}.
\end{eqnarray*}
The first order conditions for an interior solution are $\nabla u(c(\alpha))=\lambda(\alpha) c(\alpha)$, for some $\lambda(\alpha)>0$. Let $p(\alpha)=x^{-1}_{n}(c(\alpha))$. Then,
\begin{eqnarray*}
     \frac{x_{n}(p(\alpha))}{\Vert x_{n}(p(\alpha))\Vert}=\frac{c(\alpha)}{\Vert c(\alpha)\Vert}=\frac{\nabla u(c(\alpha))}{\Vert \nabla u(c(\alpha))\Vert}=\frac{x^{-1}_{n}(c(\alpha))}{\Vert \nabla x^{-1}_{n}(c(\alpha))\Vert}=\frac{p(\alpha)}{\Vert p(\alpha)\Vert}.
\end{eqnarray*}
By Lemma \ref{lemmaBasicIdentities},
\begin{eqnarray*}
    1=\biggr\Vert \frac{p(\alpha)}{\Vert p(\alpha)\Vert}\biggr\Vert^{2}=\frac{p(\alpha)c(\alpha)^{T}}{\Vert p(\alpha)\Vert\Vert c(\alpha)\Vert}=\frac{1}{\Vert p(\alpha)\Vert\Vert c(\alpha)\Vert}\implies \Vert p(\alpha)\Vert\Vert c(\alpha)\Vert=1.
\end{eqnarray*}
Therefore, $\Vert p(\alpha)\Vert=\alpha^{-1}$. If $\alpha=1$, then $p(1)=c(1)=x_{n}(p(1))$ and, therefore, $p(1)$ is a fixed point. Finally, suppose $p^{*}\in\mathbb{R}^{L}_{++}$ is a fixed point of $x_{n}(\cdot)$. Lemma \ref{lemmaBasicIdentities} implies $\Vert p^{*}\Vert=\Vert x_{n}(p^{*})\Vert=1$. Since (\ref{eqFixedPoint}) has a unique solution, $p^{*}=p(1)$ and, therefore, $x^{-1}_{n}(\cdot)$ has a unique fixed point.
\end{proof}

\begin{proof}[Proof of Proposition~{\upshape\ref{propDDiffeo}}]
The definition of $e(\cdot)$ implies that $d(\cdot)$ is a smooth function on $\mathbb{R}^{L}_{++}$. Also, for $p\in\mathbb{R}^{L}_{++}$, Lemma \ref{lemmaBasicIdentities} allows to write
\begin{eqnarray*}
    d\biggr(\frac{p_{1}}{p_{L}},\ldots,v_{n}(p)\biggr)=e\biggr(\biggr(\frac{p_{1}}{p_{L}},\ldots,1\biggr),v_{n}(p)\biggr)^{-1}\biggr(\frac{p_{1}}{p_{L}},\ldots,1\biggr)=e(p,v_{n}(p))^{-1}p=p,
\end{eqnarray*}
and, therefore, $d^{-1}(p)=(p_{1}/p_{L},\ldots,p_{L-1}/p_{L},v_{n}(p))$. Since both $d(\cdot)$ and $d^{-1}(\cdot)$ are smooth functions on $\mathbb{R}^{L}_{++}$, $d(\cdot)$ is a diffeomorphism.
\end{proof}

\begin{proof}[Proof of Proposition~{\upshape\ref{propAreManifolds}}]
Let $c,y\in\mathbb{R}^{L}_{++}$. Lemma \ref{lemmaBasicIdentities} implies that 
\begin{eqnarray*}
    \nabla u(y)(y-c)^{T}=0\iff x^{-1}_{n}(y)(y-c)^{T}=0\iff x^{-1}_{n}(y)c^{T}=x^{-1}_{n}(y)y^{T}=1,
\end{eqnarray*}
and the offer hypersurface can be written as $\mathcal{O}(c)=\{y\in\mathbb{R}^{L}_{++}\mid x^{-1}_{n}(y)c^{T}=1\}$. Then, for $y\in\mathcal{O}(c)$, 
\begin{eqnarray*}
    y=x_{n}(x_{n}^{-1}(y))=x_{n}\biggr(\frac{x^{-1}_{n}(y)/x^{-1}_{nL}(y)}{x^{-1}_{n}(y)c^{T}/x^{-1}_{nL}(y)}\biggr)\implies y\in \biggr\{ x_{n}\biggr(\frac{(q,1)}{(q,1)c^{T}}\biggr)\mid q\in\mathbb{R}^{L-1}_{++}\biggr\}
\end{eqnarray*}
Also, for $q\in\mathbb{R}^{L-1}_{++}$,
\begin{eqnarray*}
    x^{-1}_{n}\biggr(x_{n}\biggr(\frac{(q,1)}{(q,1)c^{T}}\biggr)\biggr)c^{T}=\frac{(q,1)c^{T}}{(q,1)c^{T}}=1\implies x_{n}\biggr(\frac{(q,1)}{(q,1)c^{T}}\biggr)\in\mathcal{O}(c).
\end{eqnarray*}
Then, $\mathcal{O}(c)=\{x_{n}((q,1)/(q,1)c^{T})\mid q\in\mathbb{R}^{L-1}_{++}\}$, $\mathcal{O}(c)$ is globally diffeomorphic to $\mathbb{R}^{L-1}_{++}$ and, hence, a connected manifold. Next, for $y\in\mathcal{I}(c)$,
\begin{eqnarray*}
    y=x_{n}\biggr(\frac{x^{-1}_{n}(y)/x^{-1}_{nL}(y)}{e(x^{-1}_{n}(y)/x^{-1}_{nL}(y),v_{n}(x^{-1}_{n}(y))))}\biggr)=h\biggr(\frac{x^{-1}_{n}(y)}{x^{-1}_{nL}(y)},u(c)\biggr),
\end{eqnarray*}
and, therefore, $y\in\{h((q,1),u(c))\mid q\in\mathbb{R}^{L-1}_{++}\}$. Also, for $q\in\mathbb{R}^{L-1}_{++}$, $u(h((q,1),u(c)))=u(c)$. We conclude that, $\mathcal{I}(c)=\{h((q,1),u(c))\in\mathbb{R}^{L}_{++}\mid q\in\mathbb{R}^{L-1}_{++}\}$, $\mathcal{I}(c)$ is globally diffeomorphic to $\mathbb{R}^{L-1}_{++}$ and, hence, a connected manifold. Finally, for $y,c\in\mathbb{R}^{L}_{++}$,
\begin{eqnarray*}
    \nabla u(c)(y-c)^{T}=0\iff x^{-1}_{n}(c)(y-c)^{T}=0\iff x^{-1}_{n}(c)y^{T}=x^{-1}_{n}(c)c^{T}=1,
\end{eqnarray*}
and the trade hyperplane can be written as $\mathcal{H}(c)=\{y\in\mathbb{R}^{L}_{++}\mid x^{-1}_{n}(c)y^{T}=1\}$.
\end{proof}
\begin{proof}[Proof of Corollary~{\upshape\ref{corJacobian}}]
Lemma \ref{lemmaBasicIdentities} implies that
\begin{eqnarray*}
    \mathbf{J}\phi_{c}(p)
    &=&\mathbf{J}\biggr(-\frac{\nabla v_{n}(p/e(p,u(c)))}{\lambda_{n}(p/e(p,u(c)))}\biggr)\\
    &=&\biggr(-\frac{\mathbf{H}v_{n}(p/e(p,u(c)))}{\lambda_{n}(p/e(p,u(c)))}-\frac{h(p,u(c))^{T}\nabla\lambda_{n}(p/e(p,u(c)))}{\lambda_{n}(p/e(p,u(c)))}\biggr)\ldots\\
    &&\biggr(\frac{e(p,u(c))Id-p^{T}h(p,u(c))}{e(p,u(c))^{2}}\biggr)
\end{eqnarray*}
Lemma \ref{lemmaBasicIdentities} also implies that
\begin{eqnarray}\label{eqLambdaIdentity}
    \nabla\lambda^{h}_{n}(\tilde{p})=-\tilde{p}\mathbf{H}v^{h}_{n}(\tilde{p})-\nabla v^{h}_{n}(\tilde{p})=-\tilde{p}\mathbf{H}v^{h}_{n}(\tilde{p})+\lambda_{n}(\tilde{p})x_{n}(\tilde{p}),
\end{eqnarray}
for $\tilde{p}\in\mathbb{R}^{L}_{++}$. For $\tilde{p}=p/e(p,u(c))$ in (\ref{eqLambdaIdentity}), we have
\begin{eqnarray*}
    \nabla\lambda_{n}(p/e(p,u(c)))
    &=&-\frac{p\mathbf{H}v_{n}(p/e(p,u(c)))}{e(p,u(c))}+\lambda_{n}(p/e(p,u(c)))h(p,u(c)).
\end{eqnarray*}
Therefore,
\begin{eqnarray*}
    \mathbf{J}\phi_{c}(p)=-\biggr(Id-\frac{p^{T}h(p,u(c))}{e(p,u(c))}\biggr)^{T}\frac{\mathbf{H}v_{n}(p/e(p,u(c)))}{e(p,u(c))\lambda_{n}(p/e(p,u(c)))}\biggr(Id-\frac{p^{T}h(p,u(c))}{e(p,u(c))}\biggr).
\end{eqnarray*}
Next, we have
\begin{eqnarray*}
    \mathbf{J}\psi_{c}(p)
    &=&\mathbf{J}\biggr(-\frac{\nabla v_{n}(p/pc^{T})}{\lambda_{n}(p/pc^{T})}\biggr)\\
    &=&\biggr(-\frac{\mathbf{H}v_{n}(p/pc^{T})}{\lambda_{n}(p/pc^{T})}-\frac{x_{n}(p/pc^{T})^{T}\nabla\lambda_{n}(p/pc^{T})}{\lambda_{n}(p/pc^{T})}\biggr)\biggr(\frac{pc^{T}Id-p^{T}c}{(pc^{T})^{2}}\biggr)
\end{eqnarray*}
By taking $\tilde{p}=p/pc^{T}$ in (\ref{eqLambdaIdentity}) we obtain
\begin{eqnarray*}
    \nabla\lambda_{n}(p/pc^{T})
    &=&-\frac{p\mathbf{H}v_{n}(p/pc^{T})}{pc^{T}}+\lambda_{n}(p/pc^{T})x_{n}(p/pc^{T}).
\end{eqnarray*}
Therefore,
\begin{eqnarray*}
    \mathbf{J}\psi_{c}(p)
    &=&-\biggr(Id-\frac{x_{n}(p/pc^{T})^{T}p}{pc^{T}}\biggr)\frac{\mathbf{H}v_{n}(p/pc^{T})}{pc^{T}\lambda_{n}(p/pc^{T})}\biggr(Id-\frac{p^{T}c}{pc^{T}}\biggr)\ldots\\
    &&-\frac{x_{n}(p/pc^{T})^{T}}{pc^{T}}\biggr(x_{n}(p/pc^{T})-c\biggr).
\end{eqnarray*}
\end{proof}

\begin{proof}[Proof of Proposition~{\upshape\ref{propManifoldsDomains}}]
For $p\in\mathbb{R}^{L}_{++}$,
\begin{eqnarray*}
   p\in x^{-1}_{n}(\mathcal{I}(c)) \iff x_{n}(p)\in \mathcal{I}(c)\iff u(x_{n}(p))=u(c)\iff v_{n}(p)=u(c),
\end{eqnarray*}
and
\begin{eqnarray*}
    p\in x^{-1}_{n}(\mathcal{O}(c)) \iff x_{n}(p)\in \mathcal{O}(c)\iff x^{-1}_{n}(x_{n}(p))c^{T}=1\iff pc^{T}=1,
\end{eqnarray*}
so that the first two identities are valid. Also, Proposition \ref{propAreManifolds} implies
\begin{eqnarray*}
   p\in x^{-1}_{n}(\mathcal{H}(c)) \iff x_{n}(p)\in \mathcal{H}(c)\iff x_{n}^{-1}(c)x_{n}(p)^{T}=1,
\end{eqnarray*}
and the third identity is valid. For $(q,u)\in\mathbb{R}^{L}_{++}$,
\begin{eqnarray*}
    (q,u)\in f(I(c))\iff h((q,1),u)\in \mathcal{I}(c)
    \iff u=u(c),
\end{eqnarray*}
and the fourth identity is valid. Also,
\begin{eqnarray*}
    (q,u)\in f(O(c))\iff h((q,1),u)\in\mathcal{O}(c)
    \iff x^{-1}_{n}(h((q,1),u))c^{T}=\frac{(q,1)c^{T}}{e((q,1),u)}=1,
\end{eqnarray*}
and the fifth identity is also valid. Finally,
\begin{eqnarray*}
    (q,u)\in f(\mathcal{H}(c)) \iff h((q,1),u)\in \mathcal{H}(c)\iff x_{n}^{-1}(c)h((q,1),u)^{T}=1,
\end{eqnarray*}
and the last identity is valid.
\end{proof}

\begin{proof}[Proof of Lemma~{\upshape\ref{lemmaIndirectQuasiconvex}}]
Let $p_{1},p_{2}\in\mathbb{R}^{L}_{++}$ and $\alpha\in[0,1]$. Then, Proposition 3.D.3 from \textcite[p. 56]{Mas-ColellWhinstonGreen_1995} implies
\begin{eqnarray*}
    v_{n}(\alpha p_{1}+(1-\alpha)p_{2})
    \leq \max\{v(p_{1},1),v(p_{2},1)\}
    =\max\{v_{n}(p_{1}),v_{n}(p_{2})\},
\end{eqnarray*}
and, therefore, $v_{n}(\cdot)$ is quasi-convex.
\end{proof}

\begin{proof}[Proof of Theorem~{\upshape\ref{theoIndifferenceNormDom}}]
Let $c\in\mathbb{R}^{L}_{++}$. The continuity of $v_{n}(\cdot)$ imply that $\Omega(c)$ is closed in $\mathbb{R}^{L}_{++}$. Continuity and monotonicity of $v_{n}(\cdot)$ and Proposition \ref{propManifoldsDomains} imply $\partial \Omega(c)=\{p\in\mathbb{R}^{L}_{++}\mid v_{n}(p)=u(c)\}=x^{-1}_{n}(\mathcal{I}(c))$, with the boundary taken in $\mathbb{R}^{L}_{++}$. Next, let $p_{1},p_{2}\in \Omega(c)$ and $\alpha \in [0,1]$. Then, by Lemma \ref{lemmaIndirectQuasiconvex}, 
\begin{eqnarray}\label{eqTheoIndifference}
    v_{n}(\alpha p_{1}+(1-\alpha)p_{2})\leq\max\{v_{n}(p_{1}),v_{n}(p_{2})\}\leq u(c)\implies \alpha p_{1}+(1-\alpha)p_{2}\in\Omega(c),
\end{eqnarray}
and $\Omega(c)\subset\mathbb{R}^{L}_{++}$ is a convex set. Finally, notice that $x^{-1}_{n}(\mathcal{I}(c))\cap x^{-1}_{n}(\mathcal{O}(c))=\{x^{-1}_{n}(c)\}$ and, since $x^{-1}_{n}(\mathcal{O}(c))$ is a hyperplane due to Proposition \ref{propManifoldsDomains}, it is the supporting hyperplane of $\Omega(c)$. Finally, if $v_{n}(\cdot)$ is strictly quasi-convex, then (\ref{eqTheoIndifference}) becomes a strict inequality and, therefore, $\Omega(c)\subset\mathbb{R}^{L}_{++}$ is strictly convex.
\end{proof}

\begin{proof}[Proof of Theorem~{\upshape\ref{theoOfferFlatDom}}]
Let $c\in\mathbb{R}^{L}_{++}$. First, notice that the expenditure minimization problem implies $\Gamma(c,s)=\emptyset$, for $0<s<u(c)$. The continuity of $e(\cdot)$ implies $\Gamma(c)$ and $\Gamma(c,s)$, $s\geq u(c)$, are closed in $\mathbb{R}^{L}_{++}$. Also, continuity and monotonicity (for the last coordinate) of $e(\cdot)$ and Proposition \ref{propManifoldsDomains} imply $ \partial \Gamma (c)=\{(q,u)\in\mathbb{R}^{L}_{++}\mid (q,1)c^{T}=e((q,1),u)\}=f(\mathcal{O}(c))$, with the boundary taken in $\mathbb{R}^{L}_{++}$. 

First, let $(q_{1},u_{1}),(q_{2},u_{2})\in \Gamma(c,s)$, $s>u(c)$, $\alpha \in [0,1]$ and $(q_{\alpha},u_{\alpha})=\alpha(q_{1},u_{1})+(1-\alpha)(q_{2},u_{2})\in\mathbb{R}^{L}_{++}$. Then, $u_{1}=u_{2}=u_{\alpha}=s$ and $(q_{i},1)c^{T}\leq e((q_{i},1),s)$, for $i=1,2$. By Proposition 3.E.2 from \textcite[p. 59]{Mas-ColellWhinstonGreen_1995}, $e(\cdot,s)$ is concave. Then, the previous inequality implies
\begin{eqnarray*}
    e((q_{\alpha},1),s)\geq\alpha e((q_{1},1),s)+(1-\alpha)e((q_{2},1),s)\geq 
    (q_{\alpha},1)c^{T}.
\end{eqnarray*}
Therefore, $(q_{\alpha},u_{\alpha})\in \Gamma(c,s)$ and the set is convex. Second, for all $q\in\mathbb{R}^{L}_{++}$, notice that
\begin{eqnarray*}
    (q,1)c^{T}=e((q,1),u)\iff 1=\frac{(q,1)}{(q,1)c^{T}}h((q,1),u)^{T}=\frac{(q,1)}{(q,1)c^{T}}h\biggr(\frac{(q,1)}{(q,1)c^{T}},u\biggr)^{T},
\end{eqnarray*}
and, by taking $u=v_{n}((q,1)/(q,1)c^{T})$, one gets
\begin{eqnarray*}
    \frac{(q,1)}{(q,1)c^{T}}h\biggr(\frac{(q,1)}{(q,1)c^{T}},v_{n}\biggr(\frac{(q,1)}{(q,1)c^{T}}\biggr)\biggr)^{T}=\frac{(q,1)}{(q,1)c^{T}}x_{n}\biggr(\frac{(q,1)}{(q,1)c^{T}}\biggr)^{T}=1
\end{eqnarray*}
after Lemma \ref{lemmaBasicIdentities}. The strict monotonicity of $e((q,1),\cdot)$ given by Proposition 3.E.2 from \textcite[p. 59]{Mas-ColellWhinstonGreen_1995} then imply that $(q,1)c^{T}\leq e((q,1),u)$, $u\geq v_{n}((q,1)/(q,1)c^{T})=k_{c}(q)$. Since utility is unbounded, for all $q\in\mathbb{R}^{L-1}_{++}$, we have $\{u>0\mid (q,u)\in \Gamma(c)\}=[k_{c}(q),\infty)$. This implies $\{\pi_{L-1}(\Gamma(c,s))\}_{s\geq u(c)}$ is a non-decreasing net of closed nonempty convex sets.

Next, suppose $k_{c}(\cdot)$ is convex and let $(q_{1},u_{1}),(q_{2},u_{2})\in\Gamma(c)$, $u_{1}\leq u_{2}$, $\alpha\in [0,1]$ and $(q_{\alpha},u_{\alpha})=\alpha(q_{1},u_{1})+(1-\alpha)(q_{2},u_{2})\in\mathbb{R}^{L}_{++}$. Then, by the second reasoning, $u_{1}\geq k_{c}(q_{1})$ and $u_{2}\geq k_{c}(q_{2})$, and 
\begin{eqnarray}\label{eqtheoOfferFlatDom}
    k_{c}(q_{\alpha})\leq\alpha k_{c}(q_{1})+(1-\alpha)k_{c}(q_{2})\leq \alpha u_{1} +(1-\alpha)u_{2}=u_{\alpha}. 
\end{eqnarray}
Therefore, $(q_{\alpha},u_{\alpha})\in \Gamma(c)$ and $\Gamma(c)$ is convex. Finally, notice that $f(\mathcal{O}(c))\cap f(\mathcal{I}(c))=\{f(c)\}$ and, since $f(\mathcal{I}(c))$ is a hyperplane due to Proposition \ref{propManifoldsDomains}, it is the supporting hyperplane of $\Gamma(c)$. Finally, if $k_{c}(\cdot)$ is strictly convex, then (\ref{eqtheoOfferFlatDom}) becomes a strict inequality and $\Gamma(c)\subset\mathbb{R}^{L}_{++}$ is strictly convex. 
\end{proof}

\begin{proof}[Proof of Theorem~{\upshape\ref{theoParetoOptimalSet}}]
First, notice that for any $q\in\mathbb{R}^{L-1}_{++}$, $w_{i}>0$, $1\leq i\leq I$, it is possible to define $\omega_{i}=h_{i}((q,1),v_{ni}((q,1)/w_{i}))$, $1\leq i\leq I$, so that $(q,1)$ becomes a no-trade equilibrium of such economy. By the First Welfare Theorem, $(\omega_{1},\ldots,\omega_{I})\in \mathcal{P}$, and, therefore, $\{(h_{1}((q,1),v_{n1}((q,1)/w_{1})),\ldots)\in\mathbb{R}^{IL}_{++}\mid q\in \mathbb{R}^{L-1}_{++}, w_{i}>0,1\leq i\leq I\}\subseteq \mathcal{P}$.

Next, let $(y_{1},\ldots,y_{I})\in\mathcal{P}$, $\sum^{I}_{i=1}y_{i}=y$. Then, by the Second Welfare Theorem, there is $p\in \mathbb{R}^{L}_{++}$, $p_{L}=1$, and $w_{i}>0$, $1\leq i\leq I$, such that $\sum^{I}_{i=1}w_{i}=py^{T}$ and $y_{i}=x_{ni}(p/w_{i})$, for $1\leq i\leq I$. Let $q=(p_{1},\ldots,p_{L-1})\in\mathbb{R}^{L-1}_{++}$. Then, Lemma \ref{lemmaBasicIdentities} implies 
\begin{eqnarray*}
    h((q,1),v_{ni}((q,1)/w_{i}))=x_{n}(e(p,v_{ni}(p/w_{i}))^{-1}p)=x_{ni}(p/w_{i})=y_{i},
\end{eqnarray*}
for $1\leq i\leq I$. Therefore,
\begin{eqnarray}\label{eqParetoOptimalSet}
    \mathcal{P}=\{(h_{1}((q,1),v_{n1}((q,1)/w_{1})),\ldots)\in\mathbb{R}^{IL}_{++}\mid q\in \mathbb{R}^{L-1}_{++}, w_{i}>0,1\leq i\leq I\}.
\end{eqnarray}
Next, let $q\in\mathbb{R}^{L-1}_{++}$ and notice that for all $w>0$ (resp., $u>0$), there is a unique $u>0$ (resp., $w>0$) such that $u=v_{n}((q,1)/w)$ or, equivalently, $w=e((q,1),u)$. Therefore, it is possible to rewrite (\ref{eqParetoOptimalSet}) as
\begin{eqnarray*}
    \mathcal{P}=\{(h_{1}((q,1),u_{1}),\ldots,h_{I}((q,1),u_{I}))\in\mathbb{R}^{IL}_{++}\mid q\in \mathbb{R}^{L-1}_{++}, u_{i}>0,1\leq i\leq I\}.
\end{eqnarray*}
The result follows from the fact that $\{((q,u_{1}),\ldots,(q,u_{I}))\in\mathbb{R}^{IL}_{++}\mid q\in \mathbb{R}^{L-1}_{++}, u_{i}>0,1\leq i\leq I\}$ is a connected manifold without border of dimension $L+I-1$ and that $f(\cdot)$ is a diffeomorphism.
\end{proof}

\begin{proof}[Proof of Lemma~{\upshape\ref{lemmaLinearIncreasingUtility}}]
First, notice that $pc(t)^{T}=pc(0)^{T}$, $t\in[0,1]$. Also, $p\nparallel x^{-1}_{n}(c(0))$ implies that $c(t)\neq x_{n}(p/pc(0)^{T})=x_{n}(p/pc(t)^{T})$, for $t\in[0,1)$. Let $p(t)=x^{-1}_{n}(c(t))$, so that
\begin{eqnarray*}
    (u\circ c)^{\prime}(t)=\nabla u(c(t))c^{\prime}(t)^{T}
    =\frac{\lambda_{n}(p(t))p(t)(x_{n}(p/pc(t)^{T})-x_{n}(p(t)))^{T}}{1-t}
\end{eqnarray*}
Therefore,
\begin{eqnarray*}
    (u\circ c)^{\prime}(t)>0 \iff p(t)x_{n}(p/pc(t)^{T})>p(t)x_{n}(p(t))^{T}=1.
\end{eqnarray*}
However, $c(t)\in B(p,pc(t)^{T})/\{x_{n}(p/pc(t)^{T})\}$, $t\in[0,1)$, implies $u(x_{n}(p/pc(t)^{T}))>u(c(t))$. Therefore, $x_{n}(p/pc(t)^{T})\notin B(p(t),1)$ and $p(t)x_{n}(p/pc(t)^{T})>1$, for $t\in[0,1)$.
\end{proof}

\begin{proof}[Proof of Lemma~{\upshape\ref{propTradeCompPrices}}]
If $\mathcal{T}(y_{1},\ldots,y_{H})\neq\emptyset$, then there are $q\in\mathbb{R}^{L}_{++}$ and $(\sigma_{1},\ldots,\sigma_{H})\in[0,1]$ such that $\sum^{H}_{h=1}\sigma_{h}c^{\prime}_{y_{h},q}(0)=0$ and $\sum^{H}_{h=1}\sigma_{h}\Vert c^{\prime}_{y_{h},q}(0)\Vert>0$. Let $\{\tilde{y}_{h}\}_{1\leq h\leq H}=\{y_{h}+\sigma_{h}(x_{nh}(q/qy_{h}^{T})-y^{h})\}_{1\leq h\leq H}$. Then,
\begin{eqnarray*}
\sum^{H}_{h=1}\tilde{y}_{h}=\sum^{H}_{h=1}y_{h}+\sum^{H}_{h=1}\sigma_{h}(x_{nh}(q/qy_{h}^{T})-y^{h})=\sum^{H}_{h=1}y_{h}+\sum^{H}_{h=1}\sigma_{h}c^{\prime}_{y_{h},q}(0)=\sum^{H}_{h=1}y_{h},
\end{eqnarray*}
and, due to Lemma \ref{lemmaLinearIncreasingUtility}, $u_{h}(\tilde{y}_{h})\geq u_{h}(y_{h})$, $1\leq h\leq H$, with at least one strict inequality since $\sum^{H}_{h=1}\sigma_{h}\Vert c^{\prime}_{y_{h},q}(0)\Vert>0$. Therefore, $\{y_{h}\}_{1\leq h\leq H}\notin\mathcal{P}$. Next, suppose $\mathcal{T}(y_{1},\ldots,y_{H})=\emptyset$. Given the allocation $\{y_{h}\}_{1\leq h\leq H}$, let $q\in\mathbb{R}^{L}_{++}$ be the corresponding equilibrium price \parencite{ArrowDebreu_1954}. Then, for $(\sigma_{1},\ldots,\sigma_{H})=(1,\ldots,1)$, we have
\begin{eqnarray*}
    \sum^{H}_{h=1}\sigma_{h}c^{\prime}_{y_{h},q}(0)=\sum^{H}_{h=1}x_{nh}(q/qy_{h}^{T})-\sum^{H}_{h=1}y^{h}=0,
\end{eqnarray*}
where the last equality comes from market clearing. But, then, $\mathcal{T}(y_{1},\ldots,y_{H})=\emptyset$ implies $\sum^{H}_{h=1}\sigma_{h}\Vert c^{\prime}_{y_{h},q}(0)\Vert=\sum^{H}_{h=1}\Vert c^{\prime}_{y_{h},q}(0)\Vert=0$, so that $x_{nh}(q/qy_{h}^{T})=y_{h}$, $1\leq h\leq H$. Due to Theorem \ref{theoParetoOptimalSet}, $\{y_{h}\}_{1\leq h\leq H}\in\mathcal{P}$.
\end{proof}

\begin{proof}[Proof of Proposition~{\upshape\ref{propTradeUnfold}}]
Let $q\in\mathcal{T}(y)$, $y\in\mathbb{R}^{LH}_{++}$, $p=(q,1)$, and $\sigma\in\mathcal{S}(y,p)$. First, notice that $\alpha \sigma\in\mathcal{S}(y,p)$, for all $\alpha\in(0,\Vert \sigma\Vert_{\infty}^{-1}]$, and let $\tilde{\sigma}=\sigma/\Vert \sigma\Vert_{\infty}$. Then, $\tilde{\sigma}_{\tilde{h}}=1$, for some $1\leq \tilde{h}\leq H$, and $ c_{y_{\tilde{h}},p}(\tilde{\sigma}_{\tilde{h}})=c_{y_{\tilde{h}},p}(1)=x_{n\tilde{h}}(p/py_{\tilde{h}}^{T})$ implies $\pi_{L-1}(f_{\tilde{h}}(c_{y_{\tilde{h}},p}(1)))=q$. Furthermore, Lemma \ref{lemmaLinearIncreasingUtility} implies that $\{\mathcal{F}(c_{y,p,\tilde{\sigma}}(t))\}_{0\leq t\leq 1}$ is a non-increasing net of arc-connected sets. Therefore, $q\in \pi_{L-1}(f_{\tilde{h}}(\mathcal{D}_{\tilde{h}}(y)))\subseteq \mathcal{F}(c_{y,p,\tilde{\sigma}}(1))\subseteq \mathcal{F}(c_{y,p,\sigma}(t))$, for $t\in[0,1]$. In particular, $q\in \mathcal{F}(c_{y,p,\sigma}(0))=\mathcal{F}(y)$, which implies that $\mathcal{T}(y)\subseteq \mathcal{F}(y)$.

Next, notice that, for $0\leq t<1$, we have
\begin{eqnarray*}
    x_{nh}\biggr(\frac{p}{py_{h}^{T}}\biggr)-y_{h}&=&\frac{1}{1-\sigma_{h}t}\biggr(x_{nh}\biggr(\frac{p}{pc_{y_h,p}(\sigma_{h}t)^{T}}\biggr)-c_{y_h,p}(\sigma_{h}t)\biggr)
\end{eqnarray*}
for $1\leq h\leq H$. Let $\sigma^{*}\in[0,1]^{H}$ be given by
\begin{eqnarray*}
    \sigma_{h}^{*}=\frac{\sigma_{h}/(1-\sigma_{h}t)}{\max_{1\leq h^{*}\leq H}\sigma_{h^{*}}/(1-\sigma_{h^{*}}t)},
\end{eqnarray*}
for $1\leq h\leq H$. Then
\begin{eqnarray*}
    \sum^{H}_{h=1}\sigma_{h}^{*}\biggr(x_{nh}\biggr(\frac{p}{pc_{y_h,p}(\sigma_{h}t)^{T}}\biggr)-c_{y_h,p}(\sigma_{h}t)\biggr)
    =\frac{\sum^{H}_{h=1}\sigma_{h}(x_{nh}(p/py_{h}^{T})-y_{h})}{\max_{1\leq h^{*}\leq H}\sigma_{h^{*}}/(1-\sigma_{h^{*}}t)}
    =0,
\end{eqnarray*}
and
\begin{eqnarray*}
    \sum^{H}_{h=1}\sigma_{h}^{*}\biggr\Vert x_{nh}\biggr(\frac{p}{pc_{y_h,p}(\sigma_{h}t)^{T}}\biggr)-c_{y_h,p}(\sigma_{h}t)\biggr\Vert=\frac{\sum^{H}_{h=1}\sigma_{h}(x_{nh}(p/py_{h}^{T})-y_{h})}{\max_{1\leq h^{*}\leq H}\sigma_{h^{*}}/(1-\sigma_{h^{*}}t)}>0.
\end{eqnarray*}
We conclude that $v^{*}\in\mathcal{S}(c_{y,p,\sigma}(t),p)$ and, therefore, $q\in\mathcal{T}(c_{y,p,\sigma}(t))$, $t\in[0,1)$.
\end{proof}

\begin{proof}[Proof of Lemma~{\upshape\ref{lemmaLargerDomainQ}}]
First, notice that $\mathcal{T}\subseteq \mathcal{G}$ implies $\int_{\mathcal{G}}f_{X}(x)dx\geq \int_{\mathcal{T}}f_{X}(x)dx>0$ and
\begin{eqnarray*}
        \mathbb{P}(X\in \mathcal{T})=\frac{\int_{\mathcal{G}\cap \mathcal{T}}f_{X}(x)dx}{\int_{\mathcal{G}}f_{X}(x)dx}=\frac{\int_{ \mathcal{T}}f_{X}(x)dx}{\int_{\mathcal{G}}f_{X}(x)dx}>0.
\end{eqnarray*}
Finally,
\begin{eqnarray*}
    \mathbb{P}(X\in A \mid X\in \mathcal{T})=\frac{\mathbb{P}(X\in \mathcal{T}\cap A)}{\mathbb{P}(X\in\mathcal{T})}=\frac{\int_{\mathcal{G}\cap \mathcal{T}\cap A}f_{X}(x)dx}{\int_{\mathcal{G}\cap\mathcal{T}}f_{X}(x)dx}=\frac{\int_{\mathcal{T}\cap A}f_{X}(x)dx}{\int_{\mathcal{T}}f_{X}(x)dx}.
\end{eqnarray*}
\end{proof}

\begin{proof}[Proof of Lemma~{\upshape\ref{lemmaUtilitiesPreferencesAttractive}}]
Let $u(\cdot)$ be sharp and $g(\cdot)$ be smooth and strictly increasing. Notice that the normalized Walrasian demand is kept unchanged after the composition $g\cdot u$ and, therefore, Definition \ref{defSharpUtility} implies that $g\circ u$ is sharp.

Next, suppose $u(\cdot)$ is attractive. Notice that, for $y\in\mathbb{R}^{L}_{++}$, $x^{-1}_{n}(y)(e_{j}^{T}e_{i}-e_{i}^{T}e_{j})x^{-1}_{n}(y)^{T}=0$, $1\leq i,j\leq L$, and
\begin{eqnarray*}
\mathbf{H}(g\circ u)(y)
=g^{\prime}(u(y))\mathbf{H}u(y)+g^{\prime\prime}(u(y))(\nabla u(y)y^{T})^{2}x^{-1}_{n}(y)^{T}x^{-1}_{n}(y).
\end{eqnarray*}
Therefore,
\begin{eqnarray*}
    x_{n}^{-1}(y)(e_{j}^{T}e_{i}-e_{i}^{T}e_{j})\mathbf{H}(g\circ u)(c)=g^{\prime}(u(y))x_{n}^{-1}(y)(e_{j}^{T}e_{i}-e_{i}^{T}e_{j})\mathbf{H}u(c),
\end{eqnarray*}
and Definition \ref{defAttractiveUtility}, $g^{\prime}(u(y))>0$ and the fact that the normalized Walrasian demand is kept unchanged imply that $g\circ u$ is attractive.
\end{proof}

\begin{proof}[Proof of Proposition~{\upshape\ref{propIsAttractive}}]
    Let $\eta=\frac{1}{1-\sigma}\geq1$ (the Cobb-Douglas case is given by $\eta=1$), $\theta(c)=(\alpha_{1}c_{1}^{\sigma-1},\ldots,\alpha_{L}c_{L}^{\sigma-1})\in\mathbb{R}^{L}_{++}$ and $\gamma(p)=(\alpha_{1}^{\eta}/p_{1}^{\eta},\ldots,\alpha_{L}^{\eta}/p_{L}^{\eta})$, for $c,p\in\mathbb{R}^{L}_{++}$. Then, $\nabla u(c)=u(c)^{\frac{1}{\eta-1}}\theta(c)$, $x_{n}^{-1}(c)=u(c)^{-\sigma}\theta(c)$, $x_{n}(p)=\gamma(p)/\sum^{L}_{i=1}\alpha_{i}^{\eta}p_{i}^{1-\eta}$, $x_{n}(p/pc^{T})=\gamma(p)pc^{T}/\sum^{L}_{i=1}\alpha_{i}^{\eta}p_{i}^{1-\eta}$ and
\begin{eqnarray*}
    \mathbf{H}u(c)=\frac{u(c)^{\frac{3-\eta}{\eta-1}}}{\eta-1}\theta(c)^{T}\theta(c)-\frac{u(c)^{\frac{1}{\eta-1}}}{\eta}\textrm{Diag}(\alpha_{1}c^{\sigma-2}_{1},\ldots,\alpha_{L}c^{\sigma-2}_{L}).
\end{eqnarray*}
Also, let $\Delta_{ij}(p,c):\mathbb{R}^{L\times L}_{++}\rightarrow\mathbb{R}$, $1\leq i,j\leq H$, be given by
\begin{eqnarray*}
    \Delta_{ij}(p,c)=\biggr(\frac{x^{-1}_{n}(c)e_{i}^{T}}{x^{-1}_{n}(c)e_{j}^{T}}-\frac{p_{i}}{p_{j}}\biggr)x_{n}^{-1}(c)(e_{j}^{T}e_{i}-e_{i}^{T}e_{j})\mathbf{H}u(c)\biggr(x_{n}\biggr(\frac{p}{pc^{T}}\biggr)-c\biggr)^{T}.
\end{eqnarray*}
Notice that $x_{n}^{-1}(c)(e_{j}^{T}e_{i}-e_{i}^{T}e_{j})\theta(c)^{T}=0$ and, therefore,
\begin{eqnarray*}
    \Delta_{ij}(p,c) =-\frac{\alpha_{i}\alpha_{j}c_{i}^{\sigma-1}c_{j}^{\sigma-1}u(c)^{\frac{1}{\eta-1}-\sigma}pc^{T}}{\sum^{L}_{i=1}\alpha_{i}^{\eta}p_{i}^{1-\eta}\eta}\biggr(\frac{\alpha_{i}c_{i}^{\sigma-1}}{\alpha_{j}c_{j}^{\sigma-1}}-\frac{p_{i}}{p_{j}}\biggr)\biggr(\frac{\alpha_{i}^{\eta}}{c_{i}p_{i}^{\eta}}-\frac{\alpha_{j}^{\eta}}{c_{j}p_{j}^{\eta}}\biggr).
\end{eqnarray*}
Notice that
\begin{eqnarray*}
    \frac{\alpha_{i}c_{i}^{\sigma-1}}{\alpha_{j}c_{j}^{\sigma-1}}\geq \frac{p_{i}}{p_{j}}\iff \biggr(\frac{p_{j}\alpha_{i}c_{i}^{\sigma-1}}{p_{i}\alpha_{j}c_{j}^{\sigma-1}}\biggr)^\eta\geq1 \iff \frac{\alpha_{i}^{\eta}}{c_{i}p_{i}^{\eta}}\geq\frac{\alpha_{j}^{\eta}}{c_{j}p_{j}^{\eta}}.
    \end{eqnarray*}
We conclude that $\Delta_{ij}(p,c)\leq0$, $p,c\in\mathbb{R}^{L}_{++}$, and, therefore, $u(\cdot)$ is attractive. 

Next, for $i\in\{1,\ldots,L\}$, suppose
\begin{eqnarray*}
    \frac{p_{i}}{p_{j}}>(<)\frac{x^{-1}_{n}(c)e^{T}_{i}}{x^{-1}_{n}(c)e^{T}_{j}}=\frac{\alpha_{i}c_{i}^{\sigma-1}}{\alpha_{j}c_{j}^{\sigma-1}}
\end{eqnarray*}
for $1\leq j\leq L$, $j\neq i$. Then,
\begin{eqnarray*}
    x_{ni}\biggr(\frac{p}{pc^{T}}\biggr)-c_{i}
    =\frac{pc^T/p_{i}}{\sum^{L}_{j=1}(\alpha_{j}p_{i}/\alpha_{i}p_{j})^{\eta}p_{j}/p_{i}}-c_{i}
    <(>)\frac{pc^T/p_{i}}{\sum^{L}_{j=1}c_{j}p_{j}/c_{i}p_{i}}-c_{i}
    =0,
\end{eqnarray*}
and $u(\cdot)$ is attractive.
\end{proof}

\begin{proof}[Proof of Proposition~{\upshape\ref{propAttractionOfMarginalRates}}]
Notice that $\delta_{ij}(\cdot)$ is a non-negative smooth function, with
\begin{eqnarray*}
    \delta^{\prime}_{ij}(t)&=&2\biggr(\frac{\nabla u(c(t))e_{i}^{T}}{\nabla u(c(t))e_{j}^{T}}-\frac{p_{i}}{p_{j}}\biggr)\frac{\nabla u(c(t))e_{j}^{T}e_{i}\mathbf{H}u(c(t))c^{\prime}(t)^{T}-\nabla u(c(t))e_{i}^{T}e_{j}\mathbf{H}(u(c))c^{\prime}(t)}{(\nabla u(c(t))e_{j}^{T})^{2}}\\
    &\overset{\textrm{sign}}{=}&\biggr(\frac{x^{-1}_{n}(c(t))e^{T}_{i}}{x^{-1}_{n}(c(t))e_{j}^{T}}-\frac{p_{i}}{p_{j}}\biggr)x^{-1}_{n}(c(t))(e_{j}^{T}e_{i}-e_{i}^{T}e_{j})\mathbf{H}u(c(t))\biggr(x_{n}\biggr(\frac{p}{pc(t)^{T}}\biggr)-c(t)\biggr),
\end{eqnarray*}
for $t\in[0,1]$. Since $u(\cdot)$ is attractive, $\delta^{\prime}_{ij}(t)\leq 0$, $t\in[0,1]$. Also, $c(1)=x_{n}(p/pc(1))$ implies $\delta_{ij}(1)=0$.
\end{proof}

\begin{proof}[Proof of Lemma~{\upshape\ref{lemmaTradePricesInBox}}]
Suppose $q\in\mathcal{T}(y_{1},\ldots,y_{H})\cap \mathcal{B}(y_{1},\ldots,y_{H})^{C}$. Then, for some $1\leq i\leq L$, $q_{i}>\max_{j\neq i}q_{j}M_{ij}(y_{1},\ldots,y_{H})$ or 
$q_{i}<\min_{j\neq i}q_{j} m_{ij}(y_{1},\ldots,y_{H})$. In the first case,
\begin{eqnarray*}
  q_{i}>\max_{j\neq i}q_{j}M_{ij}(y_{1},\ldots,y_{H})=\max_{1\leq \tilde{h}\leq H}\max_{j\neq i}q_{j}\frac{x^{-1}_{\tilde{h}n}(y_{h})e_{i}^{T}}{x^{-1}_{\tilde{h}n}(y_{h})e_{j}^{T}}>\max_{j\neq i}q_{j}\frac{x^{-1}_{nh}(y_{h})e^{T}_{i}}{x^{-1}_{nh}(y_{h})e^{T}_{j}}
\end{eqnarray*}
for all $1\leq h\leq H$. Definition \ref{defSharpUtility} implies
\begin{eqnarray}\label{eq1TradePricesInBox}
    x_{hni}(q/qy_{h}^{T})-y_{hi}<0,
\end{eqnarray}
for all $1\leq h\leq H$. Also, $q\in\mathcal{T}(y_{1},\ldots,y_{H})$ implies that there is $(\sigma_{1},\ldots,\sigma_{H})\in[0,1]^{H}$ with $\sum^{H}_{h=1}\sigma_{h}\Vert c^{\prime}_{y_{h},q}(0)\Vert>0$ and
\begin{eqnarray}\label{eq2TradePricesInBox}
    \sum^{H}_{h=1}\sigma_{h}c^{\prime}_{y_{h},q}(0)=\sum^{H}_{h=1}\sigma_{h}(x_{nh}(q/qy_{h}^{T})-y_{h})=0.
\end{eqnarray}
But (\ref{eq1TradePricesInBox}) and (\ref{eq2TradePricesInBox}) imply $\sigma_{h}=0$, $1\leq h\leq H$, absurd. The same reasoning can be applied if $q_{i}<\min_{j\neq i}q_{j} m_{ij}(y_{1},\ldots,y_{H})$. We conclude that $\mathcal{T}(y_{1},\ldots,y_{H})\cap \mathcal{B}(y_{1},\ldots,y_{H})^{C}=\emptyset$ and, therefore, $\mathcal{T}(y_{1},\ldots,y_{H})\subseteq \mathcal{B}(y_{1},\ldots,y_{H})$. 
\end{proof}

\begin{proof}[Proof of Lemma~{\upshape\ref{lemmaMonotoneMm}}]
First, notice that
\begin{eqnarray*}
    m_{ij}(c_{y,p,\sigma}(t))=\min_{1\leq h\leq H}\frac{x^{-1}_{nh}(c_{y_{h},p}(\sigma_{h}t))e_{i}^{T}}{x^{-1}_{nh}(c_{y_{h},p}(\sigma_{h}t))e_{j}^{T}}
\end{eqnarray*}
for $t\in[0,1]$. Since $u_{h}(\cdot)$, $1\leq h\leq H$, is attractive, Proposition \ref{propAttractionOfMarginalRates} implies that if $h\in\mathcal{H}^{-}_{ij}(y,p)$, then
\begin{eqnarray*}
    t\rightarrow \frac{x^{-1}_{nh}(c_{y_{h},p}(\sigma_{h}t))e_{i}^{T}}{x^{-1}_{nh}(c_{y_{h},p}(\sigma_{h}t))e_{j}^{T}}-\frac{p_{i}}{p_{j}}
\end{eqnarray*}
is non-decreasing and non-positive, for $t\in[0,1]$. Also, if $h\notin\mathcal{H}^{-}_{ij}(y,p)$, then this function is non-increasing and non-negative, for $t\in[0,1]$. Therefore,
\begin{eqnarray*}
    \min_{h\in \mathcal{H}_{ij}(y,p)}\frac{x^{-1}_{nh}(c_{y_{h},p}(\sigma_{h}t))e_{i}^{T}}{x^{-1}_{nh}(c_{y_{h},p}(\sigma_{h}t))e_{j}^{T}}\leq \frac{p_{i}}{p_{j}}<  \min_{h\in \{1,\ldots,H\}/\mathcal{H}_{ij}(y,p)}\frac{x^{-1}_{nh}(c_{y_{h},p}(\sigma_{h}t))e_{i}^{T}}{x^{-1}_{nh}(c_{y_{h},p}(\sigma_{h}t))e_{j}^{T}}.
\end{eqnarray*}
If $\mathcal{H}^{-}_{ij}(y,p)\neq \emptyset$, then
\begin{eqnarray*}
     m_{ij}(c_{y,p,\sigma}(t))=\min_{1\leq h\leq H}\frac{x^{-1}_{nh}(c_{y_{h},p}(\sigma_{h}t))e_{i}^{T}}{x^{-1}_{nh}(c_{y_{h},p}(\sigma_{h}t))e_{j}^{T}}=\min_{h\in\mathcal{H}_{ij}(y,p)}\frac{x^{-1}_{nh}(c_{y_{h},p}(\sigma_{h}t))e_{i}^{T}}{x^{-1}_{nh}(c_{y_{h},p}(\sigma_{h}t))e_{j}^{T}},
\end{eqnarray*}
and, since the minimum of non-decreasing functions is itself non-decreasing, $m_{ij}\circ c_{y,p,\sigma}$ is non-decreasing. If $\mathcal{H}^{-}_{ij}(y,p)=\emptyset$, then $m_{ij}\circ c_{y,p,\sigma}$ is the minimum of non-increasing functions and, therefore, is itself non-increasing. Finally, the result is obtained from the fact that $\mathcal{H}^{+}_{ij}=\mathcal{H}^{-}_{ji}$ and $M_{ij}(c_{y,p,\sigma}(t))=m_{ji}(c_{y,p,\sigma}(t))^{-1}$.
\end{proof}

\begin{proof}[Proof of Theorem~{\upshape\ref{theoNested}}]
Since $u_{h}(\cdot)$ is sharp, $1\leq h\leq H$, Proposition \ref{propTradeUnfold} and Lemma \ref{lemmaTradePricesInBox} imply that $q\in\mathcal{T}(c_{y,p,\sigma}(t))\subseteq \pi_{L-1}(\mathcal{B}(c_{y,p,\sigma}(t)))\cap \mathcal{F}(c_{y,p,\sigma}(t))$, for $0\leq t<1$, and $q\in \pi_{L-1}(\mathcal{B}(c_{y,p,\sigma}(1)))\cap \mathcal{F}(c_{y,p,\sigma}(1))$. Furthermore, 
\begin{eqnarray*}
    \mathcal{B}(c_{y,p,\sigma}(t))=\bigcap^{L}_{i=1}\bigcup_{j,k\neq i}\biggr(\mathcal{B}^{-}_{ij}(c_{y,p,\sigma}(t))\bigcap \mathcal{B}^{+}_{ik}(c_{y,p,\sigma}(t))\biggr)
\end{eqnarray*}
for $0\leq t\leq1$, and Lemma \ref{lemmaMonotoneMm} implies that all $(L-1)$-dimensional half-spaces are either non-increasing or non-decreasing. Finally, if $\mathcal{H}^{-}_{ij}(y,p)=\mathcal{H}^{+}_{ji}(y,p)\neq\emptyset$, $1\leq i,j\leq L$, $i\neq j$, then Lemma \ref{lemmaMonotoneMm} implies $m_{ij}\circ c_{y,p,\sigma}$ is non-decreasing and $M_{ij}\circ c_{y,p,\sigma}$ is non-increasing, $1\leq i,j\leq L$, $i\neq j$. Therefore, $\{\mathcal{B}^{-}_{ij}(c_{y,p,\sigma}(t))\}_{0\leq t\leq 1}$ and $\{\mathcal{B}^{+}_{ik}(c_{y,p,\sigma}(t))\}_{0\leq t\leq 1}$ are non-increasing, $1\leq i,j,k\leq L$, $j,k\neq i$, and so is $\{\mathcal{B}(c_{y,p,\sigma}(t))\}_{0\leq t\leq 1}$. 
\end{proof}

\begin{proof}[Proof of Theorem~{\upshape\ref{theoStochasticFirstWelfare}}]
Let $y=(y_{1},y_{2})\in\mathbb{R}^{4}_{++}$. Notice that Definition \ref{defStochasticNTProcess}, Proposition \ref{propAttractionOfMarginalRates} and Lemma \ref{lemmaTradePricesInBox} imply that
\begin{eqnarray*}
    \mathcal{T}(Y_{t}(\omega))=\biggr(\frac{\partial u_{1}(Y_{t1}(\omega))/\partial c_{1}}{\partial u_{1}(Y_{t1}(\omega))/\partial c_{2} },\frac{\partial u_{2}(Y_{t2}(\omega))/\partial c_{1}}{\partial u_{2}(Y_{t2}(\omega))/\partial c_{2} }\biggr)
\end{eqnarray*}
for $t\geq0$, with $\mathcal{T}(Y_{t+1}(\omega))\subseteq \mathcal{T}(Y_{t}(\omega))$, for almost every $\omega\in\Omega$. Let $\mathcal{I}_{r,s}=(s-r,s+r)$ and $\mathcal{A}_{r,s}=\{z\in \mathbb{R}^{4}_{++}\mid \mathcal{I}_{r,s}\subseteq\mathcal{T}(z), \sum^{2}_{i=1}z_{i}-y_{i}=0, u_{i}(z_{i})\geq u_{i}(y_{i}),i=1,2\}$, for $r,s\in\mathbb{Q}$. Also, let $\mathbb{Q}^{2}_{y}=\{(r,s)\in\mathbb{Q}^{2}\mid\mathcal{I}_{r,s} \subseteq\mathcal{T}(y)\}$. Notice that, for  $(r,s)\in \mathbb{Q}^{2}_{y}$ and $z\in\mathcal{A}_{r,s}$, we have $\mathcal{T}(z)\subseteq\mathcal{T}(y)$ and
\begin{eqnarray*}
    \mathbb{P}(Q_{t}\in\mathcal{I}_{r,s}\mid Y_{t-1}=z)=\frac{\int_{\mathcal{T}(z)\cap \mathcal{I}_{r,s}} f_{Q}(q)dq}{\int_{\mathcal{T}(z)} f_{Q}(q)dq}=\frac{\int_{\mathcal{I}_{r,s}} f_{Q}(q)dq}{\int_{\mathcal{T}(z)} f_{Q}(q)dq}\geq\frac{\int_{\mathcal{I}_{r,s}} f_{Q}(q)dq}{\int_{\mathcal{T}(y)} f_{Q}(q)dq}=\alpha_{r,s},
\end{eqnarray*}
with $\alpha_{r,s}>0$ since $f_{Q}(\cdot)$ is strictly positive in $\mathcal{I}_{r,s}\subseteq\mathcal{T}(y)$. Therefore,
\begin{eqnarray*}
    \mathbb{P}(\mathcal{I}_{r,s}\subseteq \mathcal{T}(Y_{t}))&=&\mathbb{P}(\mathcal{I}_{r,s}\subseteq \mathcal{T}(Y_{t})\mid \mathcal{I}_{r,s}\subseteq \mathcal{T}(Y_{t-1}))\mathbb{P}(\mathcal{I}_{r,s}\subseteq \mathcal{T}(Y_{t-1}))\\
    &=&\mathbb{P}(\mathcal{I}_{r,s}\subseteq \mathcal{T}(Y_{t})\mid Y_{t-1}\in \mathcal{A}_{r,s})\mathbb{P}(\mathcal{I}_{r,s}\subseteq \mathcal{T}(Y_{t-1}))\\
    &=&(1-\mathbb{P}(\mathcal{I}_{r,s}\nsubseteq \mathcal{T}(Y_{t})\mid Y_{t-1}\in \mathcal{A}_{r,s}))\mathbb{P}(\mathcal{I}_{r,s}\subseteq \mathcal{T}(Y_{t-1}))\\
    &\leq&(1-\mathbb{P}(Q_{t}\in\mathcal{I}_{r,s}\mid Y_{t-1}\in \mathcal{A}_{r,s}))\mathbb{P}(\mathcal{I}_{r,s}\subseteq \mathcal{T}(Y_{t-1}))\\
    &\leq&(1-\alpha_{r,s})\mathbb{P}(\mathcal{I}_{r,s}\subseteq \mathcal{T}(Y_{t-1})),
\end{eqnarray*}
for $t\geq1$, with the first inequality being derived from the fact that $\mathbb{P}(\Vert S_{t}\Vert_{\infty}=1)=1$, $t\geq1$, and, therefore, one of the endpoints of the interval $\mathcal{T}(Y_{t}(\omega))$ is given by $Q_{t}(\omega)$, $t\geq1$, for almost every $\omega\in\Omega$. Since $\mathbb{P}(I_{r,s}\subseteq \mathcal{T}(Y_{0}))=1$, an induction argument leads us to conclude that $\mathbb{P}(\mathcal{I}_{r,s}\subseteq \mathcal{T}(Y_{t}))\leq (1-\alpha_{r,s})^{t}$, $t\geq0$. 

Next, $\mathbb{P}(\mathcal{I}_{r,s}\subseteq \cap_{k\geq0}\mathcal{T}(Y_{k}))\leq\mathbb{P}(\mathcal{I}_{r,s}\subseteq \mathcal{T}(Y_{t}))$, $t\geq0$, implies $\mathbb{P}(\mathcal{I}_{r,s}\subseteq \cap_{k\geq0}\mathcal{T}(Y_{k}))\leq \lim_{t\rightarrow\infty}(1-\alpha_{r,s})^{t}=0$. Furthermore, for almost every $\omega\in\Omega$, $\lim_{t\rightarrow\infty}Y_{t}(\omega)\notin\mathcal{P}$ if, and only if, $\mathcal{I}_{r,s}\subseteq \mathcal{T}(Y_{t})$, $t\geq0$, for some $(r,s)\in\mathbb{Q}^{2}_{y}$. Therefore,
\begin{eqnarray*}
\mathbb{P}(\lim_{t\rightarrow\infty} Y_{t}\notin \mathcal{P})=\mathbb{P}(\cup_{(r,s)\in\mathbb{Q}^{2}_{y}}\{\mathcal{I}_{r,s}\subseteq\cap_{k\geq0}\mathcal{T}(Y_{k})\})\leq \sum_{(r,s)\in\mathbb{Q}^{2}_{y}}\mathbb{P}(\mathcal{I}_{r,s}\subseteq \cap_{k\geq0}\mathcal{T}(Y_{k}))=0.
\end{eqnarray*}
\end{proof}


\printbibliography
\end{document}